\newcommand{\nc}{\newcommand}
\newcommand{\rnc}{\renewcommand}
\setlist[itemize,1]{leftmargin=.4in}
\setlist[enumerate,1]{leftmargin=.4in,label=(\roman*)}
\setlist[description,1]{leftmargin=.4in,font=\normalfont\itshape}
\nc{\qq}{\qquad}
\nc{\qu}{\quad}
\newtheorem{thrm}{Theorem}[section]
\newtheorem{prop}[thrm]{Proposition}
\newtheorem{lemma}[thrm]{Lemma}
\newtheorem{conj}[thrm]{Conjecture}
\theoremstyle{definition}
\newtheorem{defn}[thrm]{Definition}
\theoremstyle{remark}
\newtheorem{rmk}[thrm]{Remark}
\nc{\rmkend}{\ensuremath{\diameter}}
\nc{\examend}{\ensuremath{\diameter}}
\nc{\defnend}{\ensuremath{\diameter}}
\numberwithin{equation}{section}
\nc{\eq}[1]{\begin{equation} #1 \end{equation}}
\nc{\eqrefs}[2]{\text{(\ref{#1}-\ref{#2})}}
\rnc\appendixname{}
\nc{\al}{\alpha}
\nc{\be}{\beta}
\nc{\eps}{\epsilon}
\nc{\veps}{\varepsilon}
\nc{\ga}{\gamma}
\nc{\Ga}{\Gamma}
\nc{\del}{\delta}
\nc{\Del}{\Delta}
\nc{\ze}{\zeta}
\nc{\ka}{\kappa}
\nc{\la}{\lambda}
\nc{\La}{\Lambda}
\nc{\si}{\sigma}
\nc{\Si}{\Sigma}
\nc{\vphi}{\varphi}
\nc{\om}{\omega}
\nc{\Om}{\Omega}
\nc{\A}{\mathbb{A}}
\nc{\C}{\mathbb{C}}
\nc{\F}{\mathbb{F}}
\nc{\K}{\mathbb{K}}
\nc{\N}{\mathbb{N}}
\nc{\Q}{\mathbb{Q}}
\nc{\R}{\mathbb{R}}
\nc{\Z}{\mathbb{Z}}
\nc{\mfsl}{\mathfrak{s}\mathfrak{l}}
\nc{\mfb}{\mathfrak{b}}
\nc{\mfg}{\mathfrak{g}}
\nc{\mfh}{\mathfrak{h}}
\nc{\mfk}{\mathfrak{k}}
\nc{\mfn}{\mathfrak{n}}
\nc{\ot}{\otimes}
\rnc{\t}{\mathrm{t}}
\nc{\e}{\mathrm{e}}
\nc{\id}{\mathrm{id}}
\nc{\Id}{\mathrm{Id}}
\let\Im\relax
\DeclareMathOperator{\End}{End}
\DeclareMathOperator{\GL}{GL}
\DeclareMathOperator{\Hom}{Hom}
\DeclareMathOperator{\Im}{Im}
\DeclareMathOperator{\Ker}{Ker}
\DeclareMathOperator*{\Tr}{Tr}
\nc{\wb}{\overline}
\nc{\wh}{\widehat}
\nc{\wt}{\widetilde}
\nc{\mc}{\mathcal}
\nc{\mf}{\mathfrak}
\nc{\red}{\color{red}}
\nc{\blu}{\color{blue}}
\nc{\br}{\color{brown}}
\nc{\grn}{\color{green!55!black}}
\nc{\gry}{\color{gray}}
\nc{\ben}{\begin{eqnarray*}}
\nc{\een}{\end{eqnarray*}}
\nc{\qhyp}[4]{ {}_{2} \phi_{1} \left( {\genfrac{}{}{0pt}{0}{#1}{#2}};{#3},{#4}\right)}
\nc{\txi}{\tilde \xi}
\nc{\adag}{{a^\dag}}
\nc{\balpha}{\bm \al}
\nc{\bbeta}{\bm \be}
\begin{document}

\title{A Q-operator for open spin chains I:  Baxter's TQ relation}

\author{Bart Vlaar}
\address{
Department of Mathematics, Heriot-Watt University, Edinburgh, EH14 4AS, UK;  The Maxwell Institute for Mathematical Sciences, Edinburgh}
\email{b.vlaar@hw.ac.uk, r.a.weston@hw.ac.uk}
\author{Robert Weston}

\begin{abstract} 
We construct a Q-operator for the open XXZ Heisenberg quantum spin chain with diagonal boundary conditions and give a rigorous derivation of Baxter's TQ relation.
Key roles in the theory are played by a particular infinite-dimensional solution of the reflection equation and by short exact sequences of intertwiners of the standard Borel subalgebras of $U_q(\wh{\mfsl}_2)$.
The resulting Bethe equations are the same as those arising from Sklyanin's algebraic Bethe ansatz.
\end{abstract}

\subjclass[2010]{Primary 81R10, 81R12, 81R50; Secondary 16T05, 16T25, 39B42}
	
\maketitle

\setcounter{tocdepth}{1} 

\tableofcontents


\newcommand{\ws}{\hspace*{3mm}}
\newcommand{\uq}{U_q(\widehat{\mathfrak{sl}_2})}
\newcommand{\uqn}{U_q(\widehat{\mathfrak{sl}_n})}
\newcommand{\uqg}{U_q(\mathfrak{g})}
\newcommand{\uqbpm}{U_q(\mathfrak{b}^\pm) }

\section{Introduction}

Baxter introduced the Q-operator in \cite{Ba72,Ba73a} as a device that enabled him to obtain Bethe equations for the eigenvalues of the 8-vertex model in the absence of a Bethe ansatz for the eigenvectors.
The general argument is as follows: let $\mc T(z)$ denote a transfer matrix which is diagonalizable, satisfies $[\mc T(y),\mc T(z)]=0$ for all $y,z \in \C$ and whose eigenvalues depend holomorphically on $z\in \C$.
Suppose that there exists a diagonalizable operator $\mc Q(z)$ which satisfies 
\eq{ \label{intro:commute}
[\mc T(y),\mc Q(z)]=[\mc Q(y),\mc Q(z)]=0
} 
for all $y,z \in \C$, and which has eigenvalues that are polynomial in $z$.
Suppose further that $\mc T(z)$ and $\mc Q(z)$ satisfy a relation of the form
\eq{
\mc T(z) \mc Q(z)= \alpha_+(z) \mc Q(q z) + \alpha_-(z) \mc Q(q^{-1}z),\label{eq:TQ1} 
}
for some meromorphic $\alpha_\pm$ and nonzero complex number $q$.
Now consider a simultaneous eigenvector of $\mc T(z)$ and $\mc Q(z)$ and denote the corresponding eigenvalues by $T(z)$ and $Q(z)$.
Then if $\{z_1, z_2,\cdots,z_M\}$ denote the roots of the polynomial $Q(z)$, holomorphicity of $T(z)$ requires that 
\eq{
T(z_i)Q(z_i)=0=\alpha_+(z_i) Q(q z_i) + \alpha_-(z) Q(q^{-1} z_i),\quad \hbox{for all} \ws i \in\{1,2, \cdots,M\}.\label{eq:BE1}
}
The right-hand equalities in \eqref{eq:BE1} constitute the Bethe equations for the set of roots.

This route to Bethe equations led to the solution of the 8-vertex model - clearly a major success.
However, the Q-operator subsequently fell out of favour for two reasons: firstly, for the 8-vertex model, Baxter himself made the Q-operator redundant by deploying the vertex-face correspondence in order to construct eigenvectors \cite{Ba73b}.
Secondly the method of constructing the Q-operator in \cite{Ba72} was ingenious, but also complicated and not obviously applicable to other models.

In order to generalize the 8-vertex Q-operator construction, what was lacking was an understanding of how the Q-operator fitted in with the Quantum Inverse Scattering Method/Quantum Groups picture of solvable lattice models.
This understanding was supplied more than twenty years later in the works of Bazhanov, Lukyanov and Zamolodchikov (BLZ) \cite{BLZ96,BLZ97,BLZ99}.
Note that $\mc T(z)$ can be constructed as the trace of the monodromy matrix over a finite-dimensional auxiliary space representation of the quantum group; the key idea of BLZ was that $\mc Q(z)$ can be constructed as a trace of the monodromy matrix over an infinite-dimensional representation of the (standard) Borel subalgebra of the quantum group.
The TQ relation \eqref{eq:TQ1} then arise due to fusion of the finite and infinite dimensional representations.
In more algebraic language, \eqref{eq:TQ1} arises directly from a short exact sequence of representations of the Borel subalgebra (for the XXZ model, this short exact sequence is the one given by Lemma \ref{lem:iotatau:intertwine} of the current paper).
The papers of BLZ are primarily concerned with conformal field theory.
Useful papers dealing with the same construction for lattice models and spin chains are \cite{AF97,Ko05}.
Alternative modern approaches to the Q-operator are given in, for instance, \cite{PG92,De99,Pr00}.

Most early work on the representation-theoretic construction of the Q-operator concerned the simplest case of the algebra $\uq$, or equivalently the XXZ quantum spin chain.
The reason for this limitation is the complexity of dealing with the required infinite-dimensional representations of the Borel subalgebras.
For $\uq$, these representations can be expressed simply using q-oscillators.
This q-oscillator approach has been generalized to $\uqn$ in \cite{BHKh02,Ko08}.
In order to go beyond $\uqn$ a more general construction of `asymptotic representations' of  $\uqg$ (the general quantum loop algebra of non-twisted type) was developed in \cite{HJ12}.
The central idea behind asymptotic representations is described succinctly in the paper \cite{BGKNR13} which deals with the $\uq$ case: a limit of the Verma module is taken with the associated weight going to infinity.
It is only possible to obtain a well-defined action of one of the Borel subalgebras of $\uq$ - not both - in this limit.
The Q-operators for $\uqg$ associated with these asymptotic representations were constructed in \cite{FH15} and their general TQ relations and Bethe equations were found.
The authors of \cite{FH15} thus proved a conjecture of Frenkel and Reshetikhin \cite{FR98} regarding the connection of Bethe equations to the q-characters of $\uqg$.
For recent developments in this direction, see \cite{HL16,FJMM17,He19}.\\

Everything we have discussed above concerns closed quantum spin chains, that is, those with periodic or quasi-periodic boundary condition in which the transfer matrix is given as a twisted trace of the monodromy matrix.
A general method for constructing integrable quantum spin chains with open quantum spin chains was developed in \cite{Sk88}.
Crucial in this method is the notion of the reflection equation (boundary Yang-Baxter equation), see \cite{Ch84}, solutions of which are known as K-matrices.
This leads to an integrable quantum spin chain described by a double-row transfer matrix which in our notation is given in \eqref{T:def}.
The choice of a solution of the reflection equation corresponds to the choice of a coideal subalgebra of the underlying quantum group \cite{KS92,DM03}.
Much work has been done to classify solutions to reflection equations and their associated coideal subalgebras (see \cite{RV16} and contained references).
In addition, there has recently been considerable progress in the construction of universal K-matrices analogous to the universal R-matrix for quantum groups of finite type, see \cite{BW13,BK16,Ko20}.

Sklyanin developed the algebraic Bethe ansatz for the example of the open XXZ model with diagonal boundary conditions in the founding paper \cite{Sk88} (the formulation is presented in the notation of the current paper in Appendix C).
Again, there has been a lot of effort since then to produce Bethe equations for large classes of other open integrable quantum spin chains.
In analogy with the situation for closed chains, there are two main approaches: the first produces TQ relations on the level of eigenvalues by fusing transfer matrices associated with finite-dimensional auxiliary spaces and then taking a formal limit with respect to the dimension of the auxiliary space \cite{YNZh06}; the second, and more widely used, is the analytic Bethe ansatz method \cite{AMN95}.
Connections to separation of variables techniques were made in \cite{DKM03, KMN14}.
Finally, there has been work on open Q-operators in the context of Markov matrices for asymmetric simple exclusion processes, see \cite{LP14}.

There is a some existing work on the explicit construction of open Q-operators.
Frassek and Sz\'ecs\'enyi \cite{FSz15} considered the XXX case with open boundary conditions; our work is similar in spirit to theirs however we have placed more emphasis on the representation-theoretic role of various linear maps.
Another closely related paper is that of Baseilhac and Tsuboi \cite{BTs18} who consider the same XXZ model with the same boundary conditions, but take the asymptotic approach: namely they construct monodromy matrices and K-matrices associated with an infinite-dimensional auxiliary space by taking the asymptotic limit of the Verma module.
In the paper \cite{BTs18} explicit formulas for families of Q-operator are given, although Baxter's relation \eqref{eq:TQ1} and the Bethe ansatz equations are not derived.\\

In this paper we give an algebraic construction of the Q-operator for the open XXZ model with diagonal boundary conditions. 
The idea is straightforward: we construct the Q-operator as a trace of the double-row monodromy matrix over infinite-dimensional representations of the Borel subalgebra as for the closed spin chain.
The execution is complicated by two issues: the first is that we must construct infinite-dimensional K-matrix solutions of the reflection equation.
In fact this is comparatively simple in the case of diagonal boundary conditions and the solution is given by Proposition \ref{prop:KW:def}.
The second issue is that there are now five algebras in play: the quantum group $\uq$; the two Borel subalgebras $\uqbpm$ (each associated with one of the rows of the double-row transfer matrix); and a coideal subalgebra for each boundary.
The convergent open Q-operator \eqref{Q:def} constructed in this paper does not need the regularizing twist that is required in the closed case.
The TQ relation is proven in Theorem \eqref{thm:QT:rel} using the short exact sequences \eqref{SES:plus} and \eqref{SES:min} for the two different Borel subalgebras along with the compatibility of the associated homomorphisms with the boundary fusion relations given by Lemma \ref{KW:iotatau:lem}. 
For the use of short exact sequences in deriving Baxter's TQ relation, also see \cite{AF97,RW02,Ko03,Ko05,JMS13}.

The finite-dimensional counterpart of these relations is given in \cite[Eq.~(4.7)]{RSV16} which in itself goes back to the original notion of K-matrix fusion in \cite{KS92,MN92}.
The boundary fusion relations, which we prove here using linear algebra, are the only key relations in this paper for which we do not have a full representation-theoretic understanding.
In addition we prove the important commutativity statement $[\mc T(y),\mc Q(z)]=0$ in Theorem \ref{thm:commute}.

In Section \ref{sec:Betheeqns} we show that the Q-operator satisfies a simple formula given in Theorem \ref{thm:Q:crossing}, known as \emph{crossing symmetry}, and use it, combined with the TQ relation, to recover the same Bethe equations as in the algebraic Bethe ansatz approach of Sklyanin \cite{Sk88} as reproduced in our Appendix \ref{app:ABA}.
The crossing symmetry property and hence the derivation of the Bethe equations rely on Conjecture \ref{conj:Q}, which states that the family $\{ \mc Q(y) \}$ is commutative and the matrix entries of the Q-operator are polynomial in $z$;  we cannot yet prove this conjecture in generality, but we do prove polynomiality for all diagonal entries in Appendix \ref{sec:Q:diagonalentries} and verify the full conjecture for two lattice sites in Appendix \ref{sec:lowN}.
Finally, for the purpose of comparison, we provide the analogous derivation of the TQ relation in the closed (twisted-periodic) case in Appendix \ref{app:closed}.

\subsection*{Acknowledgements}

This research is supported by EPSRC grant EP/R009465/1.
The authors would like to thank Pascal Baseilhac, Alexander Cooper, Anastasia Doikou, Heiko Gimperlein, David Hernandez, Christian Korff, Vidas Regelskis, Zengo Tsuboi and Paul Zinn-Justin for useful discussions and comments.


\section{Level-0 representation theory of $U_q(\widehat{\mathfrak{sl}}_2)$}

\subsection{Quantum affine $\mfsl_2$}

We denote by $\mfg$ the (derived) affine Lie algebra $\wh\mfsl_2$ and by $\si$ the associated nontrivial diagram automorphism, i.e.~the permutation of the set $\{0,1\}$.
For $p \in \C^\times := \C \backslash \{ 0 \}$ and elements $x,y$ of any algebra\footnote{
By algebra (except in the case of a Lie algebra) we always mean a unital associative algebra.
All (Lie) algebras in this paper are defined over $\C$.} 
we denote $[x,y]_p = x y - p y x$.
In this section, and in particular in the following definition, we allow $q \in \C \backslash \{ -1,0,1\}$ (later on we will restrict $q$).
The (derived) affine quantum group $U_q(\mfg)$ is the algebra \cite{Dr85,Dr86,Ji86} with generators $e_i, f_i, k_i^{\pm 1}$ ($i\in \{0,1\}$) and relations
\begin{gather}
k_i e_i = q^2 e_i k_i, \qq k_i f_i = q^{-2} f_i k_i, \qq [e_i,f_i]_1 = \frac{k_i-k_i^{-1}}{q-q^{-1}},\label{Uq:relations1} \\
k_i e_{\si(i)} = q^{-2} e_{\si(i)} k_i, \qq k_i f_{\si(i)} = q^2 f_{\si(i)} k_i, \qq [k_i,k_{\si(i)}]_1=[e_i,f_{\si(i)}]_1 = 0, \label{Uq:relations2} \\
[e_i,[e_i,[e_i,e_{\si(i)}]_{q^2}]_1]_{q^{-2}} = [f_i,[f_i,[f_i,f_{\si(i)}]_{q^2}]_1]_{q^{-2}} = 0, \label{Uq:relations3}
\end{gather}
for all $i \in \{0,1\}$.
The following assignments define a coproduct on $U_q(\mfg)$:
\eq{ \label{Uq:bialgebra}
\begin{aligned}
\Del(e_i) &= e_i \ot 1 + k_i \ot e_i \qu & \Del(f_i) &= f_i \ot k_i^{-1} + 1 \ot f_i \qu & \Del(k_i^{\pm 1}) &= k_i^{\pm 1} \ot k_i^{\pm 1} 
\end{aligned}
}
which restricts to a coproduct on the following important subalgebras:
\[ 
U_q(\mfh) := \langle k_0^{\pm 1}, k_1^{\pm 1} \rangle, \qq U_q(\mfb^+) := \langle e_0,e_1,k_0^{\pm 1},k_1^{\pm 1} \rangle, \qq U_q(\mfb^-) := \langle f_0, f_1, k_0^{\pm 1}, k_1^{\pm 1} \rangle.
\]
Equivalently, we can define $U_q(\mfb^+)$ and $U_q(\mfb^-)$ as the algebras generated by $e_0,e_1,k_0^{\pm 1},k_1^{\pm 1}$ and $f_0,f_1,k_0^{\pm 1},k_1^{\pm 1}$ respectively, subject to only those relations among \eqrefs{Uq:relations1}{Uq:relations3} which only contain those symbols; the respective identifications of generators extend to algebra embeddings $U_q(\mfb^\pm) \to U_q(\mfg)$.
This is important since we will be looking at representations of $U_q(\mfb^+)$ and $U_q(\mfb^-)$.

\subsection{Level-0 representations of quantum affine $\mfsl_2$ and its Borel subalgebras}

Note that $k_0k_1$ is central in $U_q(\mfg)$, and  for any representation $\rho$ of $U_q(\mfg)$, $U_q(\mfb^+)$ or $U_q(\mfb^-)$, the \emph{level} $\la$ is the unique complex number such that $\rho(k_0k_1)$ acts on the image of $\rho$ as the scalar $q^\la$.
Let $I$ be the two-sided ideal of $U_q(\mfg)$ generated by $k_0k_1-1$.
If a representation has level 0 then it factors through a representation of the quotient $U_q(\mfg)/I$, which is isomorphic to the \emph{quantum loop algebra} of $\mfsl_2$.
In particular, all finite-dimensional representations of $U_q(\mfg)$ have level 0.
We will restrict our attention to level-0 representations from now on; on the other hand we widen the scope somewhat by also including representations of the subalgebras $U_q(\mfb^\pm)$ where $k_0k_1$ acts as the identity.
For a more comprehensive discussion about level-0 representations of $U_q(\mfg)$ as evaluation modules constructed from Verma modules for $U_q(\mfsl_2)$ see for example \cite[Section 3]{BGKNR13} and \cite[Section 2]{RSV16}.

All vector spaces under consideration and their tensor products are defined over $\C$.
Let $U,U'$ be any two vector spaces.
We shall write $\Hom(U,U')$ to mean the set of $\C$-linear maps from $U$ to $U'$ and $\End(U) = \Hom(U,U)$.
We denote by $P^{U,U'} \in \Hom(U \ot U',U' \ot U)$ the map that na\"{i}vely swaps tensor factors: $P^{U,U'}(u \ot u') = u' \ot u$ for all $u \in U$ and $u' \in U'$.
If there is no cause for confusion, we will simply write $P$.
In this paper we will focus on two vector spaces in particular and tensor products of them.

Let $z \in \C^\times$.
Also, let $V = \C^2$ and choose any ordered basis $(v^0,v^1)$ of $V$.
Expressing linear operators on $V$ as $2 \times 2$ matrices with respect to this basis, the following assignments extend to an algebra homomorphism $\pi_z: U_q(\mfg) \to \End(V)$ (a level-0 representation of $U_q(\mfg)$ on $V$).
\begin{align*}
\pi_z(e_0) &= \begin{pmatrix} 0 & 0 \\ z & 0 \end{pmatrix} & 
\pi_z(f_0) &= \begin{pmatrix} 0 & z^{-1} \\ 0 & 0 \end{pmatrix} & 
\pi_z(k_0) &= \begin{pmatrix} q^{-1} & 0 \\ 0 & q \end{pmatrix} \\
\pi_z(e_1) &= \begin{pmatrix} 0 & z \\ 0 & 0 \end{pmatrix} & 
\pi_z(f_1) &= \begin{pmatrix} 0 & 0 \\ z^{-1} & 0 \end{pmatrix} & 
\pi_z(k_1) &= \begin{pmatrix} q & 0 \\ 0 & q^{-1} \end{pmatrix}.
\end{align*}

Consider the infinite-dimensional vector space
\eq{ \label{W:def}
W = \bigoplus_{j =0}^\infty \C w^j
}
and let $\mc F$ denote the commutative algebra of functions$: \Z_{\ge 0} \to \C$.
We define certain linear maps on $W$ as follows:
\eq{
\label{oscq:rep} a(w^{j+1}) = w^j, \qq a^\dag(w^j) = (1-q^{2(j+1)}) w^{j+1}, \qq f(D)(w^j) = f(j) w^j 
}
for all $j \in \Z_{\ge 0}$ and $f \in \mc F$.
The linear maps $a,a^\dag$ and $f(D)$ for all $f \in \mc F$ satisfy the defining relations of (an extension of) the q-oscillator algebra, to wit
\eq{ \label{oscq:relations}
f(D) a = a f(D-1), \qq f(D) a^\dag = a^\dag f(D+1), \qq a^\dag a = 1-q^{2D}, \qq a a^\dag = 1-q^{2(D+1)} .
}
We denote the subalgebra of $\End(W)$ generated by $a,a^\dag$ and the commutative algebra $\mc F(D) := \{ f(D) \, | \, f \in \mc F \}$ by ${\rm osc}_q$.
As a consequence of \eqref{oscq:relations} we have the linear decomposition
\eq{ \label{oscq:decomposition}
{\rm osc}_q =  {\mc F}(D) \oplus \bigoplus_{m \in \Z_{\ge 1}} \Big( (a^\dag)^m {\mc F}(D) \oplus {\mc F}(D) a^m  \Big).
}
For more details on the q-oscillator algebra, the reader may consult e.g.~\cite{Ku91} and \cite[Sec.~5]{KSch12}.

\begin{prop} \label{prop:rhoplus:rep}
Let $r,z \in \C^\times$.
The following assignments define algebra homomorphisms $\rho^+_{z,r}: U_q(\mfb^+) \to \End(W)$ and $\rho^-_{z,r}: U_q(\mfb^-) \to \End(W)$:
\eq{ \label{rhoplus:def}
\begin{aligned}
\rho^+_{z,r}(e_0) &= \frac{q^{-1}z}{q-q^{-1}} a^\dag, \qq & 
\rho^+_{z,r}(k_0) &= r q^{2D}, \\
\rho^+_{z,r}(e_1) &= \frac{qz}{q-q^{-1}} a, &
\rho^+_{z,r}(k_1) &= r^{-1} q^{-2D}.
\end{aligned}
}
and
\eq{ \label{rhominus:def}
\begin{aligned}
\rho^-_{z,r}(f_0) &= \frac{qz^{-1}}{q-q^{-1}} a, \qq & 
\rho^-_{z,r}(k_0) &= r^{-1} q^{2D}, \\
\rho^-_{z,r}(f_1) &= \frac{q^{-1}z^{-1}}{q-q^{-1}} a^\dag, & 
\rho^-_{z,r}(k_1) &= r q^{-2D}.
\end{aligned}
}
\end{prop}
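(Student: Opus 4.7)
The plan is to verify, for each of $\rho^+_{z,r}$ and $\rho^-_{z,r}$, that the defining relations of $U_q(\mfb^+)$ and $U_q(\mfb^-)$ respectively are preserved. Each Borel subalgebra is presented on its stated generators subject precisely to those relations in \eqrefs{Uq:relations1}{Uq:relations3} not involving any excluded symbols, so the list of identities to check is: the two diagonal commutations $k_i e_i = q^2 e_i k_i$ (resp.~with $f_i$), the two mixed commutations $k_i e_{\si(i)} = q^{-2} e_{\si(i)} k_i$ (resp.~with $f_i$), commutativity of $k_0$ and $k_1$, and the two q-Serre relations.

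For the Cartan-type quadratic relations I would specialize the oscillator identities $f(D) a = a f(D-1)$ and $f(D) a^\dag = a^\dag f(D+1)$ from \eqref{oscq:relations} to $f(D) = q^{\pm 2D}$, which gives
\[
q^{2D} a^\dag = q^2 a^\dag q^{2D}, \qquad q^{2D} a = q^{-2} a q^{2D}.
\]
Combined with the scalar prefactors in \eqref{rhoplus:def} and \eqref{rhominus:def}, these reproduce all four required $q^{\pm 2}$-commutations with the correct signs. The commutativity $[\rho^\pm_{z,r}(k_0), \rho^\pm_{z,r}(k_1)] = 0$ is automatic since both operators lie in $\mc F(D)$.

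For the q-Serre relations, I would compute each innermost bracket explicitly. For $i = 0$, using $a^\dag a = 1 - q^{2D}$ and $a a^\dag = 1 - q^{2(D+1)}$, one finds
\[
[\rho^+_{z,r}(e_0), \rho^+_{z,r}(e_1)]_{q^2} = \frac{z^2(1-q^2)}{(q-q^{-1})^2}\bigl(1 - (1+q^2) q^{2D}\bigr) \in \rho^+_{z,r}(\mc F(D));
\]
the middle $[\,\cdot\,,\,\cdot\,]_1$ bracket, evaluated via the finite-difference identity $a^\dag f(D) - f(D) a^\dag = (f(D-1) - f(D)) a^\dag$, turns this into a scalar multiple of $a^\dag q^{2D}$, and the outer $[\,\cdot\,,\,\cdot\,]_{q^{-2}}$ bracket then vanishes because $a^\dag q^{2D} a^\dag = q^2 (a^\dag)^2 q^{2D}$ exactly matches the $q^{-2}$ factor. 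For $i=1$, a direct calculation collapses the innermost bracket to the scalar $[\rho^+_{z,r}(e_1), \rho^+_{z,r}(e_0)]_{q^2} = z^2(1-q^2)/(q-q^{-1})^2$, so the remaining two brackets vanish trivially.

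The argument for $\rho^-_{z,r}$ proceeds along the same lines, with the roles of $a$ and $a^\dag$ exchanged and the sign of the exponent in $q^{\pm 2 D}$ flipped; equivalently, the two calculations are related by the substitution $z \leftrightarrow z^{-1}$, $r \leftrightarrow r^{-1}$, $e_i \leftrightarrow f_{\si(i)}$. I expect the main obstacle to be the $i=0$ q-Serre telescoping for $\rho^+_{z,r}$, which is entirely mechanical but requires careful bookkeeping of the oscillator commutations; no conceptual difficulty is anticipated.
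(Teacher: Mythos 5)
Your proposal is correct and follows essentially the same route as the paper: a direct verification of the $U_q(\mfb^\pm)$ relations using the q-oscillator identities \eqref{oscq:relations}, with only the q-Serre relations requiring real work. The one difference is cosmetic: for $i=0$ the paper exploits the $q^2\leftrightarrow q^{-2}$ symmetry of the Serre expansion to observe that $\rho^+_{z,r}([e_0,e_1]_{q^{-2}})$ is already a scalar, so the middle and outer brackets vanish immediately, whereas you compute $[\rho^+_{z,r}(e_0),\rho^+_{z,r}(e_1)]_{q^2}$ (a non-scalar element of $\mc F(D)$) and then push through all three bracket layers — more bookkeeping, same conclusion.
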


\begin{rmk} \mbox{}
\begin{enumerate}
\item These two families of representations depend on an additional parameter $r$ which does not appear in representations of $U_q(\wh \mfsl_2)$.
We need this parameter in order to have a short exact sequence of $U_q(\mfb^+)$-intertwiners, see Lemma \ref{lem:iotatau:intertwine}.
The Q-operator does not depend on $r$.
\item In order to define the representations $\rho^\pm_{z,r}$ one does not need to work with the whole of $\mc F(D)$, but only by the subalgebra generated by $q^{2D}$ and $q^{-2D}$.
However for the various intertwiners we will need a larger algebra; namely for infinite-dimensional solutions of the Yang-Baxter equation (L-operators) we need the subalgebra of $\mc F(D)$ generated by $q^{D}$ and $q^{-D}$, see \eqref{L:def}.
For infinite-dimensional solutions of the reflection equation it is convenient to allow all of $\mc F(D)$, see \eqref{KW:def}.
\hfill \rmkend
\end{enumerate}
\end{rmk}

\begin{proof}[Proof of Proposition \ref{prop:rhoplus:rep}]
We give the proof for $\rho^+_{z,r}$; the proof for $\rho^-_{z,r}$ is analogous.
We only need to verify the relations 
\[
k_i e_i = q^2 e_i k_i, \qq k_i e_{\si(i)} = q^{-2} e_{\si(i)} k_i, \qq [k_i,k_{\si(i)}]=0, \qq 
[e_i,[e_i,[e_i,e_{\si(i)}]_{q^2}]_1]_{q^{-2}} =0
\]
for $i \in \{ 0 ,1\}$.
Only the last is nontrivial; we have
\[
q  \rho^+_{z,r}([e_0,e_1]_{q^{-2}}) = - q^{-1} \rho^+_{z,r}([e_1,e_0]_{q^2}) = \frac{z^2}{q-q^{-1}} 
\]
and hence
\[
\rho^+_{z,r}([e_0,[e_0,e_1]_{q^{-2}}]_1) = \rho^+_{z,r}([e_1,[e_1,e_0]_{q^2}]_1) = 0,
\]
from which the result follows.
\end{proof}

It can be checked that the algebra maps $\rho^{\pm}_{z,r}$ with domain $U_q(\mfb^\pm)$ cannot be extended to algebra maps of $U_q(\mfg)$.

\subsection{R-matrices and their infinite-dimensional counterparts}

Let $\mc R$ be the parameter-independent universal R-matrix of $U_q(\mfg)$, see e.g.~\cite{KhT92,BGKNR10}.
It satisfies
\begin{align}
\label{uniR:1} \mc R \Del(u) &= \Del^{\rm op}(u) \mc R \qq \text{for all } u \in U_q(\mfg) \\
\label{uniR:2} (\Del \ot \id)(\mc R) &= \mc R_{13} \mc R_{23} \\
\label{uniR:3} (\id \ot \Del)(\mc R) &= \mc R_{13} \mc R_{12} 
\end{align}
and as a consequence
\eq{
\label{uniR:YBE} \mc R_{12} \mc R_{13} \mc R_{23} = \mc R_{23} \mc R_{13} \mc R_{12}.
}
From the quantum double construction it follows that $\mc R$ lies in a completion of $U_q(\mfb^+) \ot U_q(\mfb^-)$; in particular the following linear maps are well-defined:
\begin{gather*}
(\pi_{z_1} \ot \pi_{z_2})(\mc R) \in \End(V \ot V) \\
(\rho^+_{z_1,r} \ot \pi_{z_2})(\mc R) \in \End(W \ot V), \qq (\pi_{z_1} \ot \rho^-_{z_2,r^{-1}})(\mc R) \in \End(V \ot W).
\end{gather*}
These operators depend rationally on the quotient $z_1/z_2$, see e.g.~\cite[Lecture 9]{EFK98}.
We denote suitable scalar multiples of these operators by $R(z_1/z_2)$, $L(z_1/z_2,r)$ and $L^-(z_1/z_2,r)$, respectively.
In order to write down explicit expressions for these linear operators it is customary to apply the appropriate representation to \eqref{uniR:1}, restricting to a subalgebra where appropriate, and solve the resulting linear equations, which are
\begin{align*} 
R(z_1/z_2) (\pi_{z_1} \ot \pi_{z_2})(\Del(u)) &= (\pi_{z_1} \ot \pi_{z_2})(\Del^{\rm op}(u)) R(z_1/z_2) && \text{for all } u \in U_q(\mfg), \\
L^+(z_1/z_2,r) (\rho^+_{z_1,r} \ot \pi_{z_2})(\Del(u)) &= (\rho^+_{z_1,r} \ot \pi_{z_2})(\Del^{\rm op}(u)) L^+(z_1/z_2,r) && \text{for all } u \in U_q(\mfb^+), \\
L^-(z_1/z_2,r) (\pi_{z_1} \ot \rho^-_{z_2,r^{-1}})(\Del(u)) &= (\pi_{z_1} \ot \rho^-_{z_2,r^{-1}})(\Del^{\rm op}(u))  L^-(z_1/z_2,r) && \text{for all } u \in U_q(\mfb^-).
\end{align*}
This leads us to the following solutions (unique up to overall scalar multiples):
\eq{
\label{R:def}
R(z) = \begin{pmatrix} 1 - q^2 z^2 & 0 & 0 & 0 \\ 0 & q(1-z^2) & (1-q^2) z & 0 \\ 0 & (1-q^2) z & q(1-z^2) & 0 \\ 0 & 0 & 0 & 1 - q^2 z^2 \end{pmatrix} \in \End(V \ot V)
}
and 
\eq{ 
L^+(z,r) = L(z,r) \in \End(W \ot V), \qq L^-(z,r) = P L(z,r) P \in \End(V \ot W)
}
where
\eq{
\label{L:def} 
L(z,r) = \begin{pmatrix} 1 & -q^{-1} z a^\dag \\ - q z a & 1-q^{2(D+1)} z^2 \end{pmatrix} \begin{pmatrix} q^D r & 0 \\ 0 & q^{-D} \end{pmatrix} .
}
Note that $L(z,r)$ is invertible if $z^2 \ne 1$, with the inverse given by
\eq{
\label{L:inv} 
L(z,r)^{-1} = \frac{1}{1-z^2} \begin{pmatrix} q^{-D} r^{-1} & 0 \\ 0 & q^D \end{pmatrix} \begin{pmatrix} 1-q^{2D} z^2 & q^{-1} z a^\dag \\ q z a & 1 \end{pmatrix}
}

As a consequence of \eqref{uniR:YBE} we obtain the following \emph{Yang-Baxter equations}:
\begin{align}
\label{R:YBE} R_{12}(\tfrac{z_1}{z_2}) R_{13}(\tfrac{z_1}{z_3}) R_{23}(\tfrac{z_2}{z_3}) &= R_{23}(\tfrac{z_2}{z_3}) R_{13}(\tfrac{z_1}{z_3}) R_{12}(\tfrac{z_1}{z_2}) && \in \End(V \ot V \ot V) \\
\label{Lplus:YBE} L_{12}(\tfrac{z_1}{z_2},r) L_{13}(\tfrac{z_1}{z_3},r) R_{23}(\tfrac{z_2}{z_3}) &= R_{23}(\tfrac{z_2}{z_3}) L_{13}(\tfrac{z_1}{z_3},r) L_{12}(\tfrac{z_1}{z_2},r) && \in \End(W \ot V \ot V) \\
\label{Lmin:YBE} R_{12}(\tfrac{z_1}{z_2}) L_{13}(\tfrac{z_1}{z_3},r) L_{23}(\tfrac{z_2}{z_3},r) &= L_{23}(\tfrac{z_2}{z_3},r) L_{13}(\tfrac{z_1}{z_3},r) R_{12}(\tfrac{z_1}{z_2}) && \in \End(V \ot V \ot W)
\end{align}
which are understood as equations for operator-valued rational functions in $z_1,z_2,z_3$.
Here we have introduced the usual subscript notation.
Namely, for $X(z) \in \End(V_m \ot V_n)$ depending meromorphically on $z$, $N \in \Z_{>0}$ and $m,n \in \{1,\ldots,N\}$ with $m \ne n$, the notation $X_{mn}(z)$ for an element of $\End(V_1 \ot \cdots \ot V_N)$ with the first and second tensor factors of $X(z)$ act nontrivially on the $m$-th and $n$-th tensor factors of $V_1 \ot \cdots \ot V_N$, respectively
(here $N=3$).

We highlight the so-called crossing property of $R(z)$ and $L^\pm(z,r)$.
For convenience we use the abbreviations
\[
\wt R(z) := ((R(z)^{t_1})^{-1})^{t_1} = ((R(z)^{t_2})^{-1})^{t_2}, \qq \qq \wt L(z,r) := ((L(z,r)^{t_2})^{-1})^{t_2}
\]
in terms of partial transpositions $\cdot^{\t_{1,2}}$ with respect to the first and second factors in the tensor products $V \ot V$ and $W \ot V$ respectively.
Since $R(z)^{\t_1}$, $R(z)^{\t_2}$, $L(z,r)^{\t_2}$ depend polynomially on $z$ and are invertible for $z=0$ it follows that $\wt R(z)$ and $\wt L(z,r)$ are well-defined for all but finitely many values of $z$.
As a consequence of \cite[Thm.~4.1]{FR92}, $\wt R(z)^{-1}$ can be expressed in terms of $R(q^2 z)$.
In particular, in our case due to our choice of normalization of $R(z)$ we have 
\eq{ \label{R:cross.unit.}
\wt R(z)^{-1} = \frac{(1 - z^2)(1 - q^4 z^2)}{(1 - q^2 z^2)(1 - q^6 z^2)} R(q^2 z).
}
The operator $L(z,r)$ enjoys a similar identity; in fact, the proof of \cite[Thm.~4.1]{FR92}, which is stated only for tensor products of finite-dimensional representations of $U_q(\mfg)$, applies in this setting as well.
Indeed, straightforward computations give
\eq{ \label{L:T2}
\begin{aligned}
L(z,r)^{\t_2} &= \begin{pmatrix} q^{D} r & 0 \\ 0 & q^{-D} \end{pmatrix} \begin{pmatrix} 1 & -q^2 z a \\ -z a^\dag & 1-q^{2(D+1)}z^2 \end{pmatrix} \\
(L(z,r)^{\t_2})^{-1} &= \frac{1}{1-q^2z^2} \begin{pmatrix} 1-q^{2(D+2)}z^2 & q^2 z a \\ z a^\dag & 1 \end{pmatrix} \begin{pmatrix} q^{-D} r^{-1} & 0 \\ 0 & q^{D} \end{pmatrix} .
\end{aligned} 
}
Combining this with \eqref{L:inv} we readily obtain
\eq{ \label{L:cross.unit.}
\wt L(z,r)^{-1} = \frac{1 - q^2 z^2}{1 - q^4 z^2} L(q^2 z,r).
}

\subsection{K-matrices and their infinite-dimensional counterparts}

Let $\xi \in \C^\times$.
The matrix
\eq{ \label{KV:def}
K^V(z) := \begin{pmatrix} \xi z^2 - 1 & 0 \\ 0 & \xi - z^2 \end{pmatrix} \in \End(V)
}
is a solution of the \emph{(finite-finite) right reflection equation}
\eq{ \label{KV:RE}
R(y/z) K^V_1(y) R(y z) K^V_2(z) = K^V_2(z) R(y z) K^V_1(y) R(y/z) \in \End(V \ot V),
}
see \cite{Sk88}.
If we include the ``limit cases'' $\Big( \begin{smallmatrix} 1 & 0 \\ 0 & z^{\pm 2} \end{smallmatrix} \Big)$ then this family provides all invertible diagonal solutions $K^V(z) \in \End(V)$ of \eqref{KV:RE} up to an overall scalar factor, see e.g.~\cite[Lem. 2.2]{RSV15}.

\begin{rmk} \label{rmk:coidealsubalgebra}
Consider the unique involutive Lie algebra automorphism $\theta$ of $\mfg$ such that $e_i \leftrightarrow -f_{\si(i)}$ for $i \in \{0,1\}$ and let $\mfk = \mfg^\theta$ be the corresponding fixed-point subalgebra.
The following subalgebra of $U_q(\mfg)$ quantizes $U(\mfk) \subset U(\mfg)$:
\[
U_q(\mfk) := \langle e_0 - q^{-1} \xi k_0 f_1 , e_1 - q^{-1} \xi^{-1} k_1 f_0, k_0 k_1^{-1}, k_0^{-1} k_1 \rangle
\]
where $\xi \in \C^\times$ tends to 1 as $q \to 1$; see \cite{Ko14} for a more general theory of quantized fixed-point subalgebras of Kac-Moody Lie algebras.
Note that, unlike $U(\mfk)$, the coproduct on $U_q(\mfg)$ does not restrict to one on $U_q(\mfk)$.
Instead, $U_q(\mfk)$ is a \emph{left coideal}:
\eq{ \label{Uqk:coideal}
\Del(U_q(\mfk)) \subset U_q(\mfg) \ot U_q(\mfk).
}
Note that $U_q(\mfk)$ possesses an independent description as a $U_q(\mfg)$-comodule algebra (in terms of generators and relations) and in such a context is also known as the \emph{augmented q-Onsager algebra}, see \cite{BB13}.

The matrix $K^V(z) \in \GL(V)$ is a $U_q(\mfk)$-module homomorphism.
More precisely, it intertwines the representations $(V,\pi_z|_{U_q(\mfk)})$ and $(V,\pi_{1/z}|_{U_q(\mfk)})$, i.e.:
\eq{ \label{KV:intw}
K^V(z) \pi_z(x) = \pi_{1/z}(x) K^V(z), \qq \text{for all } x \in U_q(\mfk).
}
The simple linear relation \eqref{KV:intw} defines $K^V(z)$ uniquely, up to a scalar.
In general it is known that, under certain technical assumptions, such intertwiners satisfy the reflection equation \eqref{KV:RE}, see \cite{DM03,DG02} and \cite[Sec.~6.2]{RV16}. \hfill \rmkend
\end{rmk}

We will also consider an element $K^W(z,r) \in \End(W)$ which satisfies the \emph{(infinite-finite) right reflection equation}
\eq{ \label{KW:RE}
L(y/z,r) K^W_1(y,r) L(y z,r) K^V_2(z) = K^V_2(z) L(y z,r) K^W_1(y,r) L(y/z,r) \in \End(W \ot V).
}

\begin{rmk}
As opposed to $K^V(z)$, the map $K^W(z,r)$ does not satisfy a simple relation of the form \eqref{KV:intw} for the representations $\rho^\pm_{z,r}$, since $U_q(\mfk)$ is not contained in either $U_q(\mfb^+)$ or $U_q(\mfb^-)$.
\hfill \rmkend
\end{rmk}

We denote the $q^2$-deformed Pochhammer symbol by
\[
(x)_j := \begin{cases} 
\displaystyle\prod_{i=0}^{j-1} (1-q^{2i} x) & \text{if } j \ge 0 \\
\displaystyle\prod_{i=1}^{-j} (1-q^{-2i} x)^{-1} & \text{if } j < 0
\end{cases}
\]
where $x \in \C$ and $j \in \Z \cup \{ \infty \}$ - note that this depends holomorphically on $q$ in the unit disk and, for $j \ge 0$, holomorphically on $x$.

\begin{prop} \label{prop:KW:def}
The unique solution of \eqref{KW:RE} such that $K^W(z,r)(w^0)=w^0$ is given by
\eq{ \label{KW:def}
K^W(z,r) = q^{-D^2} r^{-D} (-\xi)^D (q^2z^2/\xi)_D.
}
That is,
\[
K^W(z,r)(w^j) = q^{-j^2} r^{-j} (-\xi)^j (q^2z^2/\xi)_j w^j = (q r^{-1})^j \left( \prod_{i=1}^j (z^2 - q^{-2i} \xi) \right) w^j.
\]
\end{prop}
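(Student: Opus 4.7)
The plan is to establish the formula by a graded-decomposition argument reducing to a diagonal ansatz, followed by a first-order recursion obtained from a single block entry of \eqref{KW:RE}.

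First I would argue that any solution $K^W(y,r)$ of \eqref{KW:RE} must be diagonal in the basis $(w^j)$ of $W$. The point is that $K^V(z)$ is diagonal and the four $\End(V)$-block entries of $L(z,r)$ are monomials in $\{a, a^\dag, q^{\pm D}\}$ of definite $\Z$-degree, taking $\deg a^\dag = -\deg a = 1$ and $\deg q^{\pm D} = 0$. Consequently \eqref{KW:RE} is homogeneous for this grading, and inserting the decomposition $K^W(y,r) = g_0(y,r;D) + \sum_{m > 0}\bigl[(a^\dag)^m g_m(y,r;D) + h_m(y,r;D) a^m\bigr]$ suggested by \eqref{oscq:decomposition} and matching graded pieces will force $g_m = h_m = 0$ for all $m \ne 0$.

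With the diagonal ansatz $K^W(y,r) = g(D)$, I would substitute into \eqref{KW:RE} and expand, say, the $(1,2)$ block entry using the oscillator relations \eqref{oscq:relations}. Writing $u = y/z$, $v = yz$, $\alpha = \xi z^2 - 1$, $\beta = \xi - z^2$, a short computation reveals the useful identity $(\alpha v - \beta u)/(\alpha u - \beta v) = \xi$, which causes the $z$-dependence in the coefficient of $g(j)$ to collapse, leaving the clean recursion
\[
g(j+1) \;=\; -\xi\, q^{-2j-1} r^{-1} \bigl(1 - q^{2(j+1)} y^2/\xi\bigr)\, g(j).
\]
The remaining three block entries of \eqref{KW:RE} are either automatically satisfied (the diagonal blocks) or reproduce the same recursion (the $(2,1)$ block). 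Telescoping with the normalization $g(0) = 1$, and using $\sum_{i=0}^{j-1}(2i+1) = j^2$, yields $g(j) = q^{-j^2} r^{-j} (-\xi)^j (q^2 y^2/\xi)_j$, matching \eqref{KW:def} (in the variable $y$).

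The main obstacle will be the diagonality step. While the $\Z$-grading argument is conceptually clean, one must check carefully that the graded sectors with $m \ne 0$ admit only the trivial solution $g_m = h_m = 0$, rather than leaving spurious off-diagonal contributions; a short inspection of the highest-power coefficients of $a$ or $a^\dag$ on both sides of \eqref{KW:RE}, in each such sector, should isolate $g_m$ or $h_m$ against an invertible scalar and force it to vanish. Once diagonality is established, the first-order recursion gives both existence (by direct verification) and uniqueness given $g(0) = 1$; the alternative product form for $K^W$ in the proposition follows from the identity $(q^2 y^2/\xi)_j = (-\xi)^{-j} q^{j(j+1)} \prod_{i=1}^j(y^2 - q^{-2i}\xi)$ after combining the $q$-powers with $q^{-j^2}$.
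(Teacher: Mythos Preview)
Your proposal is correct and leads to the same recursion, but the paper takes a cleaner route that you may find instructive. Rather than working with \eqref{KW:RE} directly, the paper first multiplies by $L(y/z,r)^{-1}$ on opposite sides to recast it as
\[
K^W_1(y,r)\, X^+(y,z,r) \;=\; X^-(y,z,r)\, K^W_1(y,r), \qquad X^\pm := \Big(1-\tfrac{y^2}{z^2}\Big)\,\begin{cases} L(yz,r) K^V_2(z) L(y/z,r)^{-1} \\ L(y/z,r)^{-1} K^V_2(z) L(yz,r) \end{cases}
\]
and then computes $X^\pm$ explicitly. The payoff is twofold. First, the diagonal entries of $X^+$ and $X^-$ turn out to be \emph{identical} and equal to affine functions of $q^{2D}$; this immediately forces any solution $K^W$ to commute with $q^{2D}$, hence to be diagonal, with no further sector-by-sector inspection needed. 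Second, with $K^W=g(D)$ the diagonal equations are then trivially satisfied, and the off-diagonal entries of $X^\pm$ (which are $a^\dag$ times a function of $D$, respectively $a$ times a function of $D$) give the first-order recursion for $g(j)$ in one line.

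Your approach---grading argument for diagonality, then direct expansion of the $(1,2)$ block of the unmodified \eqref{KW:RE}---works, and the identity $(\alpha v - \beta u)/(\alpha u - \beta v)=\xi$ you found is exactly what makes the recursion $z$-independent; indeed your recursion $g(j+1)=-\xi q^{-2j-1}r^{-1}(1-q^{2(j+1)}y^2/\xi)\,g(j)$ is equivalent to the one the paper obtains. What the paper's rewriting buys is that the diagonality step becomes a one-line observation rather than a case analysis, and the vanishing of the diagonal-block constraints is manifest rather than something to be verified. Your route is more hands-on but equally valid; the paper's is structurally more transparent.
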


\begin{proof}
It is convenient to re-write \eqref{KW:RE} in the form
\[
K^W_1(y,r) X^+(y,z,r) = X^-(y,z,r) K^W_1(y,r)
\]
where
\begin{align*}
& X^+(y,z,r) := \Big( 1-\tfrac{y^2}{z^2} \Big) L(y z,r) K^V_2(z) L(y/z,r)^{-1} \\
& \qu = \begin{pmatrix} 1 & -q^{-1} y z a^\dag \\ - q y z a & 1-q^{2(D+1)} y^2 z^2 \end{pmatrix} \begin{pmatrix} \xi z^2 - 1 & 0 \\ 0 & \xi - z^2 \end{pmatrix}  \begin{pmatrix} 1-q^{2D} \frac{y^2}{z^2} & q^{-1} \frac{y}{z} a^\dag \\ q \frac{y}{z} a & 1 \end{pmatrix} \\
& \qu = \begin{pmatrix} 
(z^2-y^2) \xi + y^2 z^2 - 1 + \frac{y^2}{z^2} (1-z^4) q^{2D} & -q^{-1} \frac{y}{z} (1-z^4) a^\dag \\ 
- q \frac{y}{z} (1-z^4) a(y^2 q^{2D} - \xi) & y^2 - z^2 + (1- y^2 z^2) \xi - y^2 (1-z^4) q^{2(D+1)} \end{pmatrix}
\\
& X^-(y,z,r) := \Big( 1-\tfrac{y^2}{z^2} \Big)  L(y/z,r)^{-1} K^V_2(z) L(y z,r) \\
& \qu = \begin{pmatrix} 1-q^{2D} \frac{y^2}{z^2} & q^{-1} \frac{y}{z} a^\dag \\ q \frac{y}{z} a & 1 \end{pmatrix} \begin{pmatrix} \xi z^2 - 1 & 0 \\ 0 & \xi - z^2 \end{pmatrix} \begin{pmatrix} 1 & -q^{-1} y z a^\dag \\ - q y z a & 1-q^{2(D+1)} y^2 z^2 \end{pmatrix} \\
& \qu = \begin{pmatrix} 
(z^2-y^2) \xi + y^2 z^2 - 1 + \frac{y^2}{z^2} (1-z^4) q^{2D} & - r^{-1} \frac{y}{z} (1-z^4) (y^2 - \xi q^{-2D}) a^\dag \\ 
-r \frac{y}{z} (1-z^4) a q^{2D} & y^2 - z^2 + (1- y^2 z^2) \xi - y^2 (1-z^4) q^{2(D+1)}
\end{pmatrix}
\end{align*}
by virtue of \eqrefs{L:def}{L:inv}.
Considering that the diagonal entries are identical and how the off-diagonal entries are related we obtain the result.
\end{proof}

\begin{rmk}
This K-matrix is part of a larger family of infinite-dimensional solutions of various reflection equations found in \cite[Sec.~4]{BTs18}.
More precisely, \cite[Eq. (4.53)]{BTs18} is equivalent to \eqref{KW:RE} and the solution \eqref{KW:def} corresponds to \cite[Eq. (4.37)]{BTs18} up to an overall scalar.
We thank P.~Baseilhac and Z.~Tsuboi for pointing this out.
\hfill \rmkend
\end{rmk}

We will also be interested in solutions $\wt K^V(z) \in \End(V)$ and $\wt K^W(z,r) \in \End(W)$ of the \emph{left reflection equations}
\begin{align}
\label{KtV:RE}
\wt K^V_2(z) \wt R(y z) \wt K^V_1(y) R(\tfrac{y}{z})^{-1} &= R(\tfrac{y}{z})^{-1} \wt K^V_1(y) \wt R(y z) \wt K^V_2(z) && \in \End(V \ot V), \hspace{-4mm} \\
\label{KtW:RE}
\wt K^V_2(z) \wt L(y z,r) \wt K^W_1(y,r) L(\tfrac{y}{z},r)^{-1} &= L(\tfrac{y}{z},r)^{-1} \wt K^W_1(y,r) \wt L(y z,r) \wt K^V_2(z) \hspace{-4mm} && \in \End(W \ot V).
\hspace{-4mm}
\end{align}

Consider \eqref{KtW:RE}.
Employing \eqref{L:cross.unit.}, inverting and reparametrizing $(y,z) \mapsto (q^{-1} y,q^{-1} z)$, we see that it is equivalent to 
\eq{ \label{KtW:RE:alt}
L(\tfrac{y}{z},r) \wt K^W_1(q^{-1} y,r)^{-1} L(y z,r) \wt K^V_2(q^{-1} z)^{-1} = \wt K^V_2(q^{-1} z)^{-1} L(y z,r) \wt K^W_1(q^{-1} y,r)^{-1} L(\tfrac{y}{z},r).
}
Let $\tilde \xi \in \C^\times$ be arbitrary.
Comparing \eqref{KtW:RE:alt} with \eqref{KW:RE} we obtain that 
\eq{ \label{Kt:fromK}
(\wt K^V(z),\wt K^W(z,r)) := (f^V(z) K^V(q z)^{-1},f^W(z) K^W(q z,r)^{-1})|_{\xi \to \tilde \xi^{-1}}
}
satisfies \eqref{KtW:RE} with $K^V$ and $K^W$ given by \eqref{KV:def} and \eqref{KW:def}.
Here $f^V$, $f^W$ are any scalars depending meromorphically on $z$ and $\tilde \xi \in \C^\times$ is arbitrary.
By choosing $f^V(z) = (q^2 \tilde \xi z^2 - 1)(q^2 z^2 - \tilde \xi)$, $f^W(z) = (1 - q^2 \tilde \xi z^2)^{-1}$ we obtain
\begin{align}
\label{KtV:def} \wt K^V(z) &= \begin{pmatrix} q^2 \tilde \xi z^2 - 1 & 0 \\ 0 & \tilde \xi - q^2 z^2 \end{pmatrix} \\
\label{KtW:def} \wt K^W(z,r) &= (1 - q^2 \tilde \xi z^2)^{-1} q^{D^2} r^D  (-\tilde \xi)^D (q^4 \tilde \xi z^2)_D^{-1} = q^{D^2} r^D (-\tilde \xi)^D (q^2 \tilde \xi z^2)_{D+1}^{-1}.
\end{align}
Following the same argument, one obtains that \eqref{KtV:RE} is satisfied.


\section{Short Exact Sequences and Fusion}

In this section we construct $U_q(\mfb^+)$-intertwiners which take part in short exact sequences relating the module $(W \ot V,\rho^+_{z,r} \ot \pi_z)$ to the module $(W,\rho^+_{z',r'})$ for certain shifted parameters $z',r'$.
Similarly, we construct $U_q(\mfb^-)$-intertwiners involved in short exact sequences relating the $U_q(\mfb^-)$-modules $(V \ot W,\pi_z \ot \rho^-_{z,r})$ and $(W,\rho^-_{z'',r''})$ for certain shifted parameters $z'',r''$.

For $r \in \C^\times$, consider the following linear maps called \emph{fusion intertwiners}:
\eq{ \label{iotatau:formula}
\iota(r) :=  \begin{pmatrix} q^{-D} a^\dag \\ q^{D+1} r \end{pmatrix} \in \End(W,W \ot V), \qq 
\tau(r) := \big( q^D , \,  - q^{-D} r^{-1} a^\dag \big) \in \End(W \ot V,W).
}
Here we interpret elements of $W \ot V$ and elements of $V \ot W$ as vectors with two entries in $W$, using the ordered basis $(v^0,v^1)$ of $V$.
In other words, we have
\begin{align*}
\iota(r)(w^j) &= (q^{-j-1} - q^{j+1}) w^{j+1} \ot v^0 + q^{j+1} r w^j \ot v^1 \hspace{-60mm} \\
\tau(r)(w^j \ot v^0) &= q^j w^j & \tau(r)(w^j \ot v^1) &= (q^{j+1}-q^{-j-1}) r^{-1} w^{j+1} 
\end{align*}

\begin{lemma} \label{lem:iotatau:intertwine}
The maps $\iota(r)$ and $\tau(r)$ are $U_q(\mfb^+)$-intertwiners as follows: 
\eq{ \label{iotatau:intertwine:plus}
\iota(r): (W,\rho^+_{qz,qr}) \to (W \ot V,\rho^+_{z,r} \ot \pi_z), 
\qq \tau(r): (W \ot V,\rho^+_{z,r} \ot \pi_z) \to (W,\rho^+_{q^{-1} z,q^{-1} r})
}
and we have the following short exact sequence of $U_q(\mfb^+)$-intertwiners:
\begin{equation} \label{SES:plus}
\begin{tikzcd}
0 \arrow[r] & (W,\rho^+_{qz,qr}) \arrow[r,"\iota(r)"] & (W \ot V,\rho^+_{z,r} \ot \pi_z) \arrow[r,"\tau(r)"]  & (W,\rho^+_{q^{-1}z,q^{-1}r}) \arrow[r] & 0 
\end{tikzcd}
\end{equation}
Similarly, the maps $P \iota(r^{-1})$ and $\tau(r^{-1}) P$ are $U_q(\mfb^-)$-intertwiners:
\eq{ \label{iotatau:intertwine:min}
\begin{aligned}
P \iota(r^{-1}): & \; (W,\rho^-_{q^{-1} z,q^{-1} r}) \to (V \ot W,\pi_z \ot \rho^-_{z,r}), \\
\tau(r^{-1})P: & \; (V \ot W,\pi_z \ot \rho^-_{z,r}) \to (W,\rho^-_{q z,q r})
\end{aligned}
}
and we have the following short exact sequence of $U_q(\mfb^-)$-intertwiners:
\begin{equation} \label{SES:min}
\begin{tikzcd}
0 & (W,\rho^-_{qz,qr}) \arrow[l] & (V \ot W,\pi_z \ot \rho^-_{z,r}) \arrow[l,"\tau(r^{-1})P"] & (W,\rho^-_{q^{-1}z,q^{-1}r}) \arrow[l,"P\iota(r^{-1})"]  & 0 \arrow[l].
\end{tikzcd}
\end{equation}
\end{lemma}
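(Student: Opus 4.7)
The plan is to verify each intertwining identity directly on the Chevalley--Kac generators of the respective Borel subalgebras, and then to deduce exactness of \eqref{SES:plus} and \eqref{SES:min} by a short linear-algebra argument using the $q$-oscillator shift relations from \eqref{oscq:relations}.

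First I will establish \eqref{iotatau:intertwine:plus}. For each generator $x \in \{k_0^{\pm 1}, k_1^{\pm 1}, e_0, e_1\}$ of $U_q(\mfb^+)$ I will evaluate both sides of
\[
\iota(r)\,\rho^+_{qz,qr}(x)(w^j) \; = \; (\rho^+_{z,r}\ot\pi_z)(\Del(x))\,\iota(r)(w^j)
\]
on a basis vector $w^j$, using the coproduct \eqref{Uq:bialgebra} and the oscillator actions \eqref{rhoplus:def}, \eqref{oscq:rep}. The Cartan cases are immediate from the shift identities $q^{\pm 2D}a = a\,q^{\pm 2(D-1)}$ and $q^{\pm 2D}a^\dag = a^\dag\, q^{\pm 2(D+1)}$ of \eqref{oscq:relations}. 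The cases $x = e_0, e_1$ reduce, after separating the $v^0$- and $v^1$-components of $\iota(r)(w^j)$, to algebraic identities which follow from $a^\dag a = 1-q^{2D}$ and $a a^\dag = 1-q^{2(D+1)}$ combined with the factor of $q$ built into the parameter shift $(z,r) \mapsto (qz, qr)$. The analogous verification for $\tau(r)$ against the shift $(z,r) \mapsto (q^{-1}z, q^{-1}r)$ is identical in structure.

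Next I will verify exactness of \eqref{SES:plus}. Injectivity of $\iota(r)$ follows because the $v^1$-component of $\iota(r)(w^j)$ is $q^{j+1}r\,w^j \ne 0$, while surjectivity of $\tau(r)$ follows from $\tau(r)(w^j \ot v^0) = q^j w^j$. For the middle, a one-line computation using $a^\dag q^{D+1} = q^D a^\dag$ gives
\[
\tau(r)\iota(r) \; = \; q^D\cdot q^{-D}a^\dag \,-\, q^{-D}r^{-1}a^\dag\, q^{D+1} r \; = \; a^\dag - a^\dag \; = \; 0,
\]
so $\Im\,\iota(r) \subseteq \Ker\,\tau(r)$. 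Conversely, writing a general element of $W \ot V$ as $\sum_j (\alpha_j\,w^j\ot v^0 + \beta_j\,w^j\ot v^1)$ with finitely many nonzero coefficients, the condition $\tau(r)(\,\cdot\,)=0$ yields $\alpha_0 = 0$ and $\alpha_j = -\beta_{j-1}(1-q^{-2j})r^{-1}$ for $j \ge 1$; this is precisely the image of $\sum_j \beta_j\, q^{-j-1} r^{-1}\, w^j$ under $\iota(r)$, so the inclusion is an equality.

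Finally, I will obtain \eqref{iotatau:intertwine:min} and \eqref{SES:min} by an analogous direct verification. The coproduct $\Del(f_i) = f_i \ot k_i^{-1} + 1 \ot f_i$ and the formulas \eqref{rhominus:def} differ from the $U_q(\mfb^+)$ case by a systematic swap $(a, a^\dag, r) \leftrightarrow (a^\dag, a, r^{-1})$ and by the reversal of tensor factor order, which accounts for the conjugation by $P$ appearing in $P\iota(r^{-1})$ and $\tau(r^{-1})P$. After these substitutions the required identities collapse to the same $q$-oscillator relations used above, and the exact-sequence argument is structurally identical. The likeliest source of error---and hence the main care point---is the bookkeeping for the parameter shifts $(z, r) \mapsto (q^{\pm 1} z, q^{\pm 1} r)$ and the $P$-conjugation; no step poses a genuine conceptual obstacle.
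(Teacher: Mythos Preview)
Your proposal is correct and follows essentially the same approach as the paper: verify the intertwining conditions on the Chevalley generators and check exactness directly. The only minor difference is presentational---the paper phrases the intertwining step as \emph{solving} the conditions \eqref{iotatauplus:intertwine}--\eqref{iotataumin:intertwine} for the unknown maps (first constraining their shape via $u=k_0,k_1$ using the decomposition \eqref{oscq:decomposition}, then pinning down the functions via $u=e_0,e_1$ or $u=f_0,f_1$), whereas you verify the already-given formulas \eqref{iotatau:formula}; both are equally valid for the lemma as stated.
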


\begin{proof}
The requirements \eqref{iotatau:intertwine:plus} and \eqref{iotatau:intertwine:min} that these maps are intertwiners between the indicated modules is equivalent to
\begin{align}
\label{iotatauplus:intertwine}
\begin{aligned}
\iota(r) \circ \rho^+_{qz,qr}(u) &= (\rho^+_{z,r} \ot \pi_z)(\Del(u)) \circ \iota(r) \\ 
\rho^+_{q^{-1} z, q^{-1} r}(u) \circ  \tau(r) &= \tau(r) \circ (\rho^+_{z,r} \ot \pi_z)(\Del(u))
\end{aligned} 
\Bigg\} & \qu \text{for all } u \in U_q(\mfb^+) \\
\label{iotataumin:intertwine}
\begin{aligned}
P \iota(r^{-1}) \circ \rho^-_{q^{-1}z,q^{-1}r}(u) &= (\pi_z \ot \rho^-_{z,r})(\Del(u)) \circ P \iota(r^{-1}) \\
\rho^-_{q z, q r}(u) \circ \tau(r^{-1}) P &= \tau(r^{-1}) P \circ  (\pi_z \ot \rho^-_{z,r} )(\Del(u))
\end{aligned}
\Bigg\} & \qu \text{for all } u \in U_q(\mfb^-).
\end{align}
Solving \eqrefs{iotatauplus:intertwine}{iotataumin:intertwine} for $u=k_0$ and $u=k_1$ we obtain, using \eqref{oscq:decomposition}, that
\[
\iota(r) = \begin{pmatrix} a^\dag f_0(D) \\ f_1(D) \end{pmatrix} \qq \tau(r) = \begin{pmatrix} g_0(D), \, a^\dag g_1(D) \end{pmatrix} 
\]
respectively, where $f_0, f_1, g_0, g_1 \in \mc F$ are arbitrary.
Now solving \eqrefs{iotatauplus:intertwine}{iotataumin:intertwine} for $u \in \{ e_0,e_1\}$ and using \eqref{rhoplus:def}, as well as for $u \in \{ f_0,f_1 \}$ and using \eqref{rhominus:def}, we deduce \eqref{iotatau:formula}.

To prove the short exact sequence statements, we straightforwardly verify that $\iota(r)$ is injective, $\tau(r)$ is surjective and the image of $\iota(r)$ equals the kernel of $\tau(r)$, respectively.
\end{proof}

The \emph{(bulk) fusion relations} are the following identities relating $\iota(r)$ and $\tau(r)$ to $R(z)$ and $L(z)$ can be obtained directly from the formulas \eqref{iotatau:formula} and \eqrefs{R:def}{L:def}:
\begin{align}
\label{L:iota}
L_{13}(z,r) R_{23}(z) (\iota(r) \ot \Id) &= (1 - z^2) (\iota(r) \ot \Id) L(q z, q r) & \in \Hom(W \ot V,W \ot V \ot V), \\
\label{L:tau}
(\tau(r) \ot \Id) L_{13}(z,r) R_{23}(z) &= q (1 - q^2 z^2) L(q^{-1} z, q^{-1} r) (\tau(r) \ot \Id) \hspace{-12mm} \\
\nonumber && \in \Hom(W \ot V \ot V,W \ot V).
\end{align}
Alternatively, if one applies $\rho^+_{z,r} \ot \pi_z \ot \pi_1$ to \eqref{uniR:2} and uses \eqref{iotatauplus:intertwine} or $\rho^-_{z^{-1},r^{-1}} \ot \pi_{z^{-1}} \ot \pi_1$ to $(\Del^{\rm op} \ot \id)(\mc R_{21}) = \mc R_{31} \mc R_{32}$ (itself a direct consequence of \eqref{uniR:3}) then one obtains these identities up to the scalar factor, which can then be found by evaluating on a particular nonzero vector.\\

We also have \emph{boundary fusion relations} similar to the finite-dimensional relation given in \cite[Eqn.~(4.7)]{RSV16}.
Boundary fusion was first discussed in the papers \cite{KS92,MN92} without explicitly using short exact sequences.

\begin{lemma} \label{KW:iotatau:lem}
We have the following identities in $\Hom(W,W \ot V)$ and $\Hom(W \ot V,W)$, respectively:
\begin{align} 
\label{KW:iota} K^W_1(z,r) L(z^2,r) K^V_2(z) \iota(r) &= (1 - z^4) (\xi - q^2 z^2) \iota(r) K^W(q z,q r), \\
\label{KW:tau} \tau(r) K^W_1(z,r) L(z^2,r) K^V_2(z) &= r (\xi z^2  - 1) K^W(q^{-1} z,q^{-1} r) \tau(r)
\end{align}
and the following identities in $\Hom(W,W \ot V)$ and $\Hom(W \ot V,W)$, respectively:
\begin{align} 
\label{KtW:iota} \wt K^V_2(z) \wt L(z^2,r) \wt K^W_1(z,r) \iota(r) &= \frac{\tilde \xi - q^2 z^2}{1 - q^2 z^4} \, \iota(r) \wt K^W(q z,q r), \\
\label{KtW:tau} \tau(r) \wt K^V_2(z) \wt L(z^2,r) \wt K^W_1(z,r) &= r^{-1} \frac{1 - q^4 z^4}{1 - q^2 z^4} (\tilde \xi z^2 - 1) \wt K^W(q^{-1} z,q^{-1} r) \tau(r).
\end{align}
\end{lemma}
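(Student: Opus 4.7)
The plan is to verify all four identities by direct computation using the explicit formulas for the operators in play: \eqref{KV:def} for $K^V$, \eqref{KW:def} for $K^W$, \eqref{L:def} for $L$, \eqref{iotatau:formula} for $\iota,\tau$, and \eqref{KtV:def}, \eqref{KtW:def} for the tilde versions. The only non-elementary ingredients will be the q-oscillator relations \eqref{oscq:relations} and the Pochhammer-symbol shift identity $(q^2 z^2/\xi)_{D+1} = (1-q^2 z^2/\xi)(q^4 z^2/\xi)_D$, which is what converts a $z$-shift by $q$ into a $D$-shift by $\pm 1$.

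For \eqref{KW:iota}, my first step is to compute the column vector $L(z^2,r) K^V_2(z) \iota(r) \in \Hom(W,W\ot V)$. Evaluating the relevant $2 \times 2$ matrix products entry by entry and normal-ordering using $a a^\dag = 1 - q^{2(D+1)}$ yields, after a short manipulation, the simple expression $(1-z^4)\left(\begin{smallmatrix} -r a^\dag \\ qr(\xi - q^{2(D+1)} z^2) \end{smallmatrix}\right)$, in which the expected scalar factor $1-z^4$ has already appeared by cancellation. Composing on the left with $K^W_1(z,r)$ and commuting it past $a^\dag$ via $K^W(z,r) a^\dag = a^\dag K^W(z,r)|_{D\to D+1}$ reduces \eqref{KW:iota} to a pair of equalities between functions of $D$; both follow from the Pochhammer shift. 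Relation \eqref{KW:tau} follows by an entirely analogous computation, starting instead from the row vector $\tau(r) K^W_1(z,r)$ and multiplying $L(z^2,r) K^V_2(z)$ on the right.

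For the left-reflection versions \eqref{KtW:iota} and \eqref{KtW:tau}, the cleanest route I foresee is to deduce them from \eqref{KW:iota} and \eqref{KW:tau} via the correspondence \eqref{Kt:fromK} (which re-expresses $\wt K^V, \wt K^W$ as explicit scalar multiples of $K^V(qz)^{-1}, K^W(qz,r)^{-1}$ after $\xi \mapsto \tilde \xi^{-1}$) together with the crossing unitarity \eqref{L:cross.unit.} (which re-expresses $\wt L$ in terms of $L^{-1}$). If the substitution tangles the $\iota,\tau$ insertions awkwardly, the direct computational approach along the same lines as above is a reliable fallback: the analogous cancellations produce the factors $\dfrac{1-q^4 z^4}{1-q^2 z^4}$ and $(\tilde\xi - q^2 z^2)$ or $(\tilde\xi z^2 - 1)$.

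The principal obstacle throughout is bookkeeping of the non-commutativity between functions of $D$ and the operators $a^\dag, a$: a single inconsistent $D$-shift would destroy the delicate cancellations that yield the scalar factors on the right-hand sides. Once normal-ordering is performed systematically, each of the four identities reduces to an elementary scalar identity in the Pochhammer calculus, consistent with the authors' remark that these are the only key relations in the paper for which they do not yet have a representation-theoretic proof.
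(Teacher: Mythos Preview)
Your proposal is correct and follows essentially the same route as the paper's proof: compute $L(z^2,r)K^V_2(z)\iota(r)$ directly to extract the factor $(1-z^4)$, then left-multiply by $K^W_1(z,r)$ and use the Pochhammer shift; treat \eqref{KW:tau} analogously starting from $\tau(r)K^W_1(z,r)$; and deduce \eqref{KtW:iota}--\eqref{KtW:tau} from \eqref{KW:iota}--\eqref{KW:tau} via \eqref{Kt:fromK} and \eqref{L:cross.unit.}. The intermediate expression you obtain for $L(z^2,r)K^V_2(z)\iota(r)$ matches the paper's up to an overall sign convention on the prefactor.
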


\begin{proof}
A direct computation gives 
\begin{align*}
L(z^2,r) K^V_2(z) \iota(r) &= \begin{pmatrix} 1 & -q^{-1} z^2 a^\dag \\ - q z^2 a & 1-q^{2(D+1)} z^4 \end{pmatrix} \begin{pmatrix} q^D r & 0 \\ 0 & q^{-D} \end{pmatrix} \begin{pmatrix} \xi z^2 - 1 & 0 \\ 0 & \xi - z^2 \end{pmatrix} \begin{pmatrix} q^{-D} a^\dag \\ q^{D+1} r \end{pmatrix} \\
&= r \begin{pmatrix} 1 & -q^{-1} z^2 a^\dag \\ - q z^2 a & 1-q^{2(D+1)} z^4 \end{pmatrix} \begin{pmatrix} (\xi z^2 - 1) a^\dag \\ q (\xi - z^2) \end{pmatrix} \\
&= r (z^4 - 1) \begin{pmatrix} a^\dag \\  q (q^{2(D+1)}z^2-\xi)\end{pmatrix} 
\end{align*}
so that
\begin{align*}
K^W_1(z,r) L(z^2,r) K^V_2(z) \iota(r) &= q^{-D^2} r^{1-D} (z^4 - 1)  \begin{pmatrix} a^\dag  \\  q \end{pmatrix} (-\xi)^{D+1} (q^2 z^2/\xi)_{D+1} \\
&= (z^4 - 1) (q^2 z^2 - \xi) \begin{pmatrix} q^{-D} a^\dag \\  q^{D+1} r  \end{pmatrix} q^{-D^2} (qr)^{-D}  (-\xi)^D (q^4 z^2/\xi)_D \\
&= (z^4 - 1) (q^2 z^2 - \xi) \iota(r) K^W(q z,q r),
\end{align*}
as required.
We also have
\begin{align*}
\tau(r) K^W_1(z,r) &= \begin{pmatrix} q^D & -q^{-D} r^{-1} a^\dag \end{pmatrix} q^{-D^2} r^{-D} (-\xi)^D (q^2 z^2/\xi)_D \\
&= (-\xi)^{D-1} (q^2 z^2/\xi)_{D-1} \begin{pmatrix} q^{2D} z^2 - \xi & -q^{-1} a^\dag \end{pmatrix} \\
&= (z^2 - \xi)^{-1} K^W(q^{-1} z) \begin{pmatrix} q^{2D} z^2 - \xi & -q^{-1} a^\dag \end{pmatrix} 
\end{align*}
so that
\begin{align*}
& \tau(r) K^W_1(z,r)  L(z^2,r) K^V_2(z) = \\
&= (z^2 - \xi)^{-1} K^W(q^{-1} z,q^{-1} r) \begin{pmatrix} z^2 - \xi  &  q^{-1} (\xi z^2 - 1) a^\dag   \end{pmatrix} \begin{pmatrix} (\xi z^2 - 1) q^D r & 0 \\ 0 & (\xi - z^2) q^{-D} \end{pmatrix} \\
&= r (\xi z^2 - 1) K^W(q^{-1} z, q^{-1} r) \begin{pmatrix} q^D &  -q^{-D} r^{-1} a^\dag  \end{pmatrix} 
\end{align*}
as required.
This gives  \eqrefs{KW:iota}{KW:tau}.
Now \eqrefs{KtW:iota}{KtW:tau} straightforwardly follow by applying \eqref{L:cross.unit.} and \eqref{Kt:fromK}.
\end{proof}


\section{Transfer matrices and the Q-operator}

In this section we define, and prove properties of, the main objects of this paper: the transfer matrix and the Q-operator for the open XXZ chain with diagonal boundaries.
Initially we will define the Q-operator as a formal power series and in Section \ref{sec:Betheeqns} we will show it converges for suitable parameters and hence determines a well-defined linear map.

\subsection{Double-row operators}

Fix $\bm t = (t_1,\ldots,t_N) \in (\C^\times)^N$.
We will construct operators acting on $V^{\ot N}$ by first considering operators called \emph{quantum monodromy matrices} acting on a larger tensor product $V \ot V^{\ot N}$ or $W \ot V^{\ot N}$.
The additional factor in the tensor product is called the \emph{auxiliary space} - in the construction of the transfer matrix this space will be traced out.
The quantum monodromy matrices are compositions of operators each acting on the auxiliary space and at most one other space, as follows:
\begin{align}
\label{MV:def}
\mc M^V_a(z) &:= R_{1a}(t_1 z) \cdots R_{Na}(t_N z) K^V_a(z) R_{aN}(\tfrac{z}{t_N}) \cdots R_{a1}(\tfrac{z}{t_1}) && \in \End(V \ot V^{\ot N}) \hspace{-8pt} \\
\label{MW:def}
\hspace{-4pt} \mc M^W_a(z,r) &:= L^-_{1a}(t_1 z,r) \cdots L^-_{Na}(t_N z,r) K^W_a(z,r) L^+_{aN}(\tfrac{z}{t_N},r) \cdots L^+_{a1}(\tfrac{z}{t_1},r) \hspace{-4pt} && \in \End(W \ot V^{\ot N}).
\hspace{-6pt}
\end{align}
Here we have labelled the auxiliary space by $a$ (in general we will use lowercase roman letters to label auxiliary spaces).
Note that due to the explicit expressions \eqrefs{R:def}{L:def} we have
\begin{align*}
\mc M^V_a(z) &= R_{a1}(t_1 z) \cdots R_{aN}(t_N z) K^V_a(z) R_{aN}(\tfrac{z}{t_N}) \cdots R_{a1}(\tfrac{z}{t_1}) \\
\mc M^W_a(z,r) &= L_{a1}(t_1 z,r) \cdots L_{aN}(t_N z,r) K^W_a(z,r) L_{aN}(\tfrac{z}{t_N},r) \cdots L_{a1}(\tfrac{z}{t_1},r).
\end{align*}
These quantum monodromy matrices are particular to open chains; the operator $\mc M^V(z)$ was first considered by Sklyanin \cite{Sk88}.
In particular he showed it satisfies a version of the reflection equation in $\End(V \ot V \ot V^{\ot N})$; there is a similar identity in $\End(W \ot V \ot V^{\ot N})$ involving both $\mc M^V(z)$ and $\mc M^W(z,r)$, see \eqref{MW:RE} below.
In such identities there are two auxiliary spaces which are labelled $a$ and $b$, respectively.

\begin{lemma} \label{lem:MW:RE}
In $\End(W \ot V \ot V^{\ot N})$ we have the identities
\begin{align}
\label{MRM1} \mc M^W_a(y,r) L_{ab}(yz,r) \mc M^V_{b}(z) &= \big( L_{a1}(t_1 y,r) R_{b1}(t_1 z) \big) \cdots \big( L_{aN}(t_N y,r) R_{bN}(t_N z) \big) \cdot \\
\nonumber & \qu \cdot K^W_a(y,r) L_{ab}(yz,r) K^V_{b}(z) \cdot \\
\nonumber & \qu \cdot \big( L_{aN}(\tfrac{y}{t_N},r) R_{bN}(\tfrac{z}{t_N}) \big) \cdots \big( L_{a1}(\tfrac{y}{t_1},r) R_{b1}(\tfrac{z}{t_1}) \big) \hspace{-5mm} \\
\label{MRM2} \mc M^V_{b}(z) L_{ab}(yz,r) \mc M^{W}_a(y,r) &= \big( R_{b1}(t_1 z) L_{a1}(t_1 y,r) \big) \cdots \big( R_{bN}(t_N z) L_{aN}(t_N y,r) \big) \cdot \\
\nonumber & \qu \cdot K^V_{b}(z) L_{ab}(yz,r) K^W_a(y,r) \cdot \\
\nonumber & \qu \cdot \big( R_{bN}(\tfrac{z}{t_N}) L_{aN}(\tfrac{y}{t_N},r) \big) \cdots \big( R_{b1}(\tfrac{z}{t_1}) L_{a1}(\tfrac{y}{t_1},r) \big) \hspace{-5mm} 
\end{align}
and
\eq{
\label{MW:RE} \hspace{-3mm} L_{ab}(\tfrac{y}{z},r) \mc M^W_a(y,r) L_{ab}(y z,r) \mc M^V_{b}(z) = \mc M^V_{b}(z) L_{ab}(y z,r) \mc M^W_a(y,r) L_{ab}(\tfrac{y}{z},r).
}
\end{lemma}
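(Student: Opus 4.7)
The plan is to establish the interleaving identities \eqref{MRM1} and \eqref{MRM2} first, and then combine them with the infinite-finite reflection equation \eqref{KW:RE} to deduce the double-row reflection equation \eqref{MW:RE}.

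For \eqref{MRM1}, I would expand the left-hand side using \eqref{MV:def} and \eqref{MW:def}, placing $L_{ab}(yz,r)$ between the right half of $\mc M^W_a(y,r)$ and the left half of $\mc M^V_b(z)$. The key tool is the Yang--Baxter equation \eqref{Lmin:YBE}, which with the substitutions $(1,2,3) \mapsto (b,i,a)$ and a suitable choice of spectral parameters specializes to
\[
L_{ai}(y/t_i,r)\, L_{ab}(yz,r)\, R_{bi}(t_i z) = R_{bi}(t_i z)\, L_{ab}(yz,r)\, L_{ai}(y/t_i,r).
\]
I would apply this iteratively for $i = 1, 2, \ldots, N$: each application swaps the innermost $L_{ai}(y/t_i,r)$ on the left of $L_{ab}(yz,r)$ with the innermost $R_{bi}(t_i z)$ on its right. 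After each swap, the liberated $L_{ai}(y/t_i,r)$ is moved to the far right (where it pairs with $R_{bi}(z/t_i)$) and the liberated $R_{bi}(t_i z)$ to the far left (where it pairs with $L_{ai}(t_i y,r)$), using only commutations between operators acting on disjoint tensor factors. This produces the interleaved right-hand side of \eqref{MRM1}. The identity \eqref{MRM2} follows by an entirely analogous argument, the relevant Yang--Baxter equation being $L_{ai}(t_i y,r) L_{ab}(yz,r) R_{bi}(z/t_i) = R_{bi}(z/t_i) L_{ab}(yz,r) L_{ai}(t_i y,r)$.

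For \eqref{MW:RE}, I would substitute \eqref{MRM1} into the left-hand side and \eqref{MRM2} into the right-hand side. The task then reduces to transporting the outer $L_{ab}(y/z,r)$ across the resulting expression. The Yang--Baxter equation \eqref{Lplus:YBE}, in the form
\[
L_{ab}(y/z,r)\, L_{ai}(\alpha,r)\, R_{bi}(\beta) = R_{bi}(\beta)\, L_{ai}(\alpha,r)\, L_{ab}(y/z,r)
\]
valid for $(\alpha, \beta) \in \{ (t_i y, t_i z),\, (y/t_i, z/t_i) \}$, permits $L_{ab}(y/z,r)$ to traverse each outer pair $L_{ai}(\alpha,r) R_{bi}(\beta)$ at the cost of transposing it. Pushing $L_{ab}(y/z,r)$ through all the left pairs of the \eqref{MRM1}-expansion brings it adjacent to the central boundary block $L_{ab}(y/z,r) K^W_a(y,r) L_{ab}(yz,r) K^V_b(z)$; applying \eqref{KW:RE} exchanges this with $K^V_b(z) L_{ab}(yz,r) K^W_a(y,r) L_{ab}(y/z,r)$. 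Continuing to push $L_{ab}(y/z,r)$ through the right pairs then recovers exactly the \eqref{MRM2}-expansion of the right-hand side.

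I expect the main obstacle to be organizational rather than conceptual: identifying the correct specializations of \eqref{Lplus:YBE} and \eqref{Lmin:YBE}, verifying at each step that the operators being commuted act on disjoint tensor factors, and carefully tracking the transposition of outer pairs after $L_{ab}(y/z,r)$ has traversed them. No new representation-theoretic input is required beyond \eqref{Lplus:YBE}, \eqref{Lmin:YBE} and \eqref{KW:RE}.
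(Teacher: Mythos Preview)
Your proposal is correct and follows the same approach as the paper's proof: establish \eqref{MRM1} and \eqref{MRM2} by repeated application of the Yang--Baxter equations \eqref{Lplus:YBE}--\eqref{Lmin:YBE} (together with commutations on disjoint factors), and then deduce \eqref{MW:RE} by pushing $L_{ab}(y/z,r)$ through the interleaved expansions using the same Yang--Baxter equations $N$ times on each side and \eqref{KW:RE} once in the middle. Your write-up is in fact more detailed than the paper's, which compresses the argument to a few lines; note only that, because $R(z)$ is $P$-symmetric, the specializations you attribute to \eqref{Lmin:YBE} can equally be read as instances of \eqref{Lplus:YBE}, so there is no real discrepancy in which equation is invoked at which step.
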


\begin{proof}
We obtain \eqrefs{MRM1}{MRM2} as direct consequences of the definition \eqref{MW:def} by applying \eqref{Lplus:YBE} and \eqref{Lmin:YBE} $N$ times.
To obtain \eqref{MW:RE}, we use that \eqrefs{MRM1}{MRM2} provide factorized expressions for its left- and right-hand sides, respectively.
More precisely, \eqref{MW:RE} follows from the remark that left-multiplying \eqref{MRM1} by $L_{ab}(\tfrac{y}{z},r)^{-1}$ and right-multiplying it by $L_{ab}(\tfrac{y}{z},r)$, applying \eqref{Lmin:YBE} $N$ times, \eqref{KW:RE} once and \eqref{Lplus:YBE} $N$ times yields \eqref{MRM2}.
\end{proof}

The following lemma is vital to the proof of the main theorem of this paper, Theorem \ref{thm:QT:rel}.

\begin{lemma} \label{lem:iotatau:row}
We have the following identities in $\End(W \ot V^{\ot N},W \ot V \ot V^{\ot N})$ and $\End(W \ot V \ot V^{\ot N},W \ot V^{\ot N})$, respectively:
\begin{align}
\label{iota:row} & \wt K^V_{b}(z) \wt L_{ab}(z^2,r) \wt K^W_a(z,r) \mc M^W_a(z,r) L_{ab}(z^2,r) \mc M^V_{b}(z)  (\iota(r) \ot \Id_{V^{\ot N}})= \\
\nonumber & \qq = \frac{p_+(z)}{1-q^2 z^4} (\iota(r) \ot \Id_{V^{\ot N}}) \wt K^W_a(qz,qr) \mc M^W_a(qz,qr), \\
\label{tau:row} & (\tau(r) \ot \Id_{V^{\ot N}}) \wt K^V_{b}(z) \wt L_{ab}(z^2,r) \wt K^W_a(z,r) \mc M^W_a(z,r) L_{ab}(z^2,r) \mc M^V_{b}(z)  = \\
\nonumber & \qq = \frac{p_-(z)}{1-q^2 z^4} \wt K^W_a(q^{-1} z,q^{-1} r) \mc M^W_a(q^{-1} z , q^{-1} r) (\tau(r) \ot \Id_{V^{\ot N}})
\end{align}
where we have introduced the coefficient polynomials
\eq{ \label{p:def}
\begin{aligned}
p_+(z) &:= (1 - z^4) (\tilde \xi  - q^2 z^2) (\xi - q^2 z^2)  \prod_{n=1}^N (1 - t_n^2 z^2)(1 - t_n^{-2} z^2) \\
p_-(z) &:= q^{2N} (1 - q^4 z^4) (1 - \tilde \xi z^2) (1 - \xi z^2) \prod_{n=1}^N (1 - q^2 t_n^2 z^2)(1 - q^2 t_n^{-2} z^2).
\end{aligned}
}
\end{lemma}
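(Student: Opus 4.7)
The plan is to combine the factorized expression for $\mc M^W_a(z,r) L_{ab}(z^2,r) \mc M^V_b(z)$ from Lemma \ref{lem:MW:RE} with the bulk fusion relations \eqref{L:iota}--\eqref{L:tau} and the four boundary fusion relations of Lemma \ref{KW:iotatau:lem}, and then simply propagate the fusion intertwiners $\iota(r) \ot \Id_{V^{\ot N}}$ and $\tau(r) \ot \Id_{V^{\ot N}}$ through the resulting long operator one factor at a time.

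For \eqref{iota:row} I would first apply \eqref{MRM1} with $y = z$ to rewrite the central operator as an alternating product of $2N$ bulk pairs $L_{an}(\cdot,r) R_{bn}(\cdot)$ flanking the central triple $K^W_a(z,r) L_{ab}(z^2,r) K^V_b(z)$. I would then push $\iota(r) \ot \Id$ through this product from right to left: each bulk pair is of exactly the form required by \eqref{L:iota} (with the $n$-th quantum space playing the role of the third tensor factor), so that $L_{an}(u,r) R_{bn}(u)$ gets converted into $L_{an}(qu, qr)$ on the far side of $\iota(r) \ot \Id$ while extracting the scalar $1 - u^2$; the central triple is handled by \eqref{KW:iota}, extracting $(1 - z^4)(\xi - q^2 z^2)$ and leaving $K^W_a(qz, qr)$. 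By the alternative expression for $\mc M^W$ noted just after \eqref{MW:def}, the $2N+1$ operators that end up to the right of $\iota(r) \ot \Id$ reassemble exactly into $\mc M^W_a(qz, qr)$. Finally I would act on the left by $\wt K^V_b(z) \wt L_{ab}(z^2,r) \wt K^W_a(z,r)$ and pass this prefactor through $\iota(r) \ot \Id$ via \eqref{KtW:iota}, extracting $(\tilde \xi - q^2 z^2)/(1 - q^2 z^4)$ and leaving $\wt K^W_a(qz, qr)$ multiplying $\mc M^W_a(qz, qr)$. Collecting all of the extracted scalars produces exactly $p_+(z)/(1 - q^2 z^4)$.

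The proof of \eqref{tau:row} is strictly dual: I would propagate $\tau(r) \ot \Id_{V^{\ot N}}$ from left to right through the whole operator, first through $\wt K^V_b(z) \wt L_{ab}(z^2,r) \wt K^W_a(z,r)$ via \eqref{KtW:tau}, then through the same factorization of $\mc M^W L \mc M^V$ supplied by \eqref{MRM1}, this time using \eqref{L:tau} for each of the $2N$ bulk pairs (each contributing a factor of $q$, producing the overall $q^{2N}$ appearing in $p_-(z)$) and \eqref{KW:tau} for the central triple. The $r^{-1}$ extracted from \eqref{KtW:tau} cancels the $r$ extracted from \eqref{KW:tau}; the $L$'s and $K^W$ accumulated to the right of $\tau(r) \ot \Id$ reconstitute $\mc M^W_a(q^{-1} z, q^{-1} r)$, and the remaining $z$-dependent scalars assemble into $p_-(z)/(1 - q^2 z^4)$.

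No conceptual obstacle appears once the fusion identities are in hand; the argument is essentially careful bookkeeping. The only point requiring attention is verifying that the operators left on the far side of the fusion intertwiners do reassemble into $\mc M^W_a(q^{\pm 1} z, q^{\pm 1} r)$ at the shifted parameters, which is immediate from the second explicit form of the quantum monodromy matrix noted just after \eqref{MW:def}, together with the fact that all scalars extracted along the way commute with every operator in sight.
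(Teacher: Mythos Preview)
Your proposal is correct and follows essentially the same route as the paper: use the factorization \eqref{MRM1} at $y=z$, propagate $\iota(r)\ot\Id$ (resp.\ $\tau(r)\ot\Id$) through the $2N$ bulk pairs via \eqref{L:iota} (resp.\ \eqref{L:tau}) and through the central triple via \eqref{KW:iota} (resp.\ \eqref{KW:tau}), reassemble the shifted monodromy matrix, and then handle the left boundary triple $\wt K^V_b \wt L_{ab} \wt K^W_a$ via \eqref{KtW:iota} (resp.\ \eqref{KtW:tau}), the cancellation of $r$ against $r^{-1}$ in the $\tau$ case being exactly as you note.
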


\begin{proof}
Take \eqref{MRM1} in the special case $y=z$.
Right-multiplying it by $\iota(r) \ot \Id_{V^{\ot N}}$, using \eqref{L:iota} $N$ times, \eqref{KW:iota} once and \eqref{L:iota} another $N$ times we arrive at 
\begin{align}
\label{iota:row:pre} & \mc M^W_a(z,r) L_{ab}(z^2,r) \mc M^V_{b}(z) (\iota(r) \ot \Id_{V^{\ot N}}) = \\
\nonumber & \qu = (1 - z^4) (\xi  - q^2 z^2) \left( \prod_{n=1}^N (1 - t_n^2 z^2)(1 - t_n^{-2} z^2) \right) (\iota(r) \ot \Id_{V^{\ot N}}) \mc M^W_a(qz,qr).
\end{align}
On the other hand, left-multiplying \eqref{MRM1} with $y=z$ by $\tau(r) \ot \Id_{V^{\ot N}}$, using \eqref{L:tau} $N$ times, \eqref{KW:tau} once and \eqref{L:tau} another $N$ times, we arrive at 
\begin{align}
\label{tau:row:pre} & (\tau(r) \ot \Id_{V^{\ot N}}) \mc M^W_a(z,r) L_{ab}(z^2,r) \mc M^V_{b}(z) = \\
\nonumber & \qu = q^{2N} (\xi z^2 - 1) \left( \prod_{n=1}^N (1 - q^2 t_n^2 z^2)(1 - q^2 t_n^{-2} z^2) \right) \mc M^W_a(q^{-1} z , q^{-1} r) (\tau(r) \ot \Id_{V^{\ot N}}).
\end{align}
Combining \eqrefs{iota:row:pre}{tau:row:pre} with \eqrefs{KtW:iota}{KtW:tau} we obtain \eqrefs{iota:row}{tau:row}.
\end{proof}

We record some more properties of $\mc M^W(z,r)$ which we will need later on.
As an immediate consequence of \eqrefs{R:def}{L:def}, \eqref{KV:def} and \eqref{KW:def} we obtain
\[
\left[ \Big( \begin{smallmatrix} 1 & 0 \\ 0 & y \end{smallmatrix} \Big)^{\ot 2}, R(z) \right] = \left[ \Big( \begin{smallmatrix} y^D & 0 \\ 0 & y^{D+1} \end{smallmatrix} \Big), L(z,r) \right] = \left[ \Big( \begin{smallmatrix} 1 & 0 \\ 0 & y \end{smallmatrix} \Big), K^V(z) \right] = \left[ y^D, K^W(z,r) \right] = 0
\]
which implies that for all $r \in \C^\times$ and $y,z \in \C$ we have
\eq{ \label{M:totalspin}
\left[ \Big( \begin{smallmatrix} 1 & 0 \\ 0 & y \end{smallmatrix} \Big) \ot \Big( \begin{smallmatrix} 1 & 0 \\ 0 & y \end{smallmatrix} \Big)^{\ot N} , \mc M^V(z) \right] = \left[ y^D \ot \Big( \begin{smallmatrix} 1 & 0 \\ 0 & y \end{smallmatrix} \Big)^{\ot N} , \mc M^W(z,r) \right] = 0.
}
From \eqref{L:def} we obtain
\[
L(z,r) = L(z,1) \Big( \begin{smallmatrix} r & 0 \\ 0 & 1 \end{smallmatrix} \Big) = \Big( \begin{smallmatrix} r & 0 \\ 0 & 1 \end{smallmatrix} \Big) r^{-D} L(z,1) r^D
\]
and from \eqref{KW:def} and \eqref{KtW:def} we have
\[
K^W(z,r) = r^{-D} K^W(z,1), \qq \wt K^W(z,r) =  \wt K^W(z,1) r^D
\]
and hence we have the following factorization for all $r \in \C^\times$, $z \in \C$:
\eq{ \label{MW:r-dependence}
\wt K^W(z,r) \mc M^W(z,r) =  \left( \Id \ot \Big( \begin{smallmatrix} r & 0 \\ 0 & 1 \end{smallmatrix} \Big)^{\ot N} \right) \wt K^W(z,1) \mc M^W(z,1) \left( \Id \ot \Big( \begin{smallmatrix} r & 0 \\ 0 & 1 \end{smallmatrix} \Big)^{\ot N} \right) .
}

\subsection{Definition and basic properties of the transfer matrices and the Q-operator}

The following three elements of $\End(V^{\ot N})$ are central to the notion of quantum integrability for the open XXZ Heisenberg spin chain.

\begin{defn}
Let $\bm t \in (\C^\times)^N$.
The \emph{transfer matrices} (with respect to auxiliary spaces $V$ and $W$, respectively) are the following linear maps on $V^{\ot N}$:
\eq{ \label{T:def} 
\mc T^V(z) := \Tr_a  \wt K^V_a(z) \mc M^V_a(z), \qq \qq \mc T^W(z) := \Tr_a \wt K^W_a(z,1) \mc M^W_a(z,1)  
}
where the label of the space over which we trace is indicated by a subscript.
Moreover the \emph{Q-operator} is the linear map
\begin{flalign}
\label{Q:def} 
\mc Q(z) := \Big( \begin{smallmatrix} z^2 & 0 \\ 0 & 1 \end{smallmatrix} \Big)^{\ot N} \mc T^W(z) \in \End(V^{\ot N}).
\end{flalign}
For now we treat these operators as formal power series in $z$.
\hfill \defnend
\end{defn}

\begin{rmk} \mbox{}
\begin{enumerate}
\item The operator $\mc T^W(z)$ corresponds, up to a scalar factor and a renumbering of the $t_n$, to the object defined in \cite[Eq. (G.24), $a=1$]{BTs18}.
\item Note that we have set $r=1$ in the definition of $\mc T^W(z)$ and hence $\mc Q(z)$.
The relations \eqrefs{iota:row}{tau:row}, where $r$ varies, will be used to derive Baxter's functional relation, see \eqref{QT:rel}, and to compensate for this it is necessary, as will become apparent, to introduce the extra diagonal matrix in the definition of $\mc Q(z)$.
\hfill \rmkend
\end{enumerate}
\end{rmk}

As a consequence of the properties
\eq{ 
\begin{gathered}
R(-z) = \bigg( \Big( \begin{smallmatrix} 1 & 0 \\ 0 & -1 \end{smallmatrix} \Big) \ot \Id_V \bigg) R(z)  \bigg( \Big( \begin{smallmatrix} 1 & 0 \\ 0 & -1 \end{smallmatrix} \Big) \ot \Id_V \bigg)  =  \bigg(\Id_V \ot \Big( \begin{smallmatrix} 1 & 0 \\ 0 & -1 \end{smallmatrix} \Big) \bigg) R(z) \bigg(\Id_V \ot \Big( \begin{smallmatrix} 1 & 0 \\ 0 & -1 \end{smallmatrix} \Big) \bigg), \\
L(-z,r) = \big( (-1)^D \ot \Id_V \big) L(z,r) \big( (-1)^D \ot \Id_V \big) = \bigg(\Id_W \ot \Big( \begin{smallmatrix} 1 & 0 \\ 0 & -1 \end{smallmatrix} \Big) \bigg) L(z,r) \bigg(\Id_W \ot \Big( \begin{smallmatrix} 1 & 0 \\ 0 & -1 \end{smallmatrix} \Big) \bigg), \\
K^V(-z) = K^V(z), \qq \wt K^V(-z) = \wt K^V(z) \\
K^W(-z,r) = K^W(z,r), \qq \wt K^W(-z,r) = \wt K^W(z,r)
\end{gathered}
}
and cyclicity of the trace we obtain that the matrix-valued formal power series $\mc Q(z)$, $\mc T^W(z)$, $\mc T^V(z)$ are invariant under $z \mapsto -z$ and hence we deduce that they are series in $z^2$.

Similarly, as a direct consequence of \eqref{M:totalspin}, the properties
\[
\left[ \Big( \begin{smallmatrix} y & 0 \\ 0 & 1 \end{smallmatrix} \Big), \wt K^V(z) \right] = \left[ y^D, \wt K^W(z,r) \right] = \left[ \Big( \begin{smallmatrix} y & 0 \\ 0 & 1 \end{smallmatrix} \Big),\Big( \begin{smallmatrix} y & 0 \\ 0 & 1 \end{smallmatrix} \Big) \right] = 0 
\]
and cyclicity of the trace we obtain
\eq{ \label{QT:totalspin}
\left[ \mc T^V(z),\Big( \begin{smallmatrix} y & 0 \\ 0 & 1 \end{smallmatrix} \Big)^{\ot N} \right] = 
\left[ \mc T^W(z), \Big( \begin{smallmatrix} y & 0 \\ 0 & 1 \end{smallmatrix} \Big)^{\ot N} \right] =
\left[ \mc Q(z),\Big( \begin{smallmatrix} y & 0 \\ 0 & 1 \end{smallmatrix} \Big)^{\ot N} \right] = 0
}
understood as an equation of matrix-valued formal power series in $y$ and $z$.
As a consequence, 
\[
[ \mc Q(y),\mc T^V(z) ] = \Big( \begin{smallmatrix} y^2 & 0 \\ 0 & 1 \end{smallmatrix} \Big)^{\ot N} \Big[ \mc T^W(y), \mc T^V(z) \Big], \qq
[ \mc Q(y),\mc Q(z) ] = \Big( \begin{smallmatrix} y^2 z^2 & 0 \\ 0 & 1 \end{smallmatrix} \Big)^{\ot N} \Big[ \mc T^W(y), \mc T^W(z) \Big].
\]

Note that $\mc T^V(z)$ is a well-defined linear operator on $V^{\ot N}$ whose entries depend polynomially on $z$ (they are finite sums of certain entries of products of matrices whose entries depend polynomially on $z$).
We summarize the above discussion in the following Lemma.

\begin{lemma} \label{lem:TQ:basicproperties}
The matrix $\mc T^V(z)$ depends polynomially on $z^2$.
The matrices $\mc T^W(z)$ and $\mc Q(z)$ are formal power series in $z^2$.
All these operators commute with $\Big( \begin{smallmatrix} y & 0 \\ 0 & 1 \end{smallmatrix} \Big)^{\ot N}$ for all $y \in \C$.
\end{lemma}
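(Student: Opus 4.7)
The plan is straightforward: each of the three claims follows directly from identities already displayed above the lemma, and the argument is essentially bookkeeping with no new algebraic input.

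For the first two claims I would combine two observations. First, every factor appearing in $\mc T^V(z) = \Tr_a \wt K^V_a(z) \mc M^V_a(z)$ has matrix entries that are polynomial in $z$ — the R-matrices via \eqref{R:def} and the K-matrices via \eqref{KV:def} and \eqref{KtV:def} — so the trace over the two-dimensional auxiliary space is a finite sum of polynomials and yields a polynomial in $z$; analogously, inspection of \eqref{L:def}, \eqref{KW:def}, \eqref{KtW:def} shows that $\mc T^W(z)$ and hence $\mc Q(z)$ via \eqref{Q:def} has entries that are formal power series in $z$. Second, one verifies $z \mapsto -z$ invariance: the six conjugation identities listed immediately before the lemma show that $R(-z)$ and $L(-z,r)$ arise from $R(z)$ and $L(z,r)$ by conjugation with $\Big( \begin{smallmatrix} 1 & 0 \\ 0 & -1 \end{smallmatrix} \Big)$ or $(-1)^D$ on each tensor factor, while $K^V, \wt K^V, K^W, \wt K^W$ are all manifestly even in $z$. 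Substituting $-z$ into \eqref{MV:def} and \eqref{MW:def}, each quantum site carries an even number of such conjugations — one from $R_{an}(t_n z)$ or $L_{an}(t_n z,r)$ and one from $R_{an}(z/t_n)$ or $L_{an}(z/t_n,r)$ — which cancel pairwise, while the auxiliary-space conjugations vanish under cyclicity of $\Tr_a$. Combined with the polynomiality/formality in $z$, this yields the claimed dependence on $z^2$; for $\mc Q(z)$ one additionally notes that $\Big( \begin{smallmatrix} z^2 & 0 \\ 0 & 1 \end{smallmatrix} \Big)^{\ot N}$ is itself even.

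The commutativity with $\Big( \begin{smallmatrix} y & 0 \\ 0 & 1 \end{smallmatrix} \Big)^{\ot N}$ is precisely the content of \eqref{QT:totalspin} from the preceding discussion: one pushes $\Big( \begin{smallmatrix} y & 0 \\ 0 & 1 \end{smallmatrix} \Big)^{\ot N}$ past $\mc M^V_a(z)$ or $\mc M^W_a(z,1)$ by applying \eqref{M:totalspin}, picking up an auxiliary-space factor $\Big( \begin{smallmatrix} y & 0 \\ 0 & 1 \end{smallmatrix} \Big)_a^{\pm 1}$ or $y_a^{\mp D}$; this factor commutes with the diagonal operator $\wt K^V_a(z)$ or $\wt K^W_a(z,1)$ (as recorded just above \eqref{QT:totalspin}) and is then eliminated by cyclicity of $\Tr_a$. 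The Q-operator inherits the commutativity from $\mc T^W(z)$ because its prefactor $\Big( \begin{smallmatrix} z^2 & 0 \\ 0 & 1 \end{smallmatrix} \Big)^{\ot N}$ is itself diagonal. There is no real obstacle to this proof; the only mildly subtle point — that the infinite sum defining $\Tr_a$ on $W$ gives a well-defined operator-valued formal power series in $z$ — is subsumed by the formal-power-series convention explicitly invoked in the definition of $\mc Q(z)$.
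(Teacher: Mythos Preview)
Your proposal is correct and follows essentially the same approach as the paper: the lemma is explicitly presented there as a summary of the discussion immediately preceding it, which invokes exactly the conjugation identities for $z\mapsto -z$, the commutation identities \eqref{M:totalspin} combined with cyclicity of the trace, and the polynomial/formal-power-series nature of the constituent matrices. One minor point: your description of the $z\mapsto -z$ argument mixes two routes (quantum-site conjugations ``cancelling pairwise'' versus auxiliary-space conjugations disappearing under cyclicity); either one alone suffices, and the cleanest is to use only the auxiliary-space form so that consecutive involutions cancel and the remaining outer pair commutes with the diagonal $\wt K$ and is absorbed by cyclicity.
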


The polynomiality of $\mc T^W(z)$ and $\mc Q(z)$ is not obvious.
For now we show that for suitable values of $q$, $\xi$ and $\tilde \xi$ and all but finitely many $z$ the matrix entries of $\mc T^W(z)$ and $\mc Q(z)$, are well-defined (i.e.~the series associated with the trace converges).
In order to do this, for $q,x$ inside the unit disk, $a,b \in \C$ and $c \in \C \backslash q^{2 \Z_{\le 0}}$ we consider the basic hypergeometric function (see e.g.~\cite[1.2]{GR04})
\[
\phi(a,b,c;x) := {}_2 \phi_1\left( {a, b \atop c} ; q^2 , x \right) = \sum_{j \in \Z_{\ge 0}} \frac{(a)_j (b)_j}{(q^2)_j (c)_j} x^j
\]
which converges absolutely, as a consequence of the ratio test.
Consider the open set
\eq{
\label{S:def} S^{(N)} := \{ (a,b,c) \in (\C^\times)^3 \, | \, |a|<1, \, |b c| < |a|^{2N} \}.
}
For $(q,\xi,\tilde \xi) \in S^{(N)}$, the set $\{ \pm \xi^{1/2}, \pm q \xi^{1/2}, \ldots, \pm q^{N-1} \xi^{1/2}, \pm q^{-1} \tilde \xi^{-1/2}, \pm q^{-2} \tilde \xi^{-1/2}, \ldots \} $ is discrete and we consider its complement
\eq{
\label{U:def} U^{(N)} := \C \backslash \{ \pm \xi^{1/2}, \pm q \xi^{1/2}, \ldots, \pm q^{N-1} \xi^{1/2}, \pm q^{-1} \tilde \xi^{-1/2}, \pm q^{-2} \tilde \xi^{-1/2}, \ldots \} .
}

\begin{thrm} \label{thm:Q:convergent}
The matrix entries of $\mc T^W(z)$ and $\mc Q(z)$ converge provided $(q, \xi , \tilde \xi) \in S^{(N)}$, $\bm t \in (\C^\times)^N$ and $z \in U^{(N)}$.
The resulting expressions depend holomorphically on each of $q,\xi,\tilde \xi$, Laurent polynomially on each $t_n$ and holomorphically on $z^2$; at each $z \in U^{(N)}$ the expressions have simple poles or removable singularities.
\end{thrm}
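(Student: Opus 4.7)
The plan is to reduce the theorem to a geometric-series bound on scalar matrix entries of $\mc T^W(z)$ and then to read off the analytic properties from locally uniform absolute convergence. In the standard bases $(v^I)_{I \in \{0,1\}^N}$ of $V^{\ot N}$ and $(w^j)_{j \geq 0}$ of $W$, each matrix entry takes the form
\[
(\mc T^W(z))_{IJ} = \sum_{j=0}^\infty \wt K^W(z,1)(j) \cdot \bigl(X_{IJ}(z)\bigr)_{jj},
\]
where $X_{IJ}(z) \in \End(W)$ is the $(I,J)$-matrix entry in $V^{\ot N}$ of the product $L_{a1}(t_1 z,1)\cdots L_{aN}(t_N z,1) K^W_a(z,1) L_{aN}(z/t_N,1)\cdots L_{a1}(z/t_1,1)$, and $\wt K^W(z,1)(j)$ is the $j$-th diagonal entry read from \eqref{KtW:def}. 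Since $\mc Q(z)$ differs from $\mc T^W(z)$ only by the polynomial prefactor in \eqref{Q:def}, it suffices to treat $\mc T^W$.

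The central step is the bound
\[
\bigl|\wt K^W(z,1)(j) \cdot (X_{IJ}(z))_{jj}\bigr| \leq C \bigl(|\xi\tilde\xi|/|q|^{2N}\bigr)^j,
\]
uniform on compact subsets of $S^{(N)} \times (\C^\times)^N \times U^{(N)}$. From \eqref{KtW:def} one has $|\wt K^W(z,1)(j)| = |q|^{j^2}|\tilde\xi|^j/|(q^2\tilde\xi z^2)_{j+1}|$; since $z \in U^{(N)}$ avoids the zero locus of $(q^2\tilde\xi z^2)_\infty$ and $|q|<1$, this infinite product converges absolutely to a nonzero limit, so $|(q^2\tilde\xi z^2)_{j+1}|$ is bounded below uniformly in $j$ by a positive constant. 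From \eqref{KW:def} one obtains $|K^W(z,1)(k)| \leq C_0 |q|^{-k^2}|\xi|^k$ by factoring $-q^{-2i}\xi$ out of each term of $\prod_{i=1}^k(z^2-q^{-2i}\xi)$ and noting the residual product converges. Expanding $X_{IJ}(z)$ as a finite sum of terms of the shape $A \cdot K^W(z,1) \cdot B$, with $A$ and $B$ each a noncommutative product of $N$ matrix entries of $L(\zeta,1)$, the oscillator-induced shifts place the intermediate index $k$ at which $K^W$ is evaluated in $[j-N,j+N]$; the $q^{\pm D}$-factors of the $L$-entries pair up to give contributions bounded in $j$, and the worst-case estimate $|(X_{IJ}(z))_{jj}| \leq C_1(z,\bm t) |q|^{-j^2-2Nj}|\xi|^j$ is attained when $B$ maximally raises the index to $j+N$ before $A$ returns it to $j$. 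Multiplying, the $|q|^{j^2}$-factors cancel and the summand is bounded by $C(|\xi\tilde\xi|/|q|^{2N})^j$; the defining inequality $|\xi\tilde\xi|<|q|^{2N}$ of $S^{(N)}$ then ensures absolute, locally uniform convergence.

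The remaining assertions follow from uniform convergence on compacta. Each summand is a rational function of $(q,\xi,\tilde\xi,z^2)$ (using the $z \mapsto -z$ symmetry recorded in Lemma \ref{lem:TQ:basicproperties}) whose only $z$-singularities lie at the excluded points $\pm q^{-k}\tilde\xi^{-1/2}$, where the linear factors of $(q^2\tilde\xi z^2)_{j+1}$ have simple zeros; Weierstrass's theorem then yields joint holomorphicity in $(q,\xi,\tilde\xi)$ and in $z^2$ on $U^{(N)}$, with at worst simple poles at the excluded points. Each summand is a Laurent polynomial in each $t_n$ of degree bounded between $-2$ and $2$, since only $L(t_n z,1)$ and $L(z/t_n,1)$ depend on $t_n$, each contributing bounded degree via the $(1,1)$-entry $(1-q^{2(D+1)}\zeta^2)q^{-D}$; this degree bound is preserved under summation, yielding a Laurent polynomial in $\bm t$.

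The main obstacle is the bookkeeping in the central bound: tracking the cumulative $q^{\pm D}$-weights through the $2N$-fold noncommutative product of $L$-entries, combined with the $K^W$-insertion at a shifted index, to verify the $|q|^{-j^2-2Nj}$ growth and no worse. The clean way to dispatch this is to write each $L(\zeta,1)$-entry as a bounded-degree linear combination of terms $(a^\dag)^{\epsilon_+} a^{\epsilon_-} q^{\mu D}$ with $\epsilon_\pm\in\{0,1\}$ and $\mu\in\{\pm 1\}$, normal-order using \eqref{oscq:relations}, and observe that the $q^D$-exponents coming from $A$ and $B$ pair up in cancelling combinations, leaving only the growth from the $K^W$-insertion at the extremal index $k = j + N$.
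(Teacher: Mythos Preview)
Your approach is genuinely different from the paper's. The paper does not bound the summands directly; instead it algebraically rearranges each matrix entry of $\wt K^W_a(z,1)\mc M^W_a(z,1)$, moving all factors $q^{\pm D}$ and the factor $K^W$ to canonical positions, so that (modulo $\Ker\Tr_W$) it becomes a \emph{finite} linear combination of operators of the form
\[
\frac{(q^{2(L+1)}z^2/\xi)_D}{(q^4\tilde\xi z^2)_D}\,(q^{2M}\xi\tilde\xi)^D,\qquad |L|\le N,\ -N\le M\le 2N,
\]
whose trace is a basic hypergeometric series ${}_2\phi_1(q^{2L}z^2/\xi,q^2,q^4\tilde\xi z^2;q^{2M}\xi\tilde\xi)$; convergence then follows from the ratio test since $|q^{2M}\xi\tilde\xi|\le |q^{-2N}\xi\tilde\xi|<1$. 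That closed form is reused in Appendices~\ref{sec:Q:diagonalentries}--\ref{sec:lowN}. Your direct-estimate route is more elementary and, incidentally, shows that the points $z^2=q^{2i}\xi$ ($0\le i\le N-1$) are not singularities of the individual summands at all (they enter the paper's proof only through an artificial factorisation of $(q^2z^2/\xi)_{D+j_-}$).

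However, your central bound has a real gap. The claim that the worst case is attained when $B$ maximally raises the index, and that ``the $q^D$-exponents coming from $A$ and $B$ pair up in cancelling combinations,'' is not correct as stated: if $A$ and $B$ consist entirely of the $q^{-D}$-part of $L^1_1$ there is no index shift ($K^W$ is evaluated at $k=j$) yet the summand already grows like $|q|^{-2Nj}(\xi\tilde\xi)^j$, and nothing cancels pairwise. What is actually needed is a combinatorial inequality. After splitting $L^1_1=q^{-D}-q^{D+2}\zeta^2$, every elementary factor in $A\cdot K^W\cdot B$ carries a sign $\epsilon\in\{\pm1\}$ (from $q^{\pm D}$) and a shift $\sigma\in\{-\epsilon,0\}$ (no factor has $\sigma=+\epsilon$). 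With $s_N$ the net shift of $B$ and $E$ the total sum of signs over all $2N$ factors, the $j$-linear part of the exponent after multiplying by $\wt K^W(j)$ is $j(E-2s_N)$, and your bound amounts to $E-2s_N\ge -2N$. This is true, but it requires an honest counting argument using $\sigma\in\{-\epsilon,0\}$ together with the closure constraint that the net shift of $A$ equals $-s_N$; it is not an ``observation''. The paper's rearrangement establishes the equivalent bound $M\ge -N$ by explicitly tracking the multiplicities $j(\gamma,\delta,\veps)$. Without supplying this inequality (or something equivalent) your geometric bound, and hence the convergence, is asserted rather than proved.
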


\begin{proof}
By \eqref{Q:def} it suffices to prove the result for $\mc T^W(z)$.

\emph{Claim:} Any matrix entry of $\wt K^W_a(z,1) \mc M^W_a(z,1)$ is of the form
\eq{ \label{Q:convergent:1}
\frac{1}{1-q^2 \tilde \xi z^2} \sum_{L,M \in \Z \atop |L| \le N, -N \le M \le 2N} c_{L,M}(z^2;\bm t) \frac{(q^{2(L+1)}z^2/\xi)_D}{(q^4 \tilde \xi z^2)_D} (q^{2M} \xi \tilde \xi)^D + \text{terms in } \Ker\Big(\Tr_W\Big)
}
for some $c_{L,M}(z^2;\bm t) \in \C$ which depends holomorphically on each of $q,\xi,\tilde \xi$, Laurent polynomially on each $t_n$ and meromorphically on $z^2$, with all poles, if any, simple and at $z^2 = q^{-2} \tilde \xi^{-1}$ and $z^2 = q^{2i} \xi$ for $i \in \{ 0,1,\ldots,N-1 \}$.
Note that owing to the ratio of q-Pochhammer symbols there may be simple poles at $z^2 = q^{-2k} \tilde \xi^{-1}$ for $k \in \Z_{\ge 2}$.
The claim implies that an arbitrary matrix entry of $\mc T^W(z)$ is of the form
\[ 
\sum_{L,M \in \Z \atop |L| \le N, -N \le M \le 2N} c_{L,M}(z^2;\bm t) \phi(q^{2L}z^2/\xi,q^2,q^4 \tilde \xi z^2;q^{2M}\xi \tilde \xi)
\]
so that the condition on the norm of $\xi \tilde \xi$ now guarantees convergence and the statement of the theorem follows.

To prove the claim, we analyse the occurrence of powers of $q^D$, $a$ and $a^\dag$ in matrix entries of $\mc M(z,1)$.
We start with a combinatorial analysis of how the matrix entries of the L-operators appear in these entries.
For $\ga,\del \in \{0,1\}$ we define matrix entries $L(z)_\ga^\del \in \End(W)$ as follows:
\[
L(z,1) (w \ot v^\ga) = \sum_{\del \in \{0,1\}} L(z)^\ga_\del(w) \ot v^\del.
\]
for all $w \in W$.
By \eqref{L:def} we have
\eq{ \label{L:entries}
\begin{array}{rlrl}
L(z)_0^0 &= q^D, & L(z)_0^1 &= -z q^{-D} a^\dag, \\
L(z)_1^0 &= -z a q^{D+1}, & L(z)_1^1 &= q^{-D} (1- q^{2(D+1)} z^2).
\end{array}
}
Matrix entries of $\mc M(z,1)$ are given by
\eq{ \label{M:entry}
\mc M(z)^{\veps_1, \ldots, \veps_N}_{\ga_1, \ldots, \ga_N} = \sum_{\del_1,\ldots,\del_N \in \{0,1\}} L(t_1 z)_{\ga_1}^{\del_1} \cdots L(t_N z)_{\ga_N}^{\del_N} K^W(z,1) L(z/t_N)_{\del_N}^{\veps_N} \cdots L(z/t_1)_{\del_1}^{\veps_1}
}
Consider one of the terms in this sum and let $n \in \{1,2,\ldots,N\}$ be arbitrary.
We focus our attention on the factors $L(t_n z)_{\ga_n}^{\del_n}$ and $L(z/t_n)_{\del_n}^{\veps_n}$ in the summand
\eq{ \label{summand}
L(t_1 z)_{\ga_1}^{\del_1} \cdots L(t_n z)_{\ga_n}^{\del_n} \cdots L(t_N z)_{\ga_N}^{\del_N} K^W(z,1) L(z/t_N)_{\del_N}^{\veps_N} \cdots L(z/t_n)_{\del_n}^{\veps_n} \cdots L(z/t_1)_{\del_1}^{\veps_1}.
}
There are 8 possibilities for each pair $(L(t_n z)_{\ga_n}^{\del_n},L(z/t_n)_{\del_n}^{\veps_n})$ and these are enumerated by triples $(\ga_n,\del_n,\veps_n) \in \{ 0,1\}^3$.
Now letting $n$ runs through $\{1,\ldots,N\}$, we denote by $j(\ga,\del,\veps) \in \Z_{\ge 0}$ the number of pairs $( L(t_n z)_{\ga}^{\del},  L(z/t_n)_{\del}^{\veps} )$ occurring in the summand \eqref{summand}; in other words $j(\ga,\del,\veps)$ is the cardinality of the set $\{ n \, | \, (\ga_n,\del_n,\veps_n) = (\ga,\del,\veps) \}$.
For instance, for $N=2$ the summand $L(t_1 z)_0^0 L(t_2 z)_0^0 K^W(z,1) L(z/t_2)_0^1 L(z/t_1)_0^0$ has $j(0,0,0)=j(0,0,1)=1$ with the other $j(\ga,\del,\veps)$ equal to zero.
Note that the sum of all or some of the $j(\ga,\del,\veps)$ is bounded above by $N$.
Therefore each matrix entry of $\mc M(z,1)$ is a sum of terms of the form
\[
A^-(z;\bm t) K^W(z,1) A^+(z;\bm t)
\]
where $A^-(z;\bm t)$ is a product, in some order, of various factors $q^D$, $-t_n z q^{-D} a^\dag$, $-t_n z a q^{D+1}$ and $q^{-D}(1 - q^{2(D+1)} t_n^2 z^2)$.
By the definition of $j(\ga,\del,\veps)$ we know precisely how many there are of each: 
$j(0,0,0) + j(0,0,1)$ factors $q^D$, $j(0,1,0)+j(0,1,1)$ factors $-t_n z q^{-D} a^\dag$, $j(1,0,0)+j(1,0,1)$ factors $-t_n z a q^{D+1}$ and $j(1,1,0)+j(1,1,1)$ factors $q^{-D}(1 - q^{2(D+1)} t_n^2 z^2)$.
Similarly, $A^+(z;\bm t)$ is a product, in some order, of $j(0,0,0)+ j(1,0,0)$ factors $q^D$, $j(0,0,1)+j(1,0,1)$ factors $-t_n^{-1} z q^{-D} a^\dag$, $j(0,1,0)+j(1,1,0)$ factors $-t_n^{-1} z a q^{D+1}$ and $j(0,1,1)+j(1,1,1)$ factors $q^{-D}(1 - q^{2(D+1)} t_n^{-2} z^2)$.
Since $\Tr_W a^m f(D) = \Tr_W (a^\dag)^m f(D) = 0$ for any $m \in \Z_{>0}$ and any $f \in \mc F$, only terms that satisfy
\[
j(0,0,1) + j(0,1,1) = j(1,0,0) + j(1,1,0)
\]
will contribute to the trace over $W$.
As a consequence, the power of $z$ coming from the factors proportional to $a$ and $a^\dag$ is even so that the $z$-dependence of arbitrary matrix entries of $\mc M^W(z,1)$ that contribute to the trace over $W$ is only through $z^2$.
Next we use the relations \eqref{oscq:relations} to move $K^W(z,1) = q^{-D^2}  (-\xi)^D (q^2 z^2/\xi)_D $ to the left, all overall powers of $q^D$ in $A^-(z;\bm t)$ to the left and all overall powers of $q^D$ in $A^+(z;\bm t) $ to the right.
We do not move factors of the form $1 - q^{2(D+1)} t_n^{\pm 2} z^2$.
It yields that any matrix entry of $\mc M^W(z,1)$ is a linear combination, with coefficients depending polynomially on $z^2$, Laurent polynomially on $q$ and Laurent polynomially on each $t_n$, of expressions of the form
\[
(q^2 z^2/\xi)_{D+j_-} q^{-(D+j_-)^2} q^{(j_+ + j_-)D} (-\xi)^D \wt A(z^2;\bm t) q^{j_{\rm alt} D} 
\]
plus terms in the kernel of $\Tr_W$.
Here we have introduced the shorthand notations
\begin{align*}
j_- &:= -j(0,1,0) - j(0,1,1) + j(1,0,0) + j(1,0,1), \\
j_+ &:= j(0,0,0) + j(0,0,1) - j(1,1,0) - j(1,1,1), \\
j_{\rm alt} &:= j(0,0,0) - j(0,0,1) + j(0,1,0) - j(0,1,1) + j(1,0,0) - j(1,0,1) + j(1,1,0) - j(1,1,1)
\end{align*}
and $\wt A(z^2;\bm t)$ a product, in some order, of 
\[
j(0,0,1)+j(0,1,0)+j(0,1,1)+j(1,0,1) = j(0,1,0)+j(1,0,0)+j(1,0,1)+j(1,1,0)
\]
factors $a$, the same number of factors $a^\dag$, $j(1,1,0)+j(1,1,1)$ factors of the form $1-q^{2(D+1)} t_n^2 z^2$ and $j(0,1,1)+j(1,1,1)$ factors of the form $1-q^{2(D+1)} t_n^{-2} z^2$.

Using \eqref{oscq:relations} once again, we deduce that, for any $m \in \Z_{\ge 0}$, $(a^\dag)^m a^m$ and $a^m (a^\dag)^m$ are linear combinations of $1, q^{2D}, q^{4D} \ldots, q^{2mD}$.
Since the total number of pairs $\{ a,a^\dag\}$ plus the total number of factors of the form $1-q^{2(D+1)} t_n^{\pm 2} z^2$ is bounded above by $N$, it follows that 
\[
\wt A(z^2;\bm t) = \sum_{M'=0}^N c_{M'}(z^2;\bm t) q^{2M'D}
\]
where $c_{M'}(z^2;\bm t) \in \C$ depends polynomially on $q^2$ and $z^2$ and Laurent polynomially on each $t_n^2$, so that, up to terms in $\Ker(\Tr_W)$, any matrix entry of $\mc M^W(z,1)$ is a linear combination, with coefficients depending polynomially on $z^2$, Laurent polynomially on $q$ and Laurent polynomially on each $t_n$, of expressions of the form
\[
(q^2 z^2/\xi)_{D+j_-} q^{-(D+j_-)^2} q^{(j_++j_-)D} (-\xi)^D \sum_{M'=0}^N c_{M'}(z^2;\bm t) q^{(2M'+j_{\rm alt})D} .
\]
We denote
\[
j_0 := \frac{1}{2}(j_{\rm alt} + j_+ - j_-) = j(0,0,0) + j(0,1,0) - j(1,0,1) - j(1,1,1).
\]
Re-arranging the powers of $q$ we obtain that, modulo terms in $\Ker(\Tr_W)$, any matrix entry of $\mc M^W(z,1)$ is a linear combination, with coefficients depending polynomially on $z^2$, Laurent polynomially on $q$ and Laurent polynomially on each $t_n$, of expressions of the form
\[
(q^2 z^2/\xi)_{D+j_-} q^{-D^2} \sum_{M'=0}^N c_{M'}(z^2;\bm t) q^{2(M' + j_0)D} (-\xi)^D.
\]

Recall from \eqref{KtW:def} that $\wt K^W(z,1) = (1-q^2 \tilde \xi z^2)^{-1} q^{D^2} (-\tilde \xi)^D (q^4 \tilde \xi z^2)_D^{-1}$.
Also note the basic property $(x)_{j+k} = (x)_j (q^{2j} x)_k$ for all $j,k \in \Z$, see \cite[(1.2.33)]{GR04}.
Hence we deduce that an arbitrary matrix entry of $\wt K^W(z,1) \mc M^W(z,1)$ is a linear combination of expressions of the form
\[
 (q^2 z^2/\xi)_{j_-} \frac{(q^{2(j_-+1)} z^2/\xi)_D}{(q^4 \tilde \xi z^2)_D} \sum_{M'=0}^N c_{M'}(z^2;\bm t) (q^{2(j_0 + M')} \xi \tilde \xi)^{D} 
\]
modulo $\Ker(\Tr_W)$.
Note that the function $(q^2 z^2/\xi)_{j_-}$ depends holomorphically on $z^2$ if $j_- \ge 0$ and has simple poles at 
\[
z^2 = q^{2i} \xi, \qq i \in \{0,1,\ldots, -j_- - 1 \}
\]
if $j_- < 0$.
Since $|j_-|\le N$ and $-N \le j_0 + M' \le 2N$ we obtain bounds for the summation variables and arrive at the claim.
\end{proof}

In the remainder of the paper we will assume $(q, \xi , \tilde \xi) \in S^{(N)}$.
We end this section with the following result.

\begin{lemma} \label{lem:Q:invertible}
For generic $z \in U^{(N)}$, i.e.~all $z$ in the complement of a discrete subset of $U^{(N)}$, $\mc Q(z)$ is invertible.
\end{lemma}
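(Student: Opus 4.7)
The plan is to reduce invertibility of $\mc Q(z)$ to that of $\mc T^W(z)$ and then exhibit a single value of $z$ at which $\mc T^W(z)$ is invertible.
Since $\mc Q(z) = \Big( \begin{smallmatrix} z^2 & 0 \\ 0 & 1 \end{smallmatrix} \Big)^{\ot N} \mc T^W(z)$ and the diagonal prefactor has determinant $(z^2)^{N 2^{N-1}}$, one has $\det \mc Q(z) = (z^2)^{N 2^{N-1}} \det \mc T^W(z)$.
By Theorem \ref{thm:Q:convergent} each matrix entry of $\mc T^W(z)$ is meromorphic in $z^2$ on $U^{(N)}$, and hence so is $\det \mc T^W(z)$.
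It therefore suffices to produce a single point $z^* \in U^{(N)}$ at which $\mc T^W(z^*)$ is invertible: the zero locus of $\det \mc T^W$ in $U^{(N)}$ will then be discrete, and the claim follows.

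I propose to take $z^* = 0$, which belongs to $U^{(N)}$ since all points excluded in \eqref{U:def} are nonzero.
At $z = 0$, the formulas \eqref{L:def}, \eqref{KW:def} and \eqref{KtW:def} reduce to
\[
L(0,1) = \begin{pmatrix} q^D & 0 \\ 0 & q^{-D} \end{pmatrix}, \qq K^W(0,1) = q^{-D^2}(-\xi)^D, \qq \wt K^W(0,1) = q^{D^2}(-\tilde \xi)^D,
\]
all functions of $D$ on the auxiliary $W$ (hence pairwise commuting on $W$) and diagonal in every quantum space.
Let $V^{\ot N}_k$ denote the subspace of $V^{\ot N}$ spanned by basis vectors $v^{\veps_1} \ot \cdots \ot v^{\veps_N}$ with exactly $k$ factors of $v^1$.
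On $V^{\ot N}_k$, each $L_{a,n}(0,1)$ acts as the scalar $W$-operator $q^{(1-2\veps_n)D_a}$, so the full product defining $\mc M^W_a$ in \eqref{MW:def} telescopes to
\[
\wt K^W_a(0,1)\, \mc M^W_a(0,1)\big|_{V^{\ot N}_k} = q^{2(N-2k)D_a}\, (\xi \tilde \xi)^{D_a}.
\]

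Taking the trace over $W$ yields a convergent geometric series, the standing assumption $(q, \xi, \tilde\xi) \in S^{(N)}$ forcing $|q^{2(N-2k)} \xi \tilde \xi| < 1$ for every $k \in \{0, 1, \ldots, N\}$ (the extremal case $k = N$ uses precisely the bound $|\xi \tilde \xi| < |q|^{2N}$, and the other cases are weaker).
The resulting scalar
\[
\mc T^W(0)\big|_{V^{\ot N}_k} = \big( 1 - q^{2(N-2k)} \xi \tilde \xi \big)^{-1} \Id_{V^{\ot N}_k}
\]
is finite and nonzero on every weight subspace, so $\mc T^W(0)$ is invertible and the argument is complete.
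The only real work lies in verifying the commuting structure and telescoping at $z=0$; this is immediate since each of the three operators displayed above is a function of $D$ on $W$, and reordering the two blocks of $L$-operators in $\mc M^W_a(0,1)$ merely permutes mutually commuting diagonal operators.
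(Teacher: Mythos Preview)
Your proof is correct and follows essentially the same approach as the paper: you reduce to invertibility of $\mc T^W(z)$, evaluate explicitly at $z=0$ to obtain a diagonal operator with nonzero eigenvalues $(1-q^{2(N-2k)}\xi\tilde\xi)^{-1}$ on each weight block (the paper phrases this as $\mc T^W(0) = (1-q^{2\Si}\xi\tilde\xi)^{-1}$ via the total spin operator $\Si$), and then use the holomorphicity from Theorem~\ref{thm:Q:convergent} to conclude that $\det \mc T^W$ has discrete zero set on $U^{(N)}$. The explicit verification of the telescoping at $z=0$ and of the convergence bound $|q^{2(N-2k)}\xi\tilde\xi|<1$ for every $k$ is a little more detailed than the paper's version but amounts to the same computation.
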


\begin{proof}
Consider the total spin operator
\eq{ \label{Sigma:def}
\Si := \left( \begin{smallmatrix} 1 & 0 \\ 0 & -1 \end{smallmatrix} \right)_1 + \left( \begin{smallmatrix} 1 & 0 \\ 0 & -1 \end{smallmatrix} \right)_2 + \cdots + \left( \begin{smallmatrix} 1 & 0 \\ 0 & -1 \end{smallmatrix} \right)_N \in \End(V^{\ot N}).
}
Given that $|\xi \tilde \xi|< |q|^{2N}$, formulas \eqref{L:def}, \eqref{KW:def} and \eqref{KtW:def} imply
\eq{
\mc T^W(0) = \Tr_W \Big( q^{2 \Si} \xi \tilde \xi \, \Big)^D = \frac{1}{1 - q^{2 \Si} \xi \tilde \xi } .
}
In particular we have $\xi \tilde \xi \notin \{ q^{2N}, q^{2N-4}, \ldots, q^{-2N+4}, q^{-2N} \}$ so that $\mc T^W(0)$ is a well-defined invertible linear map.
Note that $\det \mc T^W(z)$ depends polynomially on the matrix entries of $\mc T^W(z)$.
By Theorem \ref{thm:Q:convergent} each entry depends holomorphically on $z \in U^{(N)}$ and hence $\det \mc T^W(0) \ne 0$ implies $\det \mc T^W(z) \ne 0$ generically.
Therefore $\mc T^W(z)$ is invertible for generic values of $z$, as is $\left( \begin{smallmatrix} z^2 & 0 \\ 0 & 1 \end{smallmatrix} \right)^{\ot N}$.
We obtain the desired conclusion.
\end{proof}


\section{Baxter's TQ relation}

In this section we derive the major result of this paper, namely Baxter's relation for the matrices $\mc T^V(z)$ and $\mc Q(z)$.
We start with proving some commutativity properties of the families $\{ \mc T^V(z) \}_{z\in \C}$ and $\{ \mc T^W(z) \}_{z \in \C}$.

\subsection{Commutativity}

Sklyanin's argument of commuting two-row transfer matrices involves extending the tensor product $V^{\ot N}$, the domain of the operators $\mc T^{V,W}(z)$ by a tensor product of two auxiliary spaces labelled $a$ and $b$.
This auxiliary tensor product is $W \ot V$ in the case of $[\mc T^W(y),\mc T^V(z)]=0$ and $V \ot V$ in the case of $[\mc T^V(y),\mc T^V(z)]=0$.
The proof we give here for $[\mc Q(y),\mc T^V(z)] = 0$ is based on the well-documented one discussed in \cite{Sk88,MN91,FShHY97,Vl15,NR18} but arranged, following Frassek and Sz\'{e}cs\'{e}nyi's approach in \cite{FSz15}, in such a way that partial transpositions are not taken with respect to the infinite-dimensional vector space $W$.

\begin{thrm} \label{thm:commute}
Let $y \in U^{(N)}$ and $z \in \C \backslash \{ \pm q^{-1} y^{-1} \}$.
Then
\eq{\label{QT:intermediate} 
\mc T^W(y) \mc T^V(z) =  \Tr_{a, b} \wt K^V_{b}(z) \wt L_{ab}(yz,1) \wt K^W_a(y,1) \mc M^W_a(y,1) L_{ab}(yz,1) \mc M^V_{b}(z)
}
where in the tensor product $W \ot V \ot V^{\ot N}$ we have labelled the factor $W$ by $a$ and the first factor $V$ by $b$, and written $\Tr_{a,b}$ for the partial trace with respect to $W \ot V$.
Furthermore, for all $y \in U^{(N)}$ and $z \in \C$ we have
\eq{ \label{QT:commute} 
[\mc Q(y),\mc T^V(z)] = 0
}
and for all $y,z \in \C$ we have
\eq{ \label{TT:commute} 
[\mc T^V(y),\mc T^V(z)] = 0.
}
\end{thrm}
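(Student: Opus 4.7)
The plan is to follow Sklyanin's proof of commutativity of open-chain transfer matrices, arranged \`a la Frassek--Sz\'ecs\'enyi so that partial transpositions are performed only on the finite-dimensional auxiliary $V$ and never on $W$.

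For \eqref{QT:intermediate} I would rewrite
\begin{equation*}
\mc T^W(y) \mc T^V(z) = \Tr_{a,b}\bigl[\wt K^W_a(y,1) \mc M^W_a(y,1) \wt K^V_b(z) \mc M^V_b(z)\bigr]
\end{equation*}
and manipulate using cyclicity of $\Tr_b$ together with the identity $L_{ab}(yz,1)^{t_b}\,\wt L_{ab}(yz,1)^{t_b} = \Id$, which is immediate from $\wt L(z,r)^{t_2} = (L(z,r)^{t_2})^{-1}$. Taking a partial transpose on the finite-dimensional auxiliary $b$, inserting this factor of the identity in the appropriate place, reshuffling via $(AB)^{t_b} = B^{t_b} A^{t_b}$, and unfolding the transpose produces the right-hand side of \eqref{QT:intermediate}. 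The entirely analogous manipulation applied to $\mc T^V(z)\mc T^W(y)$ yields the companion identity
\begin{equation*}
\mc T^V(z)\mc T^W(y) = \Tr_{a,b}\bigl[\wt K^W_a(y,1) \wt L_{ab}(yz,1) \wt K^V_b(z) \mc M^V_b(z) L_{ab}(yz,1) \mc M^W_a(y,1)\bigr].
\end{equation*}

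For \eqref{QT:commute}, I would start from the right-hand side of \eqref{QT:intermediate} and insert $\Id = L_{ab}(y/z,1)^{-1} L_{ab}(y/z,1)$ immediately to the left of $\mc M^W_a(y,1)$. The reflection equation \eqref{MW:RE} then swaps $\mc M^W_a(y,1)$ with $\mc M^V_b(z)$ and produces an additional $L_{ab}(y/z,1)$ on the far right. Next, the dual reflection equation \eqref{KtW:RE} moves $L_{ab}(y/z,1)^{-1}$ to the left past the block $\wt K^V_b(z) \wt L_{ab}(yz,1) \wt K^W_a(y,1)$, after which cyclicity of $\Tr_{a,b}$ cancels the two $L_{ab}(y/z,1)^{\pm 1}$ factors, leaving precisely the companion identity above. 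Hence $\mc T^W(y)\mc T^V(z) = \mc T^V(z)\mc T^W(y)$ for generic $z$, and holomorphicity of both sides in $z$ (via Theorem \ref{thm:Q:convergent} and Lemma \ref{lem:TQ:basicproperties}) extends the equality to all $z \in \C$. Multiplication by $\bigl(\begin{smallmatrix} y^2 & 0 \\ 0 & 1 \end{smallmatrix}\bigr)^{\ot N}$, which commutes with $\mc T^V(z)$ by Lemma \ref{lem:TQ:basicproperties}, gives $[\mc Q(y), \mc T^V(z)] = 0$. The identity \eqref{TT:commute} is proved by the same strategy with both auxiliaries finite-dimensional: the analog of \eqref{QT:intermediate} now involves $\wt R_{ab}$ and $R_{ab}$ in place of $\wt L_{ab}$ and $L_{ab}$, coming from $R(z)^{t_b} \wt R(z)^{t_b} = \Id$, while the swap step uses Sklyanin's reflection equation for $\mc M^V_a,\mc M^V_b$ together with the dual reflection equation \eqref{KtV:RE}; this reduces to the classical Sklyanin calculation.

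The main technical obstacle is the partial-transpose insertion in the first step: every manipulation must be arranged so that transposition is applied only on the finite-dimensional $V$ and never on the infinite-dimensional $W$, where partial transpose is not naturally defined. This careful bookkeeping is precisely what the Frassek--Sz\'ecs\'enyi reformulation makes possible.
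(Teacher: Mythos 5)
Your proposal is correct and takes essentially the same route as the paper: combine the two partial traces, insert $L_{ab}(yz,1)^{t_b}\wt L_{ab}(yz,1)^{t_b}=\Id$ after transposing only on the finite-dimensional auxiliary $b$ (never on $W$, \`a la Frassek--Sz\'ecs\'enyi) to reach \eqref{QT:intermediate}, then insert $L_{ab}(y/z,1)^{\pm1}$, apply the reflection equations \eqref{MW:RE} and \eqref{KtW:RE}, and unwind by the reverse manipulation, finally extending from generic $z$ to all $z\in\C$ using that $[\mc T^W(y),\mc T^V(z)]$ is a polynomial in $z$. One caveat: the formula $(AB)^{t_b}=B^{t_b}A^{t_b}$ you invoke is not valid in general; it holds at each step here only because $\wt K^W_a$, $\mc M^W_a$ are $t_b$-invariant and the other factors act on disjoint subsets of $\{a,\,V^{\ot N}\}$, which is precisely the bookkeeping hidden in the paper's phrase ``standard properties of the partial transpose.''
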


\begin{proof}
We will establish \eqref{QT:commute} by proving $[\mc T^W(y),\mc T^V(z)]=0$; in the process we will also prove \eqref{QT:intermediate}.
The proof of \eqref{TT:commute} is analogous to the proof of \eqref{QT:commute} and was already given in \cite{Sk88}.

Let $y \in U^{(N)}$ and $z \in \C \backslash \{ \pm q^{-1} y^{-1} \}$.
Then $L(yz,1)^{\t_b} \in \End(W \ot V)$ is invertible, see \eqref{L:T2}.
We have
\begin{align*}
\mc T^W(y) \mc T^V(z) &= \Tr_a \wt K^W_a(y,1) \mc M^W_a(y,1) \Tr_{b} \mc M^V_{b}(z)^{\t_{b}} \wt K^V_{b}(z)^{\t_{b}} \\
&= \Tr_{a,b} \wt K^W_a(y,1) \mc M^W_a(y,1) \mc M^V_{b}(z)^{\t_{b}}  \wt K^V_{b}(z)^{\t_{b}} \\
&= \Tr_{a,b} \wt K^V_{b}(z)^{\t_{b}} \wt K^W_a(y,1) \mc M^W_a(y,1) \mc M^V_{b}(z)^{\t_{b}}  \\
&= \Tr_{a,b} \wt L_{ab}(yz,1)^{\t_{b}} \wt K^V_{b}(z)^{\t_{b}} \wt K^W_a(y,1) \mc M^W_a(y,1) \mc M^V_{b}(z)^{\t_{b}} L_{ab}(yz,1)^{\t_{b}}
\end{align*}
where we have inserted $L(yz,1)^{\t_b} \wt L(yz,1)^{\t_b} = \Id_{W \ot V}$ and used cyclicity of the trace twice.
Standard properties of the partial transpose yield
\begin{align*} 
\mc T^W(y) \mc T^V(z) &= \Tr_{a,b} \left( \wt K^V_{b}(z) \wt L_{ab}(yz,1) \wt K^W_a(y,1) \right)^{\t_{b}} \left( \mc M^W_a(y,1) L_{ab}(yz,1) \mc M^V_{b}(z) \right)^{\t_{b}}  \\
&= \Tr_{a,b} \wt K^V_{b}(z) \wt L_{ab}(yz,1) \wt K^W_a(y,1) \mc M^W_a(y,1) L_{ab}(yz,1) \mc M^V_{b}(z)
\end{align*}
and we obtain \eqref{QT:intermediate}.
Now additionally assume that $z \ne \pm y$ so that $L(y/z,1) \in \End(W \ot V)$ is invertible.
We insert $L(y/z,1)^{-1} L(y/z,1) = \Id_{W \ot V}$ and obtain
\begin{align*} 
\mc T^W(y) \mc T^V(z) &= \Tr_{a,b} \wt K^V_{b}(z) \wt L_{ab}(yz,1) \wt K^W_a(y,1) L_{ab}(y/z,1)^{-1} L_{ab}(y/z,1) \mc M^W_a(y,1) L_{ab}(yz,1) \mc M^V_{b}(z) \\
&= \Tr_{a,b} L_{ab}(y/z,1)^{-1} \wt K^W_a(y,1) \wt L_{ab}(yz,1) \wt K^V_{b}(z) \mc M^V_{b}(z) L_{ab}(yz,1) \mc M^W_a(y,1) L_{ab}(y/z,1);
\end{align*}
here we have used \eqref{KtW:RE} and \eqref{MW:RE}.
Using cyclicity of the trace and restoring the partial transpositions we obtain
\begin{align*} 
\mc T^W(y) \mc T^V(z) &= \Tr_{a,b} \left( \wt K^W_a(y,1) \wt L_{ab}(yz,1) \wt K^V_{b}(z) \right)^{\t_{b}}  \left( \mc M^V_{b}(z) L_{ab}(yz,1) \mc M^W_a(y,1) \right)^{\t_{b}} \\
&= \Tr_{a,b} \wt K^W_a(y,1) \wt K^V_{b}(z)^{\t_{b}} \wt L_{ab}(yz,1)^{\t_{b}} L_{ab}(yz,1)^{\t_{b}} \mc M^V_{b}(z)^{\t_{b}} \mc M^W_a(y,1).
\end{align*}
Using cyclicity of the trace again, we arrive at
\begin{align*} 
\mc T^W(y) \mc T^V(z) &= \Tr_{a,b} \wt K^W_a(y,1) \wt K^V_{b}(z)^{\t_{b}} \mc M^V_{b}(z)^{\t_{b}} \mc M^W_a(y,1) \\
&= \Tr_{a,b} \wt K^V_{b}(z)^{\t_{b}} \mc M^V_{b}(z)^{\t_{b}} \mc M^W_a(y,1) \wt K^W_a(y,1)   \\
&= \Tr_{b} \wt K^V_{b}(z)^{\t_{b}} \mc M^{V}_{b}(z)^{\t_{b}} \Tr_a \mc M^W_a(y,1) \wt K^W_a(y,1) \\
&= \mc T^V(z) \mc T^W(y)
\end{align*}
as required.
For $y \in U^{(N)}$ and $z \in \C$ the matrix $\mc D(y,z) := [\mc T^W(y), \mc T^V(z) ] \in \End(V^{\ot N})$ is well-defined and depends polynomially on $z$.
For all $y \in U^{(N)}$ we have shown that $\mc D(y, z)=0$ for all but finitely many $z$; hence $\mc D(y,z)=0$ for all $z \in \C$.
This completes the proof.
\end{proof}

The proof of Theorem \ref{thm:commute} cannot easily be modified to show $[\mc Q(y),\mc Q(z)]=0$.
We can define suitable solutions of infinite-dimensional Yang-Baxter equations, namely $R^{++}(y/z) \in \End(W \ot W)$ constructed in \cite[Eq.~(A.3)]{BJMST09} and
\[
R^{+-}(y/z) := (\rho^+_{y,1} \ot \rho^-_{z,1})(\mc{R}) \in \End(W \ot W).
\]
However $R^{+-}(z)^{\t_b}$ is not a well-defined linear operator; some columns of the matrix of $R^{+-}(z)^{\t_b}$ with respect to the ordered basis $(w^0 \ot w^0 , w^0 \ot w^1, \ldots, w^1 \ot w^0 , w^1 \ot w^1, \ldots, \ldots)$ of $W \ot W$ have infinitely many nonzero entries.
Therefore $\wt R^{+-}(z) := ((R^{+-}(z)^{\t_b})^{-1})^{\t_b}$ is ill-defined and we cannot insert the identity $R^{+-}(yz)^{\t_b} \wt R^{+-}(yz)^{\t_b} = \Id_{W \ot W}$.

\subsection{Baxter's TQ relation through trace decomposition in a short exact sequence}

Having established that $\mc T^V(z)$ and $\mc Q(z)$ are well-defined elements of $\End(V^{\ot N})$ for generic values of $z$ which mutually commute, we consider the crucial functional relation satisfied by them.
With our preparations, this is now a simple consequence of the fact that short exact sequences of vector spaces are split.

\begin{lemma} \label{lem:SES}
Suppose we have a short exact sequence of vector spaces
\begin{equation} \label{SES}
\begin{tikzcd}
0 \arrow[r] & A \arrow[r,"\iota"] & B \arrow[r,"\tau"]  & C \arrow[r] & 0 
\end{tikzcd}
\end{equation}
where $B$ is assumed to have a basis.
Then there exist $\iota' \in \End(B,A)$ and $\tau' \in \End(C,B)$ such that $\iota' \circ \iota = \id_A$, $\tau \circ \tau' = \id_C$ and $\iota' \circ \tau' = 0$.
Given $\theta \in \End(B)$ we have
\eq{ \label{splittrace}
\Tr_B \theta = \Tr_A \iota' \circ \theta \circ \iota + \Tr_C \tau \circ \theta \circ \tau'
}
if the traces are well-defined.
Moreover, if there exists $\theta_A \in \End(A)$ such that $\theta \circ \iota = \iota \circ \theta_A$ then
\eq{ \label{splittrace:2}
\Tr_B \theta = \Tr_A \theta_A + \Tr_C \theta_C
}
where $\theta_C := \tau \circ \theta \circ \tau' \in \End(C)$.
\end{lemma}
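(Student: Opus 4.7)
The plan is to split into three stages: first construct $\iota'$ and $\tau'$ using the fact that short exact sequences of vector spaces split, then verify the trace identity by choosing a compatible basis of $B$, and finally derive \eqref{splittrace:2} as an immediate corollary of \eqref{splittrace}.

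For the first stage, since $B$ has a basis and $\iota(A) \subseteq B$ is a subspace, standard linear algebra gives a complementary subspace $B' \subseteq B$ with $B = \iota(A) \oplus B'$. Exactness means $\tau$ vanishes on $\iota(A)$ and is surjective onto $C$, so the restriction $\tau|_{B'}\colon B' \to C$ is a linear isomorphism. I will define $\tau' \in \Hom(C,B)$ as the composition of $(\tau|_{B'})^{-1}$ with the inclusion $B' \hookrightarrow B$, and $\iota' \in \Hom(B,A)$ as the projection $B \twoheadrightarrow \iota(A)$ along $B'$ followed by $\iota^{-1}\colon \iota(A) \to A$ (well-defined since $\iota$ is injective). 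The three identities $\iota' \circ \iota = \id_A$, $\tau \circ \tau' = \id_C$ and $\iota' \circ \tau' = 0$ then follow directly from the definitions.

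For the second stage, I will pick a basis $\{a_i\}_{i \in I}$ of $A$ and a basis $\{c_j\}_{j \in J}$ of $C$. Then $\{\iota(a_i)\}_{i \in I} \cup \{\tau'(c_j)\}_{j \in J}$ is a basis of $B$ adapted to the decomposition $B = \iota(A) \oplus B'$. Writing $\theta$ in block form with respect to this decomposition,
\begin{align*}
\theta(\iota(a_i)) &= \textstyle\sum_k M^A_{ki}\,\iota(a_k) + \sum_\ell N_{\ell i}\,\tau'(c_\ell), \\
\theta(\tau'(c_j)) &= \textstyle\sum_k P_{kj}\,\iota(a_k) + \sum_\ell M^C_{\ell j}\,\tau'(c_\ell),
\end{align*}
the trace of $\theta$ on $B$ picks up only the diagonal entries of the two diagonal blocks, namely $\sum_i M^A_{ii} + \sum_\ell M^C_{\ell \ell}$. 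Applying $\iota'$ to $\theta(\iota(a_i))$ kills the $\tau'$-contribution (by $\iota' \circ \tau' = 0$) and recovers $\sum_k M^A_{ki} a_k$, while applying $\tau$ to $\theta(\tau'(c_j))$ kills the $\iota$-contribution (by exactness $\tau \circ \iota = 0$) and yields $\sum_\ell M^C_{\ell j} c_\ell$. Taking traces then gives \eqref{splittrace} term by term.

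For the third stage, assuming $\theta \circ \iota = \iota \circ \theta_A$, composing with $\iota'$ on the left and using $\iota' \circ \iota = \id_A$ gives $\iota' \circ \theta \circ \iota = \theta_A$, so $\Tr_A \iota' \circ \theta \circ \iota = \Tr_A \theta_A$ and \eqref{splittrace:2} follows from \eqref{splittrace}. There is no real obstacle here: every step is formal linear algebra, and the trace convergence is assumed as a hypothesis. The only delicate point worth noting is that because $I$ and $J$ may be infinite, the chosen basis decomposition is essential in guaranteeing that the two traces on the right-hand side of \eqref{splittrace} add up to the trace on the left without any cross terms contributing.
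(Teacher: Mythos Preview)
Your proof is correct and follows essentially the same approach as the paper: both construct the splitting maps $\iota',\tau'$ from the decomposition $B = \iota(A) \oplus \Im(\tau')$ and then compute $\Tr_B \theta$ by discarding the off-diagonal blocks. The paper phrases the trace computation via the operator identity $\iota \circ \iota' + \tau' \circ \tau = \id_B$ and a four-term decomposition of $\theta$, whereas you write $\theta$ explicitly in block-matrix form with respect to an adapted basis, but these are the same argument in different notation.
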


\begin{proof}
Since this is a short exact sequence of vector spaces and $B$ has a basis, there exists a right inverse for $\tau$ i.e.~$\tau' \in \Hom(C,B)$ such that $\tau \circ \tau' = \id_C$.
By the Splitting Lemma, see for instance \cite[Prop.~3.2]{La02}, we have 
\eq{ \label{B:decomposition}
B = \Ker(\tau) \oplus \Im(\tau') = \Im(\iota) \oplus \Im(\tau') \cong A \oplus C.
}
Now define $\iota' \in \Hom(B,A)$ by $\iota'(\iota(a)) = a$ for all $a \in A$ and $\iota'|_{\Im(\tau')} = 0$.
Automatically we have $\Im(\iota') = A$, $\Im(\tau')=\Ker(\iota')$, $\iota' \circ \iota = \id_A$ and $\iota \circ \iota' + \tau' \circ \tau = \id_B$.
Consider the natural decomposition
\[
\theta = \iota \circ \iota' \circ \theta \circ \iota \circ \iota' + \iota \circ \iota' \circ \theta \circ \tau' \circ \tau + \tau' \circ \tau \circ \theta \circ \iota \circ \iota' + \tau' \circ \tau \circ \theta \circ \tau' \circ \tau.
\]
The second term maps $\Im(\tau')$ to $\Im(\iota)$ and $\Im(\iota)$ to 0 and the third term maps $\Im(\iota)$ to $\Im(\tau')$ and $\Im(\tau')$ to 0.
Also, the first and fourth terms are supported on $\Im(\iota)$ and $\Im(\tau')$, respectively.
Hence 
\[
\Tr_B \theta = \Tr_{\Im(\iota)} \iota \circ \iota' \circ \theta \circ \iota \circ \iota' + \Tr_{\Im(\tau')} \tau' \circ \tau \circ \theta \circ \tau' \circ \tau = \Tr_A \iota' \circ \theta \circ \iota + \Tr_C \tau \circ \theta \circ \tau'
\]
where we have used that $\iota$ is an isomorphism from $A$ to $\iota(A) \subseteq B$ with inverse $\iota'$ and $\tau'$ is an isomorphism from $C$ to $\tau'(C) \subseteq B$ with inverse $\tau$.
This proves \eqref{splittrace}, from which \eqref{splittrace:2} immediately follows.
\end{proof}

We now arrive at the main theorem of the paper.
Recall the coefficient polynomials $p_\pm$ defined in \eqref{p:def}.

\begin{thrm}[Baxter's relation for the open XXZ spin chain]\label{thm:QT:rel}
For generic values of $z$ we have
\eq{ \label{QT:rel}
(1-q^2 z^4) \mc T^V(z) \mc Q(z) = p_+(z) \mc Q(q z) + p_-(z) \mc Q(q^{-1} z).
}
\end{thrm}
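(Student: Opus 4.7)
The plan is to assemble the proof from three tools we have already set up: the intermediate trace formula \eqref{QT:intermediate}, the row-level fusion relations \eqref{iota:row}--\eqref{tau:row}, and the trace-decomposition Lemma~\ref{lem:SES} applied to the short exact sequence \eqref{SES:plus}.  The guiding principle is that the two terms on the right of \eqref{QT:rel} will arise as the traces over the sub- and quotient modules in \eqref{SES:plus}, and the prefactor $1-q^2 z^4$ is the common denominator in \eqref{iota:row}--\eqref{tau:row}.

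First, I reduce to computing $\mc T^W(z)\,\mc T^V(z)$: by the definition \eqref{Q:def}, the commutativity \eqref{QT:totalspin} between $\mc T^V(z)$ and $\bigl(\begin{smallmatrix} y & 0 \\ 0 & 1\end{smallmatrix}\bigr)^{\ot N}$, and the commutativity \eqref{QT:commute}, one has $\mc T^V(z)\mc Q(z) = \bigl(\begin{smallmatrix} z^2 & 0 \\ 0 & 1\end{smallmatrix}\bigr)^{\ot N}\mc T^W(z)\,\mc T^V(z)$ for generic $z$.  I then take \eqref{QT:intermediate} with $y=z$; the integrand is exactly the composite operator $\theta(z) := \wt K^V_b(z)\wt L_{ab}(z^2,1)\wt K^W_a(z,1)\mc M^W_a(z,1) L_{ab}(z^2,1)\mc M^V_b(z)$ appearing in \eqref{iota:row}--\eqref{tau:row} at $r=1$.

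Second, I set $\iota := \iota(1)\ot\Id_{V^{\ot N}}$ and $\tau := \tau(1)\ot\Id_{V^{\ot N}}$. The $r=1$ case of \eqref{iota:row} and \eqref{tau:row} reads
\[
\theta(z)\circ\iota = \iota\circ\theta_A,\qquad \tau\circ\theta(z) = \theta_C\circ\tau,
\]
with $\theta_A = \tfrac{p_+(z)}{1-q^2 z^4}\wt K^W_a(qz,q)\mc M^W_a(qz,q)$ and $\theta_C = \tfrac{p_-(z)}{1-q^2 z^4}\wt K^W_a(q^{-1}z,q^{-1})\mc M^W_a(q^{-1}z,q^{-1})$.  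Applying Lemma~\ref{lem:SES} matrix-entrywise on $V^{\ot N}$ to the short exact sequence \eqref{SES:plus} (tensored trivially with $V^{\ot N}$) and using that the traces converge for generic $z$ by Theorem~\ref{thm:Q:convergent}, the trace over $W\ot V$ splits as
\[
\Tr_{a,b}\theta(z) \;=\; \Tr_a \theta_A \;+\; \Tr_a\theta_C.
\]

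Third, I convert $\Tr_a\theta_{A,C}$, which have $r=q^{\pm 1}$, back into the $r=1$ operator $\mc T^W$.  The factorization \eqref{MW:r-dependence} gives $\Tr_W\wt K^W(z,r)\mc M^W(z,r)=\bigl(\begin{smallmatrix} r & 0 \\ 0 & 1\end{smallmatrix}\bigr)^{\ot N}\mc T^W(z)\bigl(\begin{smallmatrix} r & 0 \\ 0 & 1\end{smallmatrix}\bigr)^{\ot N}$, and by \eqref{QT:totalspin} this equals $\bigl(\begin{smallmatrix} r^2 & 0 \\ 0 & 1\end{smallmatrix}\bigr)^{\ot N}\mc T^W(z)$.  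Substituting $\mc T^W(w) = \bigl(\begin{smallmatrix} w^{-2} & 0 \\ 0 & 1\end{smallmatrix}\bigr)^{\ot N}\mc Q(w)$, the spin factors telescope to $\bigl(\begin{smallmatrix} z^{-2} & 0 \\ 0 & 1\end{smallmatrix}\bigr)^{\ot N}$ in both cases, so
\[
\mc T^W(z)\,\mc T^V(z) \;=\; \tfrac{1}{1-q^2 z^4}\bigl(\begin{smallmatrix} z^{-2} & 0 \\ 0 & 1\end{smallmatrix}\bigr)^{\ot N}\!\bigl(p_+(z)\mc Q(qz)+p_-(z)\mc Q(q^{-1}z)\bigr).
\]
Multiplying by $(1-q^2 z^4)\bigl(\begin{smallmatrix} z^2 & 0 \\ 0 & 1\end{smallmatrix}\bigr)^{\ot N}$ and invoking $[\mc T^V(z),\mc Q(z)]=0$ recovers \eqref{QT:rel}.

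The only delicate point is the bookkeeping of the $r$-shifts introduced by \eqref{iota:row}--\eqref{tau:row}: the choice to define $\mc Q(z)$ with $r=1$ but equipped with the prefactor $\bigl(\begin{smallmatrix} z^2 & 0 \\ 0 & 1\end{smallmatrix}\bigr)^{\ot N}$ is precisely calibrated so that the $\bigl(\begin{smallmatrix} q^{\pm 2} & 0 \\ 0 & 1\end{smallmatrix}\bigr)^{\ot N}$ factors arising from \eqref{MW:r-dependence} combine cleanly with the shifted prefactors $\bigl(\begin{smallmatrix} q^{\mp 2}z^{-2} & 0 \\ 0 & 1\end{smallmatrix}\bigr)^{\ot N}$ needed to convert $\mc T^W(q^{\pm 1}z)$ into $\mc Q(q^{\pm 1}z)$.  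Once this telescoping is verified there is no further obstacle, since Lemma~\ref{lem:SES} applies verbatim and the generic-$z$ hypothesis ensures all traces and insertions of $L^{-1}$ that have been used upstream are well defined.
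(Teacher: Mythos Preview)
Your proposal is correct and follows essentially the same approach as the paper's proof: apply \eqref{QT:intermediate} at $y=z$, use the short exact sequence \eqref{SES:plus} together with the row-level relations \eqref{iota:row}--\eqref{tau:row} at $r=1$ to invoke Lemma~\ref{lem:SES}, then convert the resulting $r=q^{\pm 1}$ traces back to $r=1$ via \eqref{MW:r-dependence} and \eqref{QT:totalspin}. Your treatment of the telescoping diagonal factors is slightly more explicit than the paper's, but the logic and the ingredients used are identical.
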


\begin{proof}
Recall the definition \eqref{Q:def}.
From \eqref{QT:intermediate} with $y=z$ we obtain
\[
\mc T^W(z) \mc T^V(z) = \Tr_{a,b} \wt K^V_{b}(z) \wt L_{ab}(z^2,1) \wt K^W_a(z,1) \mc M^W_a(z,1) L_{ab}(z^2,1) \mc M^V_{b}(z).
\]
Because of the short exact sequence \eqref{SES:plus} and Lemma \ref{lem:iotatau:row}, both in the case $r=1$, we may apply the formula \eqref{splittrace:2}.
It yields
\[
(1-q^2 z^4) \mc T^W(z) \mc T^V(z) = p_+(z) \Tr_a \wt K^W_a(q z,q) \mc M^W_a(q z,q) + p_-(z)  \Tr_a \wt K^W_a(q^{-1} z,q^{-1}) \mc M^W_a(q^{-1} z,q^{-1}).
\]
By virtue of \eqref{MW:r-dependence} and basic properties of scalar multiples of convergent series, these traces are well-defined.
Hence, by \eqref{QT:totalspin} we have
\begin{align*}
(1-q^2 z^4) \mc Q(z) \mc T^V(z) &= p_+(z) \begin{pmatrix} q^2 z^2 & 0 \\ 0 & 1 \end{pmatrix}^{\ot N} \Tr_a \wt K^W_a(q z,1) \mc M^W_a(q z,1) + \\
& \qq + p_-(z) \begin{pmatrix} q^{-2} z^2 & 0 \\ 0 & 1 \end{pmatrix}^{\ot N}  \Tr_a \wt K^W_a(q^{-1} z,1) \mc M^W_a(q^{-1} z,1) \\
&= p_+(z) \mc Q(qz) + p_-(z) \mc Q(q^{-1} z).
\end{align*}
Now using \eqref{QT:commute} with $y=z$ completes the proof.
\end{proof}

\begin{rmk}
Note that \eqref{SES:plus} is a non-split short exact sequence in the category $U_q(\mfb^+)$-modules, which however descends to a split short exact sequence in the category of complex vector spaces.
Hence, the application of Lemma \ref{lem:SES} in the proof of Theorem \ref{thm:QT:rel} is justified.
\hfill \rmkend
\end{rmk}


\section{Crossing symmetry and Bethe equations} \label{sec:Betheeqns}

In this section we will establish a functional equation for the matrix $\mc Q(z)$ (the so-called crossing symmetry) and derive the Bethe equations.

\subsection{Diagonalizability and crossing symmetry of $\mc T^V(z)$}

\begin{lemma} \label{lem:T:diagonalizable}
For all $z \in \C$ and for generic values of $q$, $\tilde \xi$, $\xi$, $t_1,\ldots,t_N$, the matrices $\mc T^V(z)$ are diagonalizable.
\end{lemma}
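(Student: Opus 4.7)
The plan is to reduce the lemma to exhibiting a single spectrally simple representative of the family. By Theorem~\ref{thm:commute}, the family $\{\mc T^V(z)\}_{z \in \C}$ is commutative, so if some $\mc T^V(z_0)$ has $2^N$ pairwise distinct eigenvalues, then its eigenspaces are lines and every $\mc T^V(z)$ is automatically diagonal in the eigenbasis of $\mc T^V(z_0)$. The matrix entries of $\mc T^V(z_0)$ being polynomial in $z_0$, Laurent polynomial in the $t_n$ and holomorphic in $(q, \xi, \tilde\xi)$, the discriminant of its characteristic polynomial has the same regularity, and its non-vanishing is a Zariski-open condition on $(z_0; q, \xi, \tilde\xi, t_1,\ldots,t_N)$. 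The lemma therefore reduces to exhibiting a single point at which this discriminant is non-zero.

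To produce such a point, I would use the weight decomposition $V^{\otimes N} = \bigoplus_{k=0}^{N} V^{(k)}$ with $V^{(k)} = \{v \mid \Sigma v = (N-2k)v\}$ and $\Sigma$ as in \eqref{Sigma:def}; by Lemma~\ref{lem:TQ:basicproperties}, $\mc T^V(z)$ preserves each block, so it suffices to produce simple spectrum within each block. The blocks $V^{(0)}$ and $V^{(N)}$ are one-dimensional and trivial. For the remaining blocks, I would invoke Sklyanin's algebraic Bethe ansatz as recalled in Appendix~\ref{app:ABA}: for generic parameters the Bethe equations admit $\binom{N}{k}$ solutions yielding linearly independent Bethe eigenvectors of $\mc T^V(z)|_{V^{(k)}}$, whose eigenvalues are explicit polynomials in $z$ that are pairwise distinct generically; picking $z_0$ away from the finitely many coincidences of these polynomials delivers the required simple spectrum on each block.

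The main obstacle is the genuine base case, namely completeness and generic simplicity of the Bethe ansatz construction at some specific parameter point. While standard in the literature for the open XXZ chain with diagonal boundaries, a rigorous proof requires care, and can be obtained either by an explicit computation at a convenient degeneration (e.g.\ the rational limit $q \to 1$ reducing to the open XXX chain, or a limit $t_n \to 0, \infty$ in which the transfer matrix factorises into commuting diagonal pieces), or by a Hamiltonian-type argument: differentiating $\mc T^V(z)$ at a suitable point yields an XXZ-type Hamiltonian which, for appropriate real choices of $q,\xi,\tilde\xi,t_n$, is self-adjoint with respect to a weighted inner product on $V^{\otimes N}$ and hence diagonalisable, with generically simple spectrum on each weight block. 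Either route supplies the required non-empty open set of parameters, which then propagates to Zariski-density by the first paragraph.
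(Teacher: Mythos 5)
Your reduction to exhibiting a single parameter point where the spectrum is simple (per weight block) is a valid strategy in principle, but it insists on a strictly stronger condition than the lemma requires, and neither of your proposed routes to a base case actually establishes it. The first route would need completeness of the algebraic Bethe ansatz for the open diagonal XXZ chain — you acknowledge this is unproven here, and it is in fact a substantial result in its own right, not a computation one can dash off at a convenient degeneration; degenerating to $q\to1$ or $t_n\to0,\infty$ collapses the transfer matrix towards highly degenerate spectrum (constants or scalars per weight block), which is the opposite of what you want. The second route — self-adjointness of the derived Hamiltonian $H$ with respect to a weighted inner product — gives diagonalizability of $H$, but that alone does not give simple spectrum of $H$, and diagonalizability of a \emph{single} commuting companion does not propagate to diagonalizability of $\mc T^V(z)$: commutativity plus diagonalizability of one member only forces the others to be block-upper-triangular on the joint generalized eigenspaces. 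Simple spectrum is what makes the commutant diagonal, and that step remains open in your plan.

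The paper's proof avoids the issue entirely by aiming only at diagonalizability, which is a weaker (and, crucially, analytically stable) property. Since $\mc T^V(z)$ depends holomorphically on $(q,\xi,\tilde\xi,t_1,\dots,t_N)$, Kato's perturbation theory for holomorphic families (\cite[Sec.~II]{Ka82}) says the Jordan structure is constant off a proper analytic subset; hence $\mc T^V(z)$ is either generically diagonalizable or generically non-diagonalizable. One then rules out the second alternative by showing $\mc T^V(z)^\ast=\mc T^V(\bar z)$ for the totally real slice $(q,\xi,\tilde\xi)\in S^{(N)}\cap\R^3$, $|t_n|=1$, so that commutativity \eqref{TT:commute} gives normality — and hence diagonalizability — on an uncountable (non-thin) parameter set, without ever needing simple spectrum. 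If you want to salvage your plan along the lines of your second route, the fix is to apply the normality argument to $\mc T^V(z)$ itself rather than to $H$, and replace the discriminant/Zariski-openness step (which forces you into simple-spectrum territory) with the Kato dichotomy between generic diagonalizability and generic non-diagonalizability.
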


\begin{proof}
This can be done analogously to the method outlined in \cite[Sec.~4.1]{Re10} for the closed chain.
Up to an overall scalar, $\mc T^V(z)$ depend holomorphically on each of $q , \xi, \tilde \xi \in \C$ and each $t_n \in \C^\times$.
This is a direct consequence of the definition \eqref{T:def} and properties of its constituents.
By the results in \cite[Sec.~II]{Ka82}, the number of eigenvalues and their algebraic and geometric multiplicities is constant outside a discrete set.
Hence $\mc T^V(z)$ is generically diagonalizable or generically non-diagonalizable.

We will now establish that we are in the former case, as required, by showing this matrix is normal with respect to a suitable inner product, and hence diagonalizable, for uncountably many values of each of the indicated parameters, namely for 
\eq{ \label{parameters:restricted}
(q,\xi,\tilde \xi) \in S^{(N)} \cap \R^3, \qq \bm t \in (\C^\times)^N \text{ such that } |t_1|=\cdots=|t_N|=1.
}

Consider the usual inner product on $V$ defined by $(v^i,v^j)=\del_{ij}$ for $i,j \in \{0,1\}$.
We extend this inner product multiplicatively over tensor factors to define inner products on $V^{\ot N}$ and $V \ot V^{\ot N}$.
With respect to these inner products, the adjoint $X^*$ of a linear map $X$ is given by the conjugate transpose and taking adjoints commutes with taking traces.

The assumptions \eqref{parameters:restricted} on the parameters imply 
\[
R(t_n^{-1} z)^* = R(t_n \wb{z}), \qq K^V(z)^* = K^V(\wb{z}), \qq R(t_n z)^* = R(t_n^{-1} \wb{z}), \qq \wt K^V(z)^* = \wt K^V(\wb{z}).
\]
and hence
\begin{align*}
\mc T^V(z)^* 
&= \Tr_a \bigg( \wt K^V_a(z) R_{a1}(t_1 z) \cdots R_{aN}(t_N z) K^V_a(z) R_{aN}(t_N^{-1}  z) \cdots R_{a1}(t_1^{-1}  z) \bigg)^* \\
&= \Tr_a R_{a1}(t_1 \wb{z}) \cdots R_{aN}(t_N \wb{z}) K^V_a(\wb{z}) R_{aN}(t_N^{-1} \wb{z}) \cdots R_{a1}(t_1^{-1} \wb{z}) \wt K^V_a(\wb{z}) \\
&= \Tr_a \wt K^V_a(\wb{z})R_{a1}(t_1 \wb{z}) \cdots R_{aN}(t_N \wb{z}) K^V_a(\wb{z}) R_{aN}(t_N^{-1} \wb{z}) \cdots R_{a1}(t_1^{-1} \wb{z}) = \mc T^V(\wb{z}).
\end{align*}
By \eqref{TT:commute} we conclude that for uncountably many values of the parameters $q,\xi,\tilde \xi,t_1,\ldots,t_N$ the matrix $\mc T^V(z)$ is normal, as required.
\end{proof}

Compared to transfer matrices of closed spin chains, those of open spin chains typically have an additional symmetry, namely one  of the form $\mc T(pz^{-1}) = (\text{scalar}) \mc T(z)$ for some $p \in \C^\times$.
We will combine this later with Theorem \ref{thm:QT:rel} to deduce an analogous property of $\mc Q(z)$.

\begin{lemma} \label{lem:T:crossing}
We have
\eq{ \label{T:crossing}
\mc T^V(q^{-1} z^{-1}) = (q z^2)^{-2(N+1)} \mc T^V(z).
}
\end{lemma}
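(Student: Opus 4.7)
The plan is to prove the crossing identity by an argument of Sklyanin type, using the partial transpose $\t_a$ on the auxiliary space inside the trace together with the crossing-unitarity \eqref{R:cross.unit.} and the ordinary unitarity $R(u)R(u^{-1}) = h(u)\Id$ of the R-matrix; the latter is a direct check from \eqref{R:def} with scalar $h(u) = -q^2 u^{-2}(u^2-q^2)(u^2-q^{-2})$.

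First, I would exploit $\Tr_a X = \Tr_a X^{\t_a}$ to rewrite
\[
\mc T^V(q^{-1}z^{-1}) = \Tr_a \bigl[\wt K^V_a(q^{-1}z^{-1}) \, \mc M^V_a(q^{-1}z^{-1})\bigr]^{\t_a}.
\]
Since $\wt K^V_a$ and $K^V_a$ are diagonal they are fixed by $\t_a$, and since the ``quantum-space'' parts of the $R_{ai}$-factors at distinct lattice sites act on disjoint $V_i$ and therefore commute, the partial transpose reverses the order of the R-factors in $\mc M^V_a$ and replaces each $R_{ai}(u)$ by $R_{ai}(u)^{\t_a}$. Crucially, the substitution $z \mapsto q^{-1}z^{-1}$ sends $t_n z \mapsto q^{-1}(z/t_n)^{-1}$ and $z/t_n \mapsto q^{-1}(t_n z)^{-1}$, so it interchanges the roles of the ``left'' and ``right'' R-factors of $\mc M^V_a$, exactly matching the order reversal induced by the partial transpose.

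Second, I would apply the crossing-unitarity relation $R(u)^{\t_a} R(q^2 u)^{\t_a} = f(u)^{-1}\Id$, with $f(u) = (1-u^2)(1-q^4 u^2)/[(1-q^2 u^2)(1-q^6 u^2)]$, in conjunction with unitarity to convert each partially-transposed R-matrix at a crossed argument $q^{-1}u^{-1}$ (for $u=t_n z$ or $u=z/t_n$) back to an R-matrix at the original argument $u$, up to a scalar and an inverse that cancels against a neighbouring factor. Combined with the explicit diagonal forms \eqref{KV:def} and \eqref{KtV:def}, which under $z \mapsto q^{-1}z^{-1}$ produce scalars $q^{-2}z^{-2}$ from $K^V$ and $z^{-2}$ from $\wt K^V$ modulo a diagonal-entry swap that is absorbed by the order reversal inside the trace, the trace should re-express as $\mc T^V(z)$ multiplied by a scalar.

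The main obstacle is the detailed scalar bookkeeping: the $2N$ crossing-unitarity and unitarity factors contribute rational functions in $z^2, t_n^2, q^2$ that must telescope, together with the K-matrix prefactors, into the clean expression $(qz^2)^{-2(N+1)} = q^{-2(N+1)}z^{-4(N+1)}$. As a sanity check, the $N=0$ case reduces to a direct polynomial identity in $z^2$ from \eqref{KV:def} and \eqref{KtV:def} and gives the prefactor $(qz^2)^{-2}$, consistent with the claim. For general $N$, one expects each pair of R-matrices $(R_{an}(t_n z), R_{an}(z/t_n))$ to contribute a further factor of $(qz^2)^{-2}$ after the combined crossing-unitarity and unitarity manipulation, independent of the inhomogeneity $t_n$; this $t_n$-independence is a priori required by the claimed scalar but must be verified explicitly by tracking the cancellations in the telescoping product.
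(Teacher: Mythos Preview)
Your proposal has a genuine gap at the K-matrix step.  The claim that under $z \mapsto q^{-1}z^{-1}$ the diagonal K-matrices ``produce scalars \ldots\ modulo a diagonal-entry swap'' is incorrect.  Swapping the diagonal entries of $K^V(q^{-1}z^{-1})$ gives
\[
\begin{pmatrix} \xi - q^{-2}z^{-2} & 0 \\ 0 & \xi q^{-2}z^{-2} - 1 \end{pmatrix}
= -q^{-2}z^{-2}\begin{pmatrix} q^2z^2 - \xi & 0 \\ 0 & 1 - \xi q^2z^2 \end{pmatrix}
= q^{-2}z^{-2}\,K^V(qz),
\]
which is a scalar multiple of $K^V(qz)$, \emph{not} of $K^V(z)$; similarly $\wt K^V$ picks up a shift to $q^{-1}z$.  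Your $N=0$ sanity check hides this because there the trace is just the sum of the two diagonal entries, so any $q$-shift inside a product of two diagonal K-matrices is invisible.  For $N\ge 1$ the shifted K-matrix is sandwiched between R-factors at the \emph{unshifted} arguments $t_n z$ and $z/t_n$, and the mismatch does not cancel.  Your proposed use of crossing-unitarity and ordinary unitarity does not address this either: those relations relate $R(u)^{\t_a}$ to $R(q^2u)^{\t_a}$ and $R(u)$ to $R(u^{-1})^{-1}$, but there is no inverse R-factor present in $\mc M^V_a$ for anything to ``cancel against''.

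The paper's proof resolves precisely this $q$-shift.  It expresses $\si K^V(q^{-1}z^{-1})\si^{-1}$ and $\si\wt K^V(q^{-1}z^{-1})\si^{-1}$ not as scalar multiples of $K^V(z)$ and $\wt K^V(z)$, but as \emph{traces over additional auxiliary spaces} involving $K^V(z)$ together with an extra factor $\wt R(z^2)$ (respectively $R(z^2)$); see \eqref{RK:crossingsymmetry}.  These extra $R(z^2)$-type insertions are exactly what compensates the $q$-shift, since crossing-unitarity \eqref{R:cross.unit.} links $\wt R(z^2)$ and $R(q^2z^2)$.  One then moves the insertions past the bulk R-factors using a partially-transposed Yang-Baxter equation \eqref{R:YBE:t} and eliminates them via the identity \eqref{T:crossing:aux}.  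Without this mechanism your argument does not close; the scalar bookkeeping you flag as the ``main obstacle'' is in fact secondary to the structural problem that the K-matrices do not simply rescale.
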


\begin{proof}
We make three observations to facilitate the proof.
Abbreviate $\si = \left( \begin{smallmatrix} 0 & - \sqrt{-1} \\ \sqrt{-1} & 0 \end{smallmatrix} \right)$.
Straightforward computations yield the following identities for generic $z$:
\eq{ \label{RK:crossingsymmetry}
\begin{aligned}
\si_1 R(q^{-1} z^{-1}) \si^{-1}_1 &= -q^{-1} z^{-2} R_{21}(z)^{\t_1} && \in \End(V \ot V), \\
\si_1 R_{21}(q^{-1} z^{-1}) \si^{-1}_1 &= -q^{-1} z^{-2} R(z)^{\t_1} && \in \End(V \ot V), \\
\si K^V(q^{-1} z^{-1}) \si^{-1} &= (q^2 z^2 - q^{-2} z^{-2}) \Tr_{2} P_{12} \wt R_{21}(z^2)^{\t} K^V_2(z)^\t && \in \End(V), \\
\si \wt K^V(q^{-1} z^{-1}) \si^{-1} &= q^{-2} z^{-4} (q^2 z^2 - q^{-2} z^{-2})^{-1} \Tr_{2} \wt K^V_2(z)^\t P_{12} R_{12}(z^2)^\t && \in \End(V).
\end{aligned}
}
Furthermore, consider the Yang-Baxter equation \eqref{R:YBE}; by partially transposing with respect to the first tensor factor, left- and right-multiplying by $(R_{13}(z_1/z_3)^{\t_1})^{-1}$ and partially transposing the result with respect to the third tensor factor we obtain the following identity in $\End(V \ot V \ot V)$:
\eq{ \label{R:YBE:t}
R_{12}(z_1/z_2)^{\t_1} \wt R_{13}(z_1/z_3)^\t R_{23}(z_2/z_3)^{\t_3} = R_{23}(z_2/z_3)^{\t_3} \wt R_{13}(z_1/z_3)^\t R_{12}(z_1/z_2)^{\t_1} .
}
Finally, linear algebra in the tensor product $V \ot V \ot V$ and the definition of $\wt R(z)$ imply
\eq{ \label{T:crossing:aux}
\big( \Tr_3 P_{23} R_{12}(z^2)^\t  \wt R_{13}(z^2)^{\t} \big)^{\t_1} = \Tr_3 P_{23} \wt R_{13}(z^2)^{\t_3} R_{12}(z^2)^{\t_2} =  \wt R_{12}(z^2)^{\t_2} \big( \Tr_3 P_{23} \big) R_{12}(z^2)^{\t_2} = \Id.
\hspace{-4pt}
}

Having made these preparatory steps, we note that
\begin{align*}
\mc T^V(q^{-1} z^{-1}) &= \Tr_c \left( \si_c \wt K^V_c(q^{-1} z^{-1}) \si^{-1}_c \right) \left( \si_c R_{1c}(q^{-1} t_1 z^{-1}) \si^{-1}_c \right) \cdots \left( \si_c R_{Nc}(q^{-1} t_N z^{-1}) \si^{-1}_c \right) \cdot \\
& \qq \qq \cdot \left( \si_c K^V_c(q^{-1} z^{-1}) \si^{-1}_c  \right)  \left( \si_c R_{cN}(q^{-1} t_N^{-1} z^{-1}) \si^{-1}_c \right) \cdots \left( \si_c R_{c1}(q^{-1} t_1^{-1} z^{-1}) \si^{-1}_c \right)
\end{align*}
by cyclicity of the trace, where we have given the auxiliary space the label $c$.
Now applying \eqref{RK:crossingsymmetry} (using the labels $b$ and $a$ for the two additional auxiliary spaces, respectively) we obtain
\begin{align*}
\mc T^V(q^{-1} z^{-1}) &= (q z^2)^{-2(N+1)} \Tr_c \Big( \Tr_b \wt K^V_b(z)^{\t_b} P_{bc} R_{cb}(z^2)^\t \Big) R(t_1^{-1} z)_{c1}^{\t_c} \cdots R(t_N^{-1} z)_{cN}^{\t_c} \cdot \\
& \hspace{32mm} \cdot \Big( \Tr_a P_{ac} \wt R_{ac}(z^2)^{\t} K^V_a(z)^{\t_a} \Big) R_{Nc}(t_N z)^{\t_c} \cdots R_{1c}(t_1 z)^{\t_c} \\
 &= (q z^2)^{-2(N+1)} \Tr_{a,b,c} P_{ab} \wt K^V_a(z)^{\t_a} P_{bc} R_{ab}(z^2)^\t R_{a1}(t_1^{-1} z)^{\t_a} \cdots R_{aN}(t_N^{-1} z)^{\t_a} \cdot \\
& \hspace{32mm} \cdot \wt R_{ac}(z^2)^{\t} R_{Nc}(t_N z)^{\t_c} \cdots R_{1c}(t_1 z)^{\t_c} K^V_a(z)^{\t_a},
\end{align*}
where we have combined the traces, pulled the factor $P_{ac}$ all the way to the left and the factor $K^V_a(z)^{\t_a}$ all the way to the right.
Repeatedly using \eqref{R:YBE:t} we arrive at
\begin{align*}
\mc T^V(q^{-1} z^{-1}) &= (q z^2)^{-2(N+1)} \Tr_{a,b,c} P_{ab} \wt K^V_a(z)^{\t_a} P_{bc} R_{ab}(z^2)^\t R_{Nc}(t_N z)^{\t_c} \cdots R_{1c}(t_1 z)^{\t_c} \wt R_{ac}(z^2)^{\t} \cdot \\
& \hspace{32mm} \cdot R_{a1}(t_1^{-1} z)^{\t_a} \cdots R_{aN}(t_N^{-1} z)^{\t_a} K^V_a(z)^{\t_a} \\
&= (q z^2)^{-2(N+1)} \Tr_a R_{Na}(t_N z)^{\t_a} \cdots R_{1a}(t_1 z)^{\t_a} \Big( \! \Tr_b P_{ab} \wt K^V_a(z)^{\t_a} \big( \Tr_c P_{bc} R_{ab}(z^2)^\t \wt R_{ac}(z^2)^{\t} \big) \! \Big) \cdot \\
& \hspace{32mm} \cdot R_{a1}(t_1^{-1} z)^{\t_a} \cdots R_{aN}(t_N^{-1} z)^{\t_a} K^V_a(z)^{\t_a}
\end{align*}
where we have moved the first $N$-fold product of partially transposed R-matrices all the way to the left.
Owing to \eqref{T:crossing:aux} the partial trace over the space labelled $c$ amounts to the identity map and as a consequence so does the trace over the space labelled $b$.
We are left with
\begin{align*}
\mc T^V(q^{-1} z^{-1}) &= (q z^2)^{-2(N+1)} \Tr_a R_{Na}(t_N z)^{\t_a} \cdots R_{1a}(t_1 z)^{\t_a} \wt K^V_a(z)^{\t_a} \cdot \\
& \hspace{32mm} \cdot R_{a1}(t_1^{-1} z)^{\t_a} \cdots R_{aN}(t_N^{-1} z)^{\t_a} K^V_a(z)^{\t_a} \\
&= (q z^2)^{-2(N+1)} \Tr_a \big( \wt K^V_a(z) R_{1a}(t_1 z) \cdots R_{Na}(t_N z) \big)^{\t_a} \big( K^V_a(z) R_{aN}(t_N^{-1} z) \cdots R_{a1}(t_1^{-1} z) \big)^{\t_a}.
\end{align*}
Now \eqref{T:crossing} follows from standard properties of the partial transpose.
\end{proof}

\subsection{The polynomiality and commutativity conjecture for the Q-operator}

The results of the remainder of this section are conditional on the following conjecture.

\begin{conj} \label{conj:Q}
The matrix entries of $\mc Q(z)$ depend polynomially on $z^2$ and for all $y,z\in \C$ we have
\eq{ \label{QQ:commute}
[\mc Q(y),\mc Q(z)] = 0.
}
\end{conj}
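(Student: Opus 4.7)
The two halves of the conjecture --- polynomial dependence of the entries of $\mc Q(z)$ on $z^2$, and pairwise commutativity of the family $\{\mc Q(z)\}$ --- seem to call for quite different techniques, so I would decouple them.

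For polynomiality, the starting point is the detailed trace analysis carried out inside the proof of Theorem \ref{thm:Q:convergent}: each matrix entry of $\mc Q(z)$ is a finite linear combination of basic hypergeometric functions $\phi(q^{2L}z^2/\xi, q^2, q^4 \tilde\xi z^2; q^{2M}\xi\tilde\xi)$ with polynomial coefficients and rational prefactors. The conjecture asserts that these combinations collapse to polynomials in $z^2$. My plan is to bootstrap via Baxter's relation \eqref{QT:rel}, rewritten as
\[
p_+(z)\, \mc Q(qz) \;=\; (1 - q^2 z^4)\, \mc T^V(z)\, \mc Q(z) \;-\; p_-(z)\, \mc Q(q^{-1}z).
\]
Since $\mc T^V(z)$ has polynomial entries (Lemma \ref{lem:TQ:basicproperties}) and the $p_\pm$ are polynomials with explicit zero sets, this two-term shift recursion transports regularity between the points $z$, $qz$ and $q^{-1}z$. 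Combined with the explicit list of candidate poles from the claim in the proof of Theorem \ref{thm:Q:convergent}, namely at $z^2 = q^{2i}\xi$ and $z^2 = q^{-2k}\tilde\xi^{-1}$, this should force the corresponding residues to cancel in families and thereby upgrade meromorphic expressions to genuine polynomials.

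For commutativity, the direct analog of the proof of Theorem \ref{thm:commute} breaks down because $R^{+-}(z)^{\t_b}$ is not a well-defined operator on $W \ot W$, as noted after that theorem. I would instead pursue an indirect spectral approach: by Theorem \ref{thm:commute}, each $\mc Q(y)$ commutes with every $\mc T^V(w)$, and by Lemma \ref{lem:T:diagonalizable}, each $\mc T^V(w)$ is generically diagonalizable. If one can further show that the joint spectrum of $\{\mc T^V(w)\}_{w \in \C}$ is generically simple on $V^{\ot N}$, then each $\mc Q(y)$ must act as a scalar on every joint eigenline of this family; two scalars commute, so $[\mc Q(y), \mc Q(z)] = 0$ for generic parameters, and continuity in $q, \xi, \tilde\xi, \bm t$ gives the full statement.

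The main obstacle is this simplicity-of-spectrum hypothesis, which is essentially the completeness statement for the algebraic Bethe ansatz recalled in Appendix \ref{app:ABA} and is classically hard. A reasonable workaround would be to prove simplicity at a single well-chosen specialization of $(q,\xi,\tilde\xi,\bm t)$ where Bethe roots are known to be distinct, and then spread the commutator-vanishing to generic parameters by holomorphic dependence plus a connectedness argument. For the polynomiality half, the hardest point is closing the bootstrap into a genuine algebraic cancellation: no classical ${}_2\phi_1$ summation appears to apply to \eqref{Q:convergent:1} directly, so one likely has to exploit the combinatorial structure of the q-oscillator traces more carefully than in Theorem \ref{thm:Q:convergent}, perhaps by realising the identity as a limit of finite-dimensional fusion relations where polynomiality is manifest.
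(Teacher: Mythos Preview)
The statement you are attempting to prove is explicitly left as an open \emph{conjecture} in the paper; there is no proof to compare against. The paper only establishes partial results: polynomiality of the \emph{diagonal} entries of $\mc T^W(z)$ (and hence of $\mc Q(z)$) via an explicit recursion in the lattice length $N$ combined with the $q$-Gauss summation formula (Appendix \ref{sec:Q:diagonalentries}), and the full conjecture for $N=2$ by direct computation of the two nonzero off-diagonal entries using Heine's contiguous relation \eqref{contiguous} (Appendix \ref{sec:lowN}). Neither of these partial arguments resembles your bootstrap via the TQ relation or your spectral-simplicity route to commutativity.

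Your polynomiality strategy has a concrete gap. The TQ relation links $\mc Q(q^{-1}z)$, $\mc Q(z)$ and $\mc Q(qz)$, but the candidate pole locus $\{z^2 = q^{-2k}\tilde\xi^{-1} : k \ge 1\}$ is itself a $q^2$-orbit: at $z^2 = q^{-2k}\tilde\xi^{-1}$ both $\mc Q(z)$ and $\mc Q(q^{-1}z)$ carry a potential pole, and $p_-(z)$ does not vanish there, so the recursion does not isolate a single residue and force it to zero. You would need an independent anchor (e.g.\ regularity of $\mc Q$ at the ``boundary'' point $z^2 = \tilde\xi^{-1}$, which is not among the listed poles) plus an induction, and even then you only obtain entireness in $z^2$; polynomiality requires a growth bound or an a priori degree estimate that the TQ relation alone does not supply. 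The paper's diagonal-entry argument circumvents this entirely by producing, for each $N$, a closed-form polynomial via the $q$-Gauss sum.

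Your commutativity strategy is logically sound but, as you correctly flag, rests on generic simplicity of the joint spectrum of $\{\mc T^V(w)\}_w$, which is tantamount to completeness of the algebraic Bethe ansatz for the open chain and is not established in the paper (nor, to my knowledge, in the literature for this boundary case). Verifying simplicity at a single parameter point and spreading by holomorphy is in principle viable --- simplicity is an open condition and the commutator depends analytically on $(q,\xi,\tilde\xi,\bm t)$ --- but exhibiting such a point with proof is itself nontrivial. The paper does not attempt this; its $N=2$ verification of \eqref{QQ:commute} proceeds instead by computing the relevant matrix entries explicitly and checking that the ratios \eqref{Q:ratiosofentries} are $z$-independent.
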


In Appendix \ref{sec:Q:diagonalentries} we prove that diagonal entries of $\mc T^W(z)$ depend polynomially on $z^2$.
Moreover, in Appendix \ref{sec:lowN} we show that the conjecture is true for $N=2$.
Conjecture \ref{conj:Q} implies that the set $U^{(N)}$ defined by equation \eqref{U:def} can be replaced by $\C$.\\

\subsection{Diagonalizability and crossing symmetry of $\mc Q(z)$}

\begin{lemma} \label{lem:Q:diagonalizable}
For all $z \in \C$ and generic values of $q$, $\tilde \xi$, $\xi$, $t_1,\ldots,t_N$, the matrices $\mc T^W(z)$ and $\mc Q(z)$ are diagonalizable.
\end{lemma}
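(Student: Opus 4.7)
The plan is to adapt the proof of Lemma \ref{lem:T:diagonalizable}. I would treat $\mc T^W(z)$ first; the statement for $\mc Q(z) = \bigl(\begin{smallmatrix} z^2 & 0 \\ 0 & 1 \end{smallmatrix}\bigr)^{\otimes N} \mc T^W(z)$ then follows because, by \eqref{QT:totalspin}, the diagonal prefactor commutes with $\mc T^W(z)$, and a commuting product of diagonalizable operators is diagonalizable.

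For $\mc T^W(z)$, by the holomorphic parameter dependence established in Theorem \ref{thm:Q:convergent} and the perturbation-theoretic results of \cite[Sec.~II]{Ka82}, the algebraic and geometric multiplicities of the eigenvalues of $\mc T^W(z)$ are constant outside a discrete subset of the parameter space $S^{(N)} \times (\C^\times)^N$, yielding the usual dichotomy: $\mc T^W(z)$ is either generically diagonalizable in parameters, or generically not. To rule out the second case I would exhibit an uncountable set of parameters at which $\mc T^W(z)$ is normal, hence diagonalizable. Restricting to $(q,\xi,\tilde\xi)\in S^{(N)}\cap \R^3$ with $|t_n|=1$ for all $n$, I would equip $V$ with its standard Hermitian inner product and $W$ with the weighted inner product $(w^j,w^k) = (q^2)_j^{-1}\delta_{jk}$, so that $a^* = a^\dag$, $(a^\dag)^* = a$ and $q^D$ is self-adjoint. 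Direct computation then gives $K^W(z,1)^* = K^W(\bar z,1)$, $\wt K^W(z,1)^* = \wt K^W(\bar z,1)$ and an identity of the form
\[
L(w,1)^* \;=\; \Delta_V^{-1}\,L(\bar w,1)\,\Delta_V,\qq \Delta_V = \bigl(\begin{smallmatrix} 1 & 0 \\ 0 & q^{2D+3} \end{smallmatrix}\bigr)_V.
\]

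Substituting these into the adjoint of $\mc M^W(z,1)$ and using that $|t_n|=1$ makes the $\Delta_{V_n}$-twists conjugating the pair $L_{an}(t_n z,1)^*$ and $L_{an}(z/t_n,1)^*$ formally identical, the twists can be collapsed across the monodromy product using the commutation relations \eqref{oscq:relations}, cyclicity of $\Tr_W$, and the total-spin symmetry \eqref{M:totalspin} to absorb residual diagonal $V$-factors, yielding $\mc T^W(z)^* = \mc T^W(\bar z)$. Conjecture \ref{conj:Q} gives $[\mc T^W(z),\mc T^W(\bar z)]=0$ (the relation $[\mc Q(y),\mc Q(z)]=0$ translates to commutativity of $\mc T^W$'s because the diagonal prefactor in \eqref{Q:def} commutes with all $\mc T^W$ by \eqref{QT:totalspin}), whence $\mc T^W(z)$ is normal, hence diagonalizable, for these uncountably many parameter values. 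The Kato dichotomy then upgrades this to generic diagonalizability, and \eqref{QT:totalspin} transfers it to $\mc Q(z)$.

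The main technical obstacle is the verification of the adjoint identity for $\mc T^W(z)$. Because $W$ is infinite-dimensional, the diagonal $V$-twists $\Delta_{V_n}$ appearing in $L(\cdot,1)^*$ have entries involving $q^{D_a}$ and therefore do not commute with the oscillator parts $a,a^\dag$ of neighbouring $L$-factors; the telescoping thus requires careful tracking of commutation relations across the monodromy product, and is substantially more delicate than the $R$-matrix adjoint identity $R(t_n z)^* = R(t_n^{-1}\bar z)$ used for $\mc T^V(z)$ in Lemma \ref{lem:T:diagonalizable}.
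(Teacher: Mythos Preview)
Your overall strategy is the same as the paper's, and the reduction to normality for real parameters via Conjecture \ref{conj:Q} is exactly how the paper proceeds. The difference is in the choice of inner product on $W$ and the resulting conjugator in the adjoint identity for $L$.

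The paper keeps the standard inner product $(w^j,w^k)_W=\delta_{jk}$ and the induced $*$-structure $a^*=a^\dag(1-q^{2(D+1)})^{-1}$, $(a^\dag)^*=(1-q^{2(D+1)})a$, $f(D)^*=\overline{f(D)}$. A direct computation then gives
\[
L(t_n^{\pm 1}z,1)^* \;=\; G^{-1}\,L(t_n^{\mp 1}\bar z,1)\,G,\qquad G=(q^2)_D\,q^{-D(D+2)}\in\mc F(D),
\]
with the \emph{same} $G\in\End(W)$ for every quantum space $V_n$. Since $G$ acts only on $W$ and commutes with $K^W,\wt K^W\in\mc F(D)$, all inner copies of $G$ and $G^{-1}$ cancel across the monodromy product and the outer pair disappears by cyclicity of $\Tr_W$, giving $\mc T^W(z)^*=\mc T^W(\bar z)$ with no further work.

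Your weighted inner product is perfectly legitimate, but the conjugator you extract, $\Delta_V=\mathrm{diag}(1,q^{2D+3})$, lives in $\End(W\otimes V)$ and its $W$-part $q^{2D}$ does not commute with the $a,a^\dag$ appearing in neighbouring $L$-factors. That is precisely the ``main technical obstacle'' you flag, and your description of how to collapse the $\Delta_{V_n}$'s (``commutation relations, cyclicity, total-spin symmetry'') is not actually carried out; in particular $\Delta_{V_n}$ does not commute with $L_{am}$ for $m\ne n$, and the conserved quantity from \eqref{M:totalspin} is $D_a+\sum_n P_n^-$, whereas $\prod_n\Delta_{V_n}=q^{(2D_a+3)\sum_nP_n^-}$ is not a function of it. So as written there is a gap at exactly this point.

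The fix is simple and brings you back in line with the paper's computation: with your inner product one checks directly (conjugating $a\mapsto aq^{-2D-1}$, $a^\dag\mapsto a^\dag q^{2D+3}$) that
\[
L(w,1)^* \;=\; q^{D(D+2)}\,L(\bar w,1)\,q^{-D(D+2)},
\]
i.e.\ a pure-$W$ conjugator also exists in your setting, differing from the paper's $G$ only by the factor $(q^2)_D$ coming from the change of inner product. Use this instead of $\Delta_V$ and the telescoping is immediate.
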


\begin{proof}
Note that the claim for $\mc Q(z)$ follows from the claim for $\mc T^W(z)$.
We can follow the same steps as in the proof of Lemma \ref{lem:T:diagonalizable}.
Owing to the definitions \eqref{T:def} and properties of its constituents, as well as Theorem \ref{thm:Q:convergent}, $\mc T^W(z)$ depends holomorphically on each of $q , \xi, \tilde \xi \in \C$ and each $t_n \in \C^\times$, up to an overall factor, and as before $\mc T^W(z)$ is either generically diagonalizable or generically non-diagonalizable.
Again, we will show that $\mc T^W(z)$ is normal and hence diagonalizable if the parameters satisfy \eqref{parameters:restricted}; since these conditions still allows for uncountably many values for each parameter, it follows that $\mc T^W(z)$ is generically diagonalizable.

Note that the commutative ring $\C$ is a $\ast$-ring with the involution given by complex conjugation $z \mapsto \wb{z}$.
The algebra ${\rm osc}_q$ becomes a $\ast$-algebra over this $\ast$-ring if we define
\[
a^* = a^\dag \big(1-q^{2(D+1)}\big)^{-1}, \qq (a^\dag)^*= \big(1-\wb{q}^{2(D+1)}\big) a, \qq f(D)^* = \wb{f(D)}.
\]
The space $W$ is a Hilbert space with respect to the inner product defined by
\[
(w^j,w^k)_W = \del_{jk}.
\]
Moreover, it is easy to check that $X^*$ is the Hermitian adjoint of $X \in \rm osc_q$ with respect to this inner product.

We extend the inner products on $V$ and $W$ multiplicatively over tensor factors to define inner products on $V \ot V^{\ot N}$ and $W \ot V^{\ot N}$, so that taking adjoints commutes with taking traces (provided the traces converge).

The assumptions on the parameters imply
\begin{align*}
L(t_n^{-1}z,1)^* &= \begin{pmatrix} q^D & -t_n \wb{z} q^{D+1} (1-q^{2D})^{-1} a^\dag  \\ - t_n \wb{z} a (1-q^{2D}) q^{-D} & q^{-D}-q^{D+2} t_n^2 \wb{z}^2 \end{pmatrix} \\
& = (q^2)_D^{-1} q^{D(D+2)} L(t_n\wb{z},1) (q^2)_D q^{-D(D+2)} 
\end{align*}
and similarly $L(t_n z,1)^* = (q^2)_D^{-1} q^{D(D+2)} L(t_n^{-1} \wb{z}, 1) (q^2)_D q^{-D(D+2)}$; we also have $K^W(z)^* = K^W(\wb{z})$ and $\wt K^W(z)^* = \wt K^W(\wb{z})$.
We deduce that
\begin{align*}
\mc T^W(z)^* 
&= \Tr_a \bigg( \wt K^W_a(z) L_{a1}(t_1 z) \cdots L_{aN}(t_N z) K^W_a(z) L_{aN}(t_N^{-1}  z) \cdots L_{a1}(t_1^{-1}  z) \bigg)^* \\
&= \Tr_a  (q^2)_D^{-1} q^{D(D+2)} L_{a1}(t_1 \wb{z}) \cdots L_{aN}(t_N \wb{z}) (q^2)_D q^{-D(D+2)} \cdot \\
& \qq \qq \cdot K^W_a(\wb{z})  (q^2)_D^{-1} q^{D(D+2)} L_{aN}(t_N^{-1} \wb{z}) \cdots L_{a1}(t_1^{-1} \wb{z}) (q^2)_D q^{-D(D+2)} \wt K^W_a(\wb{z}) \\
&= \Tr_a \wt K^W_a(\wb{z}) L_{a1}(t_1 \wb{z}) \cdots  L_{aN}(t_N \wb{z}) K^W_a(\wb{z}) L_{aN}(t_N^{-1} \wb{z}) \cdots L_{a1}(t_1^{-1} \wb{z}) = \mc T^W(\wb{z}).
\end{align*}
Now by \eqref{QQ:commute} we conclude that $\mc T^W(z)$ is normal for uncountably many values of the parameters $q,\xi,\tilde \xi,t_1,\ldots,t_N$.
\end{proof}

Using the commutativity of $\{ \mc T^V(z) \}_{z \in \C} \cup \{ \mc T^W(z) \}_{z \in \C} $, we now deduce that this family of operators is simultaneously diagonalizable.
Hence we may restrict Baxter's relation \eqref{QT:rel} and the polynomiality of $\mc T^V(z)$ and $\mc Q(z)$ to joint eigenspaces and these relations descend to statements about eigenvalues.
We summarize this in the following lemma.

\begin{lemma} \label{lem:QT:rel:eigenvalues}
All eigenvalues of $\mc T^V(z)$ and $\mc Q(z)$ are polynomial function of $z^2 \in \C$.
For $z \in \C$, let $T^V(z)$ and $Q(z)$ be simultaneous eigenvalues of $\mc T^V(z)$ and $\mc Q(z)$, respectively.
Then for all $z \in \C$ we have
\eq{ \label{QT:rel:eigenvalues}
(1-q^2 z^4) T^V(z) Q(z) = p_+(z) Q(q z) + p_-(z) Q(q^{-1} z).
}
Hence, if $y \in \C$ is such that $Q(y)=0$ or $y^4 = q^{-2}$ we have
\eq{ \label{QT:rel:eigenvalues:2}
p_+(y) Q(q y) + p_-(y) Q(q^{-1} y) = 0.
}
\end{lemma}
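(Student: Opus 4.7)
The plan is to reduce the operator identity of Theorem \ref{thm:QT:rel} to a scalar identity on a common eigenbasis, and then read off \eqref{QT:rel:eigenvalues:2} as a trivial consequence. First, I would assemble the commutativity. By Theorem \ref{thm:commute} the operators $\mc T^V(y)$ and $\mc T^V(z)$ commute for all $y,z \in \C$, and $[\mc Q(y), \mc T^V(z)] = 0$ for $y \in U^{(N)}$ and $z \in \C$; under Conjecture \ref{conj:Q} we additionally have $[\mc Q(y), \mc Q(z)] = 0$ for all $y,z \in \C$, and $U^{(N)}$ may be replaced by $\C$. Thus $\{ \mc T^V(z) \}_{z \in \C} \cup \{ \mc Q(z) \}_{z \in \C}$ is a commuting family in $\End(V^{\ot N})$. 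For generic values of the parameters, Lemmas \ref{lem:T:diagonalizable} and \ref{lem:Q:diagonalizable} guarantee that each member of this family is diagonalizable, so by the standard linear-algebra result that a commuting family of diagonalizable operators on a finite-dimensional space is simultaneously diagonalizable, there exists a single basis $\mathcal{B}$ of $V^{\ot N}$ consisting of common eigenvectors of the whole family.

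Next I would extract polynomiality of the eigenvalues. By Lemma \ref{lem:TQ:basicproperties} the matrix entries of $\mc T^V(z)$ depend polynomially on $z^2$, and under Conjecture \ref{conj:Q} the same holds for $\mc Q(z)$. Because $\mathcal{B}$ is independent of $z$, rewriting the operators in this basis produces diagonal matrices whose diagonal entries, namely the eigenvalues on each $v \in \mathcal{B}$, inherit the polynomial $z^2$-dependence, proving the first assertion. Fixing any $v \in \mathcal{B}$ and applying both sides of \eqref{QT:rel} to $v$ then yields the scalar identity
\[
(1-q^2 z^4) T^V(z) Q(z) = p_+(z) Q(qz) + p_-(z) Q(q^{-1} z)
\]
for generic $z$; since both sides are polynomial in $z$ (using polynomiality of $p_\pm$ from \eqref{p:def}), the identity extends to all $z \in \C$, establishing \eqref{QT:rel:eigenvalues}.

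Finally, \eqref{QT:rel:eigenvalues:2} follows by specialization: if $y \in \C$ satisfies $Q(y)=0$ or $y^4 = q^{-2}$ (so $1 - q^2 y^4 = 0$), then the left-hand side of \eqref{QT:rel:eigenvalues} at $z=y$ vanishes, forcing $p_+(y) Q(qy) + p_-(y) Q(q^{-1} y) = 0$. The main conceptual obstacle in this plan is its dependence on Conjecture \ref{conj:Q}: without the mutual commutativity of the family $\{ \mc Q(z) \}_{z \in \C}$ and the polynomiality of its matrix entries, there is no $z$-independent common eigenbasis for the full family, and the eigenvalue function $z \mapsto Q(z)$ does not organize into a well-defined scalar function on a fixed eigenvector. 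Everything beyond this is a routine combination of Theorem \ref{thm:QT:rel} with standard spectral theory.
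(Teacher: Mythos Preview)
Your proposal is correct and follows essentially the same approach as the paper: the paper presents this lemma as an immediate summary of the preceding discussion, deducing simultaneous diagonalizability from commutativity (Theorem \ref{thm:commute} plus Conjecture \ref{conj:Q}) and diagonalizability (Lemmas \ref{lem:T:diagonalizable} and \ref{lem:Q:diagonalizable}), then restricting \eqref{QT:rel} and the polynomiality of $\mc T^V(z)$ and $\mc Q(z)$ to joint eigenspaces. Your explicit acknowledgment that the argument is conditional on Conjecture \ref{conj:Q} matches the paper's own framing of this section.
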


We will relate equation \eqref{QT:rel:eigenvalues:2} to the Bethe equations known from Sklyanin's algebraic Bethe ansatz \cite{Sk88}.
First, we state and prove the promised analogon of Lemma \ref{lem:T:crossing} for $\mc Q(z)$.

\begin{thrm}[Crossing symmetry of Baxter's Q-operator] \label{thm:Q:crossing}
For all $z \in \C^\times$ we have
\[
\mc Q(q^{-1} z^{-1}) = (qz^2)^{-2N} \mc Q(z).
\]
\end{thrm}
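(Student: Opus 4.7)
The plan is to lift from eigenvalues to operators via simultaneous diagonalizability, and to derive the eigenvalue identity from Baxter's TQ relation combined with the crossing symmetry of $\mc T^V$. By Lemmas \ref{lem:T:diagonalizable} and \ref{lem:Q:diagonalizable}, Theorem \ref{thm:commute} and Conjecture \ref{conj:Q}, the commuting family $\{\mc T^V(y), \mc Q(y)\}_{y\in \C}$ admits a simultaneous eigenbasis of $V^{\ot N}$; hence it suffices to show $Q(q^{-1}z^{-1}) = (qz^2)^{-2N}Q(z)$ for every joint polynomial eigenvalue $Q(z)$.

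Set $\hat Q(z) := q^{2N}z^{4N}Q(q^{-1}z^{-1})$. The first step is to establish that $\hat Q$ satisfies the same TQ relation as $Q$. Substituting $z \mapsto q^{-1}z^{-1}$ into \eqref{QT:rel:eigenvalues}, applying Lemma \ref{lem:T:crossing} to rewrite $T^V(q^{-1}z^{-1}) = (qz^2)^{-2(N+1)}T^V(z)$, and performing a factorwise calculation on the explicit expressions \eqref{p:def} yields
\begin{align*}
p_+(q^{-1}z^{-1}) &= -q^{-6N-4}z^{-4N-8}\,p_-(z), \\
p_-(q^{-1}z^{-1}) &= -q^{2N-4}z^{-4N-8}\,p_+(z).
\end{align*}
Re-expressing $z^{4N}Q(z^{-1}) = q^{2N}\hat Q(q^{-1}z)$ and $z^{4N}Q(q^{-2}z^{-1}) = q^{-6N}\hat Q(qz)$ directly from the definition of $\hat Q$, and multiplying through by $q^{2N}z^{4N}$, the substituted relation collapses into $(1-q^2z^4)T^V(z)\hat Q(z) = p_+(z)\hat Q(qz) + p_-(z)\hat Q(q^{-1}z)$, as claimed.

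It remains to deduce $\hat Q = Q$. Both are polynomials in $z^2$ of degree at most $2N$: for $Q$, by Conjecture \ref{conj:Q} together with the degree bounds for $\mc Q(z) = \Big(\begin{smallmatrix} z^2 & 0 \\ 0 & 1\end{smallmatrix}\Big)^{\ot N}\mc T^W(z)$ weight-space by weight-space; for $\hat Q$, by definition. Linearity of the TQ relation yields that $D := Q - \hat Q$ solves the same relation. At the self-dual point $z_0 = q^{-1/2}$, where $q^{-1}z_0^{-1} = z_0$ and $q^{2N}z_0^{4N} = 1$, one has $\hat Q(z_0) = Q(z_0)$ automatically, so $D(z_0) = 0$. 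Matching the constant and leading coefficients of $Q$ and $\hat Q$ independently, via the explicit formula $\mc Q(0) = (1 - q^{2\Si}\xi\tilde\xi)^{-1}\Big(\begin{smallmatrix} 0 & 0 \\ 0 & 1 \end{smallmatrix}\Big)^{\ot N}$ from the proof of Lemma \ref{lem:Q:invertible} and its $z \to \infty$ counterpart, then forces $D \equiv 0$ in conjunction with the TQ recursion. The common eigenbasis of the first paragraph then lifts this back to the operator identity.

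The main obstacle is precisely the last uniqueness step: the TQ relation is a second-order $q$-difference equation, so vanishing of $D$ at one self-dual point does not on its own force $D \equiv 0$, and one genuinely needs an independent determination of the extremal coefficients of $Q$; this in turn demands careful analysis of the large-$z$ asymptotics of the q-oscillator trace defining $\mc T^W$, in the spirit of Appendix \ref{sec:Q:diagonalentries}. A conceptually cleaner alternative would be to prove the equivalent operator-level identity $\mc T^W(q^{-1}z^{-1}) = (qz^2)^{\Si - N}\mc T^W(z)$ directly, but this appears to require a crossing relation for $K^W$ that does not admit the simple partial-transpose form available for $K^V$ in \eqref{RK:crossingsymmetry}.
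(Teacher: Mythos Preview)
Your first half is correct and matches the paper's starting point: substituting $z \mapsto q^{-1}z^{-1}$ into Baxter's relation and invoking Lemma \ref{lem:T:crossing} together with the functional relations for $p_\pm$ does show that $\hat Q(z) := q^{2N}z^{4N}Q(q^{-1}z^{-1})$ satisfies the same TQ relation as $Q$. The paper uses exactly these ingredients (see \eqref{Q:crossing:1}).

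The gap is in the uniqueness step, and it is more serious than you indicate. First, ``matching the constant and leading coefficients'' is a single condition, not two: writing $Q(z)=\sum_{k=0}^{2N} c_k z^{2k}$ one computes $\hat Q(0)=q^{-2N}c_{2N}$ and the top coefficient of $\hat Q$ is $q^{2N}c_0$, so $Q(0)=\hat Q(0)$ and equality of leading coefficients are the \emph{same} constraint $c_0=q^{-2N}c_{2N}$. Together with the self-dual vanishing $D(q^{-1/2})=0$ you have only two scalar conditions on a polynomial $D$ with up to $2N+1$ coefficients. Second, even granting these conditions, the TQ relation $(1-q^2z^4)T^V(z)D(z)=p_+(z)D(qz)+p_-(z)D(q^{-1}z)$ is a second-order $q$-difference equation with the unknown $T^V(z)$ as a coefficient; it does not propagate from the data $D(0)$, $\deg D$, $D(q^{-1/2})$ in any evident way. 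So your closing paragraph understates the difficulty: it is not merely that the large-$z$ asymptotics are unproven, but that the strategy as stated would not suffice even if they were.

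The paper sidesteps this entirely. Rather than comparing $Q$ and $\hat Q$ linearly, it forms the bilinear operator $\mc X(z) \propto \mc Q(z)\mc Q(z^{-1}) - q^{4N}\mc Q(q^{-1}z)\mc Q(q^{-1}z^{-1})$ and shows, from the same TQ substitution you carried out, that $\mc X$ satisfies a \emph{first-order} $q$-difference equation $\mc X(qz)=g(z)\mc X(z)$ with $g$ non-constant. A nonzero rational solution would have infinitely many zeros, so $\mc X=0$. This gives $\mc Q(q^{-1}z^{-1})=(qz^2)^{-2N}\mc Y\,\mc Q(z)$ for a $z$-independent matrix $\mc Y$ with $\mc Y^2=\Id$; the sign ambiguity is then resolved at $z=q^{-1/2}$ (directly if $\mc Q(q^{-1/2})$ is invertible, or via an inductive argument using \eqref{QT:rel:eigenvalues:2} otherwise). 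The payoff of this bilinear trick is that it converts the second-order problem into a first-order one, where rationality alone forces vanishing---no asymptotic computation is needed.
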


\begin{proof}
Define the auxiliary $z$-dependent matrix
\[
\mc X(z) := \frac{  \mc Q(z) \mc Q(z^{-1}) - q^{4N} \mc Q(q^{-1} z) \mc Q(q^{-1} z^{-1})}{ (1 - z^4) \prod_{n=1}^N (1 - t_n^2 z^2)(1 - t_n^{-2} z^2) } .
\]
In the first part of the proof we will show that $\mc X(z) =0$.
By Lemma \ref{lem:T:crossing}, comparing \eqref{QT:rel} as-is to \eqref{QT:rel} with $z \mapsto q^{-1} z^{-1}$ and using \eqref{QT:commute} yields
\eq{ \label{Q:crossing:1}
\begin{aligned}
& - \left( p_+(z) \mc Q(q z) + p_-(z) \mc Q(q^{-1} z) \right) \mc Q(q^{-1} z^{-1}) = \\
& \qq = (q z^2)^{2(N+2)} \mc Q(z) \left( p_+(q^{-1} z^{-1}) \mc Q(z^{-1}) + p_-(q^{-1} z^{-1}) \mc Q(q^{-2} z^{-1}) \right).
\end{aligned}
}
From \eqref{p:def} we derive the functional relations
\begin{gather*}
\frac{p_-(q^{-1} z^{-1})}{p_+(z)} = - q^{2(N-2)} z^{-4(N+2)}, \qq \qq \frac{p_+(q^{-1} z^{-1})}{p_-(z)} = -q^{-2(3N+2)} z^{-4(N+2)}, \\
\frac{p_-(z)}{p_+(z)} = q^{2(N-2)} \frac{\tilde \xi z^2 - 1}{z^2 - q^{-2} \tilde \xi} \frac{\xi z^2 - 1}{z^2 - q^{-2} \xi} \frac{1-q^4 z^4}{1-z^4} \prod_{n=1}^N \frac{1-q^2 t_n^2 z^2}{1-t_n^2 z^2} \frac{1-q^2 t_n^{-2} z^2}{1-t_n^{-2} z^2}
\end{gather*}
and hence \eqref{Q:crossing:1} is equivalent to
\eq{ \label{Q:crossing:2}
\mc X(q z) = q^{-2(N+2)} \frac{\tilde \xi z^2 - 1}{z^2 - q^{-2} \tilde \xi} \frac{\xi z^2 - 1}{z^2 - q^{-2} \xi} \mc X(z).
}
Any matrix entry of $\mc X(z)$ satisfies the same q-difference equation \eqref{Q:crossing:2}.
Also, as a consequence of Conjecture \ref{conj:Q}, the matrix entries of $\mc X(z)$ are rational functions in $z^2$.
Finally, note that the coefficient on the right-hand side of \eqref{Q:crossing:2} depends non-trivially on $z$ because $\tilde \xi \xi \ne q^2$ and $(\xi^2,\tilde \xi^2) \ne (q^2,q^2)$; these inequalities follows straightforwardly from the assumption $|\xi \tilde \xi| < |q|^{2N}$.

Writing an arbitrary rational scalar-valued solution of \eqref{Q:crossing:2} as a quotient of polynomials without common factors, we deduce that it has infinitely many zeroes and hence is the zero function.
It follows that the matrix $\mc X(z)$ is the zero matrix.\\

In analogy with \cite[Eq.~(5.14)]{FSz15}, it is tempting now to define a new auxiliary matrix $\wt{\mc X}(z) = \mc Q(z)^{-1} \mc Q(q^{-1} z)$ and from $\mc X(z)=0$ derive the functional equation $\wt{\mc X}(q^{-1}z^{-1}) = q^{4N} \wt{\mc X}(z)$.
However because of the factor $q^{4N} \ne 1$ this has no solutions in rational functions, so it is necessary to provide an alternative approach.
Namely we consider the auxiliary matrix
\[
\mc Y(z) := (q z^2)^{2N} \mc Q(z)^{-1} \mc Q(q^{-1} z^{-1}),
\]
which is well-defined for generic $z$ by virtue of Lemma \ref{lem:Q:invertible}.
Combining $\mc X(z)=0$ with \eqref{QQ:commute} we obtain the functional relation
\[
\mc Y(q^{-1} z) = \mc Y(z)
\]
for generic $z$.
Again, by descending to the matrix entries and using that the entries of $\mc Y(z)$ depend rationally on $z^2$, we obtain that $\mc Y(z)$ is independent of $z$.
Hence there exists a constant matrix $\mc Y$ such that, for all nonzero $z$,
\eq{ \label{Q:crossing:nearlythere}
\mc Y \mc Q(z) = (q z^2)^{2N}  \mc Q(q^{-1} z^{-1}).
}
Replacing $z$ by $q^{-1}z^{-1}$ for nonzero $z$ we obtain that $\mc Y^2 = \Id$, so that $\mc Y$ is diagonalizable with its spectrum contained in $\{-1,1\}$.
Because $\mc Q(z)$ and $\mc Q(q^{-1}z^{-1})$ are simultaneously diagonalizable it follows that the eigenspaces of $\mc Y$ coincide with the eigenspaces of $\mc Q(z)$ which are independent of $z$.\\

It remains to prove that $\mc Y = \Id$.
If $\mc Q(q^{-1/2})$ is invertible then \eqref{Q:crossing:nearlythere} at $z = q^{-1/2}$ immediately implies the desired result.
Now assume on the contrary that $\mc Q(q^{-1/2})$ is not invertible, i.e.~that it has a zero eigenvalue.
Then $\mc Y = \Id$ follows from the following claim.

\emph{Claim:} if $\mc Y$ has an eigenvalue -1 on an eigenspace of $\mc Q(z)$ with eigenvalue $Q(z)$ such that $Q(q^{-1/2}) = 0$ and if
\eq{ \label{parameters:assumption}
\xi, \tilde \xi \notin q^{2 \Z_{\ge 0}+1}, \qq t_n^2 \notin q^{2 \Z + 1} \text{ for } n \in \{1,2,\ldots,N\}.
}
then we have $Q(q^{k-1/2})=0$ for all $k \in \Z_{\ge 0}$.

Indeed, since $Q(z)$ depends polynomially on $z$, the claim implies that $Q(z)=0$ for all $z \in \C$ which contradicts the generic invertibility of $\mc Q(z)$.
It follows that $\mc Y$ has eigenvalue 1 on all eigenspaces of $\mc Q(z)$ whose eigenvalue vanishes at $z=q^{-1/2}$, for generic values of $\xi$, $\tilde \xi$ and each $t_n$.
Hence $\mc Y = \Id$ for generic values of $\xi$, $\tilde \xi$ and each $t_n$.
Because $\mc Q(z)$ depends analytically on $\xi$ and $\tilde \xi$ and Laurent polynomially on each $t_n$, $\mc Y$ depends meromorphically on these parameters.
Hence $\mc Y = \Id$ for all parameter values which proves the theorem.

In order to prove the claim, note that \eqref{p:def} and \eqref{parameters:assumption} together imply
\eq{ \label{alphaplus:zero}
p_+(q^{k-1/2}) \ne 0  \text{ for all } k \in \Z_{\ge 0}.
}
We prove the claim by induction with respect to $k$.
The assumption $Q(q^{-1/2})=0$ proves the case $k=0$.
Now \eqref{QT:rel:eigenvalues:2} with $y = q^{-1/2}$ gives
\[
p_+(q^{-1/2})Q(q^{1/2}) + p_-(q^{-1/2})Q(q^{-3/2}) = 0.
\]
On the other hand, from \eqref{Q:crossing:nearlythere} we obtain
\[
-Q(z) = (qz^2)^{2N} Q(q^{-1}z^{-1})
\]
so that $p_+(q^{-1/2})Q(q^{1/2})  = p_-(q^{-1/2})Q(q^{-3/2})$ and it follows that $p_+(q^{-1/2})Q(q^{1/2}) = 0$.
By virtue of \eqref{alphaplus:zero} it follows that $Q(q^{1/2})=0$, which proves the case $k=1$.

Now assume the claim is true for $k-1$ and $k$ for some fixed $k \in \Z_{\ge 2}$ and take \eqref{QT:rel:eigenvalues:2} with $y = q^{k-1/2}$.
Since $Q(q^{k-1/2})=Q(q^{k-3/2})=0$ it follows that $p_+(q^{k-1/2})Q(q^{k+1/2}) = 0$ and now \eqref{alphaplus:zero} proves the claim for $k+1$.
This completes the induction step.
\end{proof}

\subsection{Bethe equations}

The following technical result together with the eigenvalue version of Baxter's relation \eqref{QT:rel:eigenvalues:2} will allow us to derive the Bethe equations.

\begin{prop}[Zeroes of the eigenvalues of $\mc Q(z)$] \label{prop:Q:eigenvalue:factorized}
Let $Q(z)$ be an eigenvalue of $\mc Q(z)$.
There exist $M \in \{ 0,1, \ldots,N \}$ and $f,y_1,\ldots,y_M \in \C^\times$ such that
\eq{ \label{Q:eigenvalue:factorized}
Q(z) = f z^{2(N-M)} \prod_{j=1}^M (z^2-y_j^2)(z^2-q^{-2}y_j^{-2}).
}
\end{prop}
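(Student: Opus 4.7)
The plan is to turn the crossing symmetry of $\mc Q$ (Theorem \ref{thm:Q:crossing}) into a functional equation for the polynomial eigenvalue $Q$ viewed as a polynomial in $w = z^2$, which is allowed by Lemma \ref{lem:QT:rel:eigenvalues}. Since $\mc Q(z)$ and $\mc Q(q^{-1}z^{-1})$ commute by Conjecture \ref{conj:Q} and are simultaneously diagonalizable, the crossing identity descends to
\[
Q(q^{-2}w^{-1}) = q^{-2N}w^{-2N}Q(w).
\]
I would then factor $Q(w) = c\, w^{N_0}\prod_{i=1}^{M_0}(w - r_i)$ with $c \neq 0$ and all $r_i \in \C^\times$, substitute, clear denominators, and match the highest and lowest powers of $w$ in the resulting Laurent polynomial identity. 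This forces $N_0 + M_0 = D$ (the degree of $Q$ in $w$) and $2N_0 + M_0 = 2N$, so $M_0$ is even, $M := D - N \in \{0,\ldots,N\}$, $N_0 = N-M$ (matching the prefactor $z^{2(N-M)}$), and what remains of the functional equation reduces to
\[
\prod_{i=1}^{2M}(w - q^{-2}r_i^{-1}) = \prod_{i=1}^{2M}(w - r_i).
\]
Thus the multiset of nonzero roots is invariant under the involution $\sigma \colon r \mapsto q^{-2}r^{-1}$. Any non-fixed $\sigma$-orbit $\{r, q^{-2}r^{-1}\}$ is literally of the form $(y^2, q^{-2}y^{-2})$ upon choosing $y^2 = r$, and the two roots in the orbit appear with equal multiplicity, so such orbits contribute directly to pairs in the desired factorization.

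The main obstacle, which requires extra work beyond $\sigma$-invariance, is the $\sigma$-fixed points $r = \pm q^{-1}$: the involution does not constrain their multiplicities, yet to match the target product I need these multiplicities to be even (since a fixed point $r$ is already self-paired: $y^2 = r$ gives $q^{-2}y^{-2} = r$, making each pair a double root). I would handle this by a divisibility argument: assuming $Q(w) = (w-q^{-1})\tilde Q(w)$ and using the identity
\[
q^{-2}w^{-1} - q^{-1} = -q^{-1}w^{-1}(w - q^{-1}),
\]
the crossing relation collapses to $\tilde Q(q^{-2}w^{-1}) = -q^{1-2N}w^{1-2N}\tilde Q(w)$; evaluating at $w = q^{-1}$ yields $\tilde Q(q^{-1}) = -\tilde Q(q^{-1})$, so $\tilde Q(q^{-1}) = 0$. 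Iterating shows the multiplicity of $q^{-1}$ in the multiset of roots of $Q$ must be even, and the entirely analogous calculation at $r=-q^{-1}$ covers the other fixed point. Grouping the $2M$ nonzero roots into $M$ pairs (genuine $\sigma$-orbit pairs for non-fixed orbits, and double roots $(w \mp q^{-1})^2$ absorbed two at a time into pairs with $y_j^2 = \pm q^{-1}$) gives the factorization \eqref{Q:eigenvalue:factorized}, with $f$ determined by matching leading coefficients (and automatically nonzero since $f = c$).
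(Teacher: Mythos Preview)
Your argument is correct and runs parallel to the paper's: both descend the crossing symmetry of Theorem \ref{thm:Q:crossing} to the eigenvalue, factor $Q$ as a polynomial in $w=z^2$, match exponents to obtain $N_0=N-M$ and $M_0=2M$, and then exploit the invariance of the multiset of nonzero roots under the involution $r\mapsto q^{-2}r^{-1}$.

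The one substantive difference is how you establish that the fixed points $\pm q^{-1}$ occur with even multiplicity. The paper extracts, by comparing leading coefficients in the functional identity, the auxiliary relation $\prod_{i=1}^{2M} r_i = q^{-2M}$; since each genuine two-element orbit contributes $q^{-2}$ to this product and each occurrence of $\pm q^{-1}$ contributes $\pm q^{-1}$, the sign forces the multiplicity of $-q^{-1}$ to be even, and then parity of $2M$ forces the same for $q^{-1}$. Your peel-off argument instead works locally: you factor out one copy of $(w\mp q^{-1})$, observe that the reduced functional equation acquires a sign, and evaluate at the fixed point to force a second root; iterating (the sign cancels after removing a pair, returning you to the original shape with $N$ shifted by one) gives even multiplicity. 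Both are valid; yours is slightly more hands-on but avoids the product relation, while the paper's is a one-line global count once that relation is in hand.

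One small omission: you should note at the outset that $Q$ is not identically zero (so that $c\ne 0$ and the factorization is meaningful). This follows from the generic invertibility of $\mc Q(z)$ established in Lemma \ref{lem:Q:invertible}, as the paper remarks.
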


\begin{proof}
It suffices to prove this for generic values of $z$.
Since the eigenvalues of $\mc Q(z)$ are not generically zero and depend polynomially on $z^2$ we have
\[
Q(z) = f z^{2N'} \prod_{j=1}^{M'} (z^2-y_j^2)
\]
for some $M' , N' \in \Z_{\ge 0}$, $f,y_1,\ldots,y_{M'} \in \C^\times$.
Combining this with Theorem \ref{thm:Q:crossing} we obtain
\[
(q^{-1} z^{-1})^{2N'} \prod_{j=1}^{M'} (q^{-2} z^{-2} -y_j^2) = (q z^2)^{-2N} z^{2N'} \prod_{j=1}^{M'} (z^2-y_j^2).
\]
This is equivalent to the following identity of polynomials:
\[
(-1)^{M'} q^{2(N-N')} \left( \prod_{j=1}^{M'} y_j^2 \right) z^{4N}  \prod_{j=1}^{M'} (z^2 - q^{-2} y_j^{-2}) = 
z^{2(M'+2N')} \prod_{j=1}^{M'} (z^2-y_j^2).
\]
The powers of $z$ yield that $M' = 2M$ is even and $N' = N-M$.
Therefore $M \le N$ and
\eq{ \label{Q:eigenvalue:factorized:1}
\prod_{j=1}^{2M} (z^2 - q^{-2} y_j^{-2}) = \prod_{j=1}^{2M} (z^2-y_j^2), \qq \qq \prod_{j=1}^{2M} y_j^2 = q^{-2M}.
}
The first equation of \eqref{Q:eigenvalue:factorized:1} requires that the multiset $\{ y_j^2 \}_{j=1}^{2M}$ is invariant under the involution $\psi$ of $\C^\times$ defined by $\psi(Y) = q^{-2} Y^{-1}$.
Combined with the second equation and the fact that $|q|<1$ it follows that the multiplicities of the two $\psi$-fixed points $\pm q^{-1}$ are even.
It follows that $\{ y_1^2, \ldots, y_{2M}^2 \}$ splits up into nontrivial $\psi$-orbits and an even number of $\psi$-fixed points.
Up to relabelling we may assume $y_{M+j}^2 = q^{-2} y_j^{-2}$ for $j \in \{1,\ldots,M \}$ and we obtain \eqref{Q:eigenvalue:factorized}.
\end{proof}

Given the factorization \eqref{Q:eigenvalue:factorized}, setting $y=y_i$ in \eqref{QT:rel:eigenvalues:2} immediately yields the following result.

\begin{thrm}[Bethe equations] \label{thm:Q:BetheEqns}
We have
\eq{
p_+(y_i) Q(q y_i) + p_-(y_i) Q(q^{-1} y_i) = 0 , \quad \hbox{for all}\quad i\in \{1,\ldots,M\},\label{eq:BEs}
}
where $p_\pm(z)$ and $Q(z)$ are given by equations \eqref{p:def} and \eqref{Q:eigenvalue:factorized}.
\end{thrm}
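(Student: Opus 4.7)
The plan is essentially a one-line deduction from the preceding results. By Proposition \ref{prop:Q:eigenvalue:factorized}, for each $i \in \{1,\ldots,M\}$ the eigenvalue $Q(z)$ contains the factor $(z^2 - y_i^2)$ in its factorization \eqref{Q:eigenvalue:factorized}, and hence $Q(y_i) = 0$. This places us in the first case of the hypothesis of Lemma \ref{lem:QT:rel:eigenvalues} (``$Q(y)=0$ or $y^4 = q^{-2}$''). Applying equation \eqref{QT:rel:eigenvalues:2} at $y = y_i$ yields \eqref{eq:BEs} directly.

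The substantive content has already been absorbed into the infrastructure upstream. The scalar TQ relation \eqref{QT:rel:eigenvalues}, obtained from Theorem \ref{thm:QT:rel}, relies on the simultaneous diagonalizability of the families $\{\mc T^V(z)\}_{z\in\C}$ and $\{\mc T^W(z)\}_{z\in\C}$ together with the polynomial dependence of their eigenvalues on $z^2$; the latter is conditional on Conjecture \ref{conj:Q}. The factorization in Proposition \ref{prop:Q:eigenvalue:factorized}, which pairs the roots of $Q$ as $\{y_j, q^{-1}y_j^{-1}\}$ and fixes their product, in turn invokes the crossing symmetry of the Q-operator (Theorem \ref{thm:Q:crossing}). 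Without this pairing one would naively expect $2M$ Bethe equations indexed by all zeroes, but crossing symmetry renders the equations at $y = q^{-1}y_i^{-1}$ redundant (they are equivalent to those at $y = y_i$), leaving only $M$ independent conditions.

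Consequently there is no genuine obstacle remaining at this stage. The only bookkeeping to verify is that the $y_i$ indexing \eqref{eq:BEs} are honest zeroes of $Q$, which is immediate from the factor $(z^2-y_i^2)$ in \eqref{Q:eigenvalue:factorized}; the partner roots $q^{-1}y_i^{-1}$ would of course produce the same Bethe equations by the functional symmetry used in the proof of Theorem \ref{thm:Q:crossing}.
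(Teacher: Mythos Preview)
Your proof is correct and takes essentially the same approach as the paper: the paper's own proof is the single sentence ``Given the factorization \eqref{Q:eigenvalue:factorized}, setting $y=y_i$ in \eqref{QT:rel:eigenvalues:2} immediately yields the following result.'' Your additional commentary on the upstream dependencies (simultaneous diagonalizability, Conjecture \ref{conj:Q}, crossing symmetry) is accurate and helpful context, though not strictly required for the proof itself.
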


In Appendix \ref{app:ABA} we develop Skylanin's formulation of the Algebraic Bethe Ansatz for the open XXZ model in the conventions of this paper and show it coincides precisely with \eqref{eq:BEs}.


\section{Discussion}

For the closed XXZ chain, the Q-operator can be viewed as a `fundamental' transfer matrix in the following sense \cite{BLZ97}: 
the infinite-dimensional Verma module transfer matrix $\mc T_\mu^+(z)$ can we written as a product of the form 
\eq{
\mc T_\mu^+(z) \propto \mc Q(q^{(\mu+1)/2} z)\overline{\mc Q}(q^{-(\mu+1)/2} z)\label{eq:verma}
}
where $\mu\in\R$ is the highest weight of the Verma module, and $\mc Q(z)$ and $\overline{\mc Q}(z)$ are Q-operators given as traces of the monodromy matrix over two different q-oscillator representations associated with the vector spaces $\displaystyle W=\bigoplus_{j=0}^{\infty} \C w^j $ and $\displaystyle\overline{W}=\bigoplus_{j=-1}^{-\infty} \C w^j$ respectively.
If we choose $\mu\in \Z\geq 0$, then the transfer matrix $\mc T_\mu(z)$ of the $\mu+1$ dimensional representation (which is a quotient of the Verma module of weight $\mu$ by that of weight $-\mu-2$) is given by 
\eq{
\mc T_\mu(z)=\mc T_\mu^+(z)- \mc T_{-\mu-2}^+(z).\label{eq:finiteverma}
}
In particular, choosing $\mu=0$ and using \eqref{eq:verma} and \eqref{eq:finiteverma} gives the quantum-determinant expression involving the operator $\mc Q$. 
The key equation \eqref{eq:verma} can be proven either by writing the tensor product of the two q-oscillator representation in terms of a filtration (see Appendix C of \cite{BJMST09}) or by using the R-matrix factorization method of Derkachov (see \cite{De05,DKK06,De07,BLMS10}).
The connection between these two approaches is explained in \cite{KhTs14}.
The closed Baxter's TQ relation can in turn be proved by combining the quantum determinant expression with the $\mu=1$ case of equation \eqref{eq:finiteverma}.

In this paper we have taken the direct route to the TQ relation that involves the short exact sequence from Lemma \ref{lem:iotatau:intertwine}.
In the second paper in this series we will derive analogues of \eqref{eq:verma} and \eqref{eq:finiteverma} for open chains, thus yielding expressions for the quantum determinant and higher-spin transfer matrices in terms of the Q operator of this paper.
The paper \cite{DM06} deals with a factorization of transfer matrices in terms of Q-operators for both closed and certain open chains (but not the $(\xi,\tilde \xi)$-dependent ones discussed here).

One question is the following: do the results of this paper, in particular the boundary fusion relation of Lemma \ref{KW:iotatau:lem}, extend to non-diagonal K-matrices $K^V(z)$? 
(The general solution to the reflection equation associated to the vector representation of $U_q(\wh{\mfsl}_2)$ was found in \cite{DVGR94}.)
We do not yet have an answer to this question; the main complication is the difficulty of finding a solution $K^W(z)$ of the reflection equation \eqref{KW:RE} in the non-diagonal case.


\appendix

\section{Polynomiality of the diagonal entries of the Q-operator} \label{sec:Q:diagonalentries}

In Appendices \ref{sec:Q:diagonalentries} and \ref{sec:lowN} we use some auxiliary notation.
For $\bm \al = (\al_1,\ldots,\al_N), \bm \be = (\be_1,\ldots,\be_N) \in \{ 0,1\}^N$ and $\bm t \in (\C^\times)^N$ we consider the matrix entries $\mc M(z)^{\bm \al}_{\bm \be} \in {\rm osc}_q$ defined by
\eq{
\label{MW:entries}
\mc M^W(z,1) (w \ot v^{\al_1}\otimes v^{\al_2}\otimes \cdots \otimes  v^{\al_N}) = \sum_{\bm \be \in \{0,1\}^N} \mc M(z)^{\bm \al}_{\bm \be}(w) \ot v^{\be_1} \otimes v^{\be_2} \otimes \cdots \otimes v^{\be_N} 
}
for all $w \in W$, and the matrix entries $\mc T(z)^{\bm \al}_{\bm \be} \in \C$ defined by
\eq{ \label{TW:entries}
\mc T^W(z) (v^{\al_1}\otimes v^{\al_2}\otimes \cdots \otimes  v^{\al_N}) = \sum_{\bm \be \in \{0,1\}^N} \mc T(z)^{\bm \al}_{\bm \be} \, v^{\be_1}\otimes v^{\be_2}\otimes \cdots \otimes v^{\be_N}
}
for $\bm \al = (\al_1,\ldots,\al_N), \bm \be = (\be_1,\ldots,\be_N) \in \{ 0,1\}^N$ and $\bm t \in (\C^\times)^N$.

Moreover, in this section we explicitly write the dependence of $\mc T^W$, $\wt K^W$ and $\mc M^W$ on the parameters $\tilde \xi$ and $\bm t = (t_1,\ldots,t_N)$ where applicable.
In particular, 
\[
\mc T^W(z;\tilde \xi;\bm t) = \Tr_a \wt K^W_a(z,1;\tilde \xi) \mc M^W_a(z,1;\bm t)
\]
and we indicate these parameters also for the matrix entries defined by \eqrefs{MW:entries}{TW:entries}.

Consider the $\Z$-linear map $s: \Z^N \to \Z$ defined by 
\eq{ \label{s:def}
s(\ga_1,\ldots,\ga_N) = \ga_1 + \ldots + \ga_N.
}
As a consequence of the decomposition \eqref{oscq:decomposition} and the commutativity \eqref{QT:totalspin}, for all $\bm t = (t_1,\ldots,t_N) \in (\C^\times)^N$ and $\bm \al, \bm \be \in \{ 0,1\}^N$ there exists $f(z;\bm t;\cdot)^{\bm \al}_{\bm \be} \in \mc F$ such that
\eq{ \label{MW:entries:decomp}
\mc M(z;\bm t)^{\bm \al}_{\bm \be} = \begin{cases} 
a^{s(\bm \be-\bm \al)} f(z;\bm t;D)^{\bm \al}_{\bm \be} & \text{if } s(\bm \al) \le  s(\bm \be) , \\
(a^\dag)^{s(\bm \al - \bm \be)} (q^{2(D+1)})_{s(\bm \al-\bm \be)}^{-1} f(z;\bm t;D)^{\bm \al}_{\bm \be} & \text{if } s(\bm \al) \ge s( \bm \be).
\end{cases} 
}
Note that if $s(\bm \al) = s(\bm \be)$ then the two expressions coincide, as required.

\begin{prop}
For all $\bm \al, \bm \be \in \{ 0,1\}^N$, $\bm t \in (\C^\times)^N$ and $u \in \C^\times$ we have the following recursions:
\eq{ \label{f:recursion} \hspace{-1mm}
\begin{aligned}
f(z;u,\bm t;j)^{0,\bm \al}_{0,\bm \be} &= q^{2j+s(\bm \al-\bm \be)} \big( 
f(z;\bm t;j)^{\bm \al}_{\bm \be} + 
q (q^{-2(j+s(\bm \al-\bm \be))} - 1) z^2 f(z;\bm t;j-1)^{\bm \al}_{\bm \be} 
\big), \\
f(z;u,\bm t;j)_{0,\bm \al}^{1,\bm \be} &= - q^{2j+s(\bm \al-\bm \be)+1} u z \big( 
f(z;\bm t;j)^{\bm \al}_{\bm \be} + 
q (q^{-2(j+s(\bm \al-\bm \be))} u^{-2} - z^2) f(z;\bm t;j-1)^{\bm \al}_{\bm \be}
\big), \hspace{-3mm} \\
f(z;u,\bm t;j)^{1,\bm \al}_{0,\bm \be} &= - q^{2j+s(\bm \al-\bm \be)} u^{-1} z \big( 
(q^{-2j} - q^2) f(z;\bm t;j+1)^{\bm \al}_{\bm \be}+ \\
& \hspace{27mm} + q^{-1} (q^{-2(j+s(\bm \al-\bm \be))} - q^2) (q^{-2j} u^2 - q^2 z^2) f(z;\bm t;j)^{\bm \al}_{\bm \be} \big), \hspace{-3mm} \\
f(z;u,\bm t;j)^{1,\bm \al}_{1,\bm \be} &= q^{2j+s(\bm \al - \bm \be)+1}  \big( 
(q^{-2j} - q^2) z^2  f(z;\bm t;j+1)^{\bm \al}_{\bm \be} + \\
& \hspace{27mm} + q^{-1} (q^{-2(j+s(\bm \al-\bm \be))} u^{-2} - q^2 z^2) (q^{-2j} u^2 - q^2 z^2) f(z;\bm t;j)^{\bm \al}_{\bm \be} \big).
\hspace{-3mm} 
\end{aligned}
}
where $f(z;u,\bm t;j)^{\ga,\bm \al}_{\del,\bm \be}$ is shorthand for $f(z;u,t_1,\ldots,t_N;j)^{\ga,\al_1,\ldots,\al_N}_{\del,\be_1,\ldots,\be_N}$.
\end{prop}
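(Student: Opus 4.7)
The plan is to view the $(N{+}1)$-site monodromy as a conjugation of the $N$-site one by the two L-operators carrying the new site's parameter, extract matrix entries, and then evaluate on the basis $(w^j)_{j\ge 0}$ of $W$ to read off the scalar recursions one case at a time. From the definition \eqref{MW:def} one sees immediately that the $(N{+}1)$-site monodromy, with the new site labelled $0$ and carrying parameter $u$, factorizes as
\[
\mc M^W_a(z,1;u,\bm t) = L_{a0}(uz,1)\, \mc M^W_a(z,1;\bm t)\, L_{a0}(z/u,1)
\]
in $\End(W \ot V \ot V^{\ot N})$. Passing to matrix entries in the $V$-factors via \eqref{MW:entries} and the explicit oscillator formulas \eqref{L:entries} gives
\[
\mc M(z;u,\bm t)^{\ga,\bm \al}_{\del,\bm \be} = \sum_{\mu \in \{0,1\}} L(uz)^\mu_\del \; \mc M(z;\bm t)^{\bm \al}_{\bm \be} \; L(z/u)^\ga_\mu.
\]

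The next step is to recast the two cases of \eqref{MW:entries:decomp} into a single statement on basis vectors. Using \eqref{oscq:rep} together with $(a^\dag)^n (q^{2(D+1)})_n^{-1}(w^j) = w^{j+n}$, both cases collapse to
\[
\mc M(z;\bm t)^{\bm \al}_{\bm \be}(w^j) = f(z;\bm t;j)^{\bm \al}_{\bm \be}\, w^{j + s(\bm \al - \bm \be)}
\]
(with $w^i := 0$ for $i<0$), and the analogous identity holds for the $(N{+}1)$-site entries. Moreover, since each $L(y)^\mu_\del$ shifts $j$ by precisely $\mu - \del$, the sum $(\ga - \mu) + s(\bm\al - \bm\be) + (\mu - \del) = s(\ga,\bm\al) - s(\del,\bm\be)$ is independent of $\mu$, so every summand on the right-hand side of the composition formula produces a scalar multiple of the same basis vector $w^{j + s(\ga,\bm\al) - s(\del,\bm\be)}$. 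Each of the four recursions in \eqref{f:recursion} is therefore equivalent to the scalar identity obtained by reading off that coefficient.

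I would then dispatch the four cases $(\ga,\del) \in \{0,1\}^2$ separately, using only $a(w^j) = w^{j-1}$, $a^\dag(w^j) = (1-q^{2(j+1)})\,w^{j+1}$ and $q^D(w^j) = q^j w^j$. The case $(0,0)$ gives two terms, contributing to $f(z;\bm t;j)^{\bm\al}_{\bm\be}$ and $f(z;\bm t;j-1)^{\bm\al}_{\bm\be}$. The cases $(1,0)$ and $(0,1)$ each give three terms: the summand in which only one of the two L-entries is off-diagonal produces a single shift by $\pm 1$ in the argument of $f$, while the other summand, in which $L(\cdot)^1_1$ appears, expands into two parts that both produce the other shift. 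The case $(1,1)$ gives five terms: the summand with $\mu=0$ contains both $a$ and $a^\dag$ and produces $f(z;\bm t;j+1)^{\bm\al}_{\bm\be}$, while the summand with $\mu=1$ is the four-term expansion of $L(uz)^1_1(\,\cdot\,)L(z/u)^1_1$, each of whose terms involves only powers of $q^D$ and contributes to $f(z;\bm t;j)^{\bm\al}_{\bm\be}$. In every case precisely two $f$-shifts survive, matching the pattern in \eqref{f:recursion}.

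The main obstacle is the algebraic bookkeeping in the $(1,1)$ case, where one must verify that the four coefficients of $f(z;\bm t;j)^{\bm\al}_{\bm\be}$ coming from the subleading pieces of the diagonal L-entries assemble into the single quadratic product $(q^{-2(j+s(\bm\al-\bm\be))} u^{-2} - q^2 z^2)(q^{-2j} u^2 - q^2 z^2)$ displayed in \eqref{f:recursion}, and similarly that the overall prefactors $q^{2j + s(\bm\al-\bm\be)}$ or $q^{2j + s(\bm\al-\bm\be)+1}$ can be extracted to put each recursion in the symmetric form stated in the proposition. No new conceptual input beyond the q-oscillator algebra and the uniform action formula above is required.
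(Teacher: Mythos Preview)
Your proposal is correct and follows essentially the same route as the paper: both start from the factorization $\mc M(z;u,\bm t)^{\ga,\bm\al}_{\del,\bm\be}=\sum_\veps L(uz)^\veps_\del\,\mc M(z;\bm t)^{\bm\al}_{\bm\be}\,L(z/u)^\ga_\veps$ and then extract the scalar recursions from \eqref{MW:entries:decomp}. Your device of evaluating on $w^j$ via the unified formula $\mc M(z;\bm t)^{\bm\al}_{\bm\be}(w^j)=f(z;\bm t;j)^{\bm\al}_{\bm\be}\,w^{j+s(\bm\al-\bm\be)}$ is exactly the observation the paper phrases as ``because of the choice of combinatorial factor \ldots the system can be treated as one equation'', just recast slightly more concretely.
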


\begin{proof}
Recall the entries $L(z)^\ga_\del$ of the L-operator defined in \eqref{L:entries}.
The main observation is that as a consequence of \eqref{MW:def}, for all $\bm t \in (\C^\times)^N$, $u \in \C^\times$, $\bm \al,\bm \be \in \{0,1\}^N$ and $\ga,\del \in \{0,1\}$ we have
\eq{
\mc M(z;u,\bm t)^{\ga,\bm \al}_{\del,\bm \be} = \sum_\veps L(u z)^\veps_\del \mc M(z;\bm t)^{\bm \al}_{\bm \be} L(\tfrac{z}{u})^\ga_\veps.
}
Since $\ga-\del \in \{-1,0,1\}$ we have 
\eq{ \label{f:recursion:2}
\hspace{-1mm} 
\begin{aligned}
& a^{s(\bm \be-\bm \al)+\del-\ga} f(z;u,\bm t;j)^{\ga,\bm \al}_{\del,\bm \be} = \\
&= \sum_\veps L(u z)^\veps_\del a^{s(\bm \be-\bm \al)} f(z;\bm t;j)^{\bm \al}_{\bm \be} L(\tfrac{z}{u})^\ga_\veps \hspace{38mm} \text{if } s(\bm \al-\bm \be)  \le \min(0,\del - \ga), \\
& (a^\dag)^{s(\bm \al-\bm \be)+\ga-\del} (q^{2(j+1)})_{s(\bm \al-\bm \be)+\ga-\del}^{-1} f(z;u,\bm t;j)^{\ga,\bm \al}_{\del,\bm \be} = \\
&= \sum_\veps L(u z)^\veps_\del (a^\dag)^{s(\bm \al-\bm \be)} (q^{2(j+1)})_{s(\bm \al-\bm \be)}^{-1} f(z;\bm t;j)^{\bm \al}_{\bm \be} L(\tfrac{z}{u})^\ga_\veps  \qq \text{if } s(\bm \al-\bm \be) \ge \max(0 , \del-\ga).
\end{aligned}
\hspace{-6mm} 
}
Because of the choice of combinatorial factor in \eqref{MW:entries:decomp} in the case $s(\bm \al - \bm \be) \ge 0$, the system \eqref{f:recursion:2} can be treated as one equation.
It directly leads to \eqref{f:recursion}.
\end{proof}

Together with the boundary values
\eq{ \label{f:initial}
\begin{aligned}
f(z;\emptyset;j)_\emptyset^\emptyset &= q^{-j^2} (-\xi)^j (q^2z^2/\xi)_j \\
f(z;\bm t;j)^{\bm \be}_{\bm \al} &= 0 \qq \qq \text{if } |s(\bm \al-\bm \be)|>N.
\end{aligned}
}
the system \eqref{f:recursion} defines $f(z;\bm t;\cdot)_{\bm \al}^{\bm \be} \in \mc F$ uniquely.
We will need \eqref{f:recursion} only for $\bm \al = \bm \be$ and $\ga = \del$, in which case it simplifies to 
\eq{
\label{f:recursion:simplified}
\begin{aligned}
f(z;u,\bm t;j)^{0,\bm \al}_{0,\bm \al} &= q^{2j} \big( f(z;\bm t;j)^{\bm \al}_{\bm \al} + q (q^{-2j} - 1) z^2 f(z;\bm t;j-1)^{\bm \al}_{\bm \al} \big), \\
f(z;u,\bm t;j)^{1,\bm \al}_{1,\bm \al} &= q^{2j+1} \big( (q^{-2j} - q^2) z^2  f(z;\bm t;j+1)^{\bm \al}_{\bm \al} + \\
& \hspace{27mm} + q^{-1} (q^{-2j} u^{-2} - q^2 z^2) (q^{-2j} u^2 - q^2 z^2) f(z;\bm t;j)^{\bm \al}_{\bm \al} \big).\hspace{-3mm} 
\end{aligned}
}

As a consequence of \eqref{MW:entries:decomp} we have
\eq{ \label{T:matrixentry}
\mc T(z;\tilde \xi;\bm t)^{\bm \be}_{\bm \al} 
= \begin{cases} \displaystyle
\frac{1}{1 - q^2 \tilde \xi z^2} \sum_{j=0}^\infty q^{j^2} (-\tilde \xi)^j (q^4 \tilde \xi z^2)_j^{-1} f(z;\bm t;j)^{\bm \al}_{\bm \be}
& \text{if } s(\bm \al) = s(\bm \be) \\ 0 & \text{otherwise} .
\end{cases}
}
The key step in showing polynomiality is the following recursion for the entries of $\mc T^W(z)$.

\begin{prop} \label{prop:Q:recursion}
Let $\xi \in \C$, $\tilde \xi \in \C$, $u \in \C^\times$, $\bm t \in (\C^\times)^{N+1}$ and $\bm \al , \bm \be \in \{0,1\}^N$.
We have the following identities:
\begin{align*}
\mc T(z;\tilde \xi;u,\bm t)^{0,\bm \be}_{0,\bm \al} &= \mc T(z;q^{2} \tilde \xi;\bm t)^{\bm \be}_{\bm \al} \\
\mc T(z;\tilde \xi;u,\bm t)^{1,\bm \be}_{1,\bm \al} &=  (q^2 z^2 - \tilde \xi) (1 - \tilde \xi z^2) \mc T(z;q^{-2} \tilde \xi;\bm t)^{\bm \be}_{\bm \al} - q^2 z^2  (\tilde \xi u - u^{-1}) (\tilde \xi^{-1} u - u^{-1}) \mc T(z;\tilde \xi;\bm t)^{\bm \be}_{\bm \al}.
\end{align*}
\end{prop}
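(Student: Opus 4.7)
The proof is by direct calculation: substitute the recursion \eqref{f:recursion:simplified} into the series expansion \eqref{T:matrixentry} and reorganise the resulting sums. Before doing so, observe that if $s(\bm\al) \ne s(\bm\be)$ then $s(\ga,\bm\al) \ne s(\ga,\bm\be)$ for every $\ga \in \{0,1\}$, so both sides of each asserted identity vanish trivially by the second case of \eqref{T:matrixentry}. Hence we may assume $s(\bm\al) = s(\bm\be)$, in which case the simplified recursion \eqref{f:recursion:simplified} applies.

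For the first identity, the plan is to substitute the first line of \eqref{f:recursion:simplified} into \eqref{T:matrixentry} written for the chain $(u,\bm t)$, split the series into the two pieces coming from the two terms of the recursion, and shift $j \mapsto j+1$ in the piece containing $f(z;\bm t;j-1)^{\bm\al}_{\bm\be}$; the $j=-1$ boundary term disappears because the prefactor $q(q^{-2j}-1)z^2$ evaluates to zero at $j=0$. Gathering the two pieces into a single sum with common factor $q^{j^2}(-\tilde\xi)^j(q^4\tilde\xi z^2)_j^{-1} f(z;\bm t;j)^{\bm\al}_{\bm\be}$, the bracketed coefficient simplifies by straightforward algebra to
\[
\frac{1 - q^2 \tilde\xi z^2}{1 - q^{2j+4} \tilde\xi z^2}.
\]
The overall factor $(1-q^2\tilde\xi z^2)^{-1}$ then cancels, and the $q$-Pochhammer identity $(q^4\tilde\xi z^2)_{j+1} = (1-q^4\tilde\xi z^2)(q^6\tilde\xi z^2)_j$ lets one recognise the result as \eqref{T:matrixentry} with $\tilde\xi$ replaced by $q^2 \tilde\xi$.

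For the second identity the strategy is the same but the bookkeeping is heavier. I would substitute the second line of \eqref{f:recursion:simplified} into \eqref{T:matrixentry} and shift $j \mapsto j+1$ in the piece involving $f(z;\bm t;j+1)^{\bm\al}_{\bm\be}$; the $j=0$ boundary contribution vanishes because the prefactor $(q^{-2j}-q^2)$ evaluates to zero at $j=1$, the shifted origin. This presents the LHS as a single series whose coefficient of $f(z;\bm t;j)^{\bm\al}_{\bm\be}$ has the form $q^{j^2}(-\tilde\xi)^{j-1}(q^4\tilde\xi z^2)_j^{-1} X_j$, for an explicit scalar $X_j$ polynomial in $q^{\pm 2j}, z^2, u^{\pm 2}, \tilde\xi^{\pm 1}$. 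In parallel, using \eqref{T:matrixentry} for both summands on the RHS and clearing the extra denominator $(q^2\tilde\xi z^2)_j$ arising from $\mc T(z;q^{-2}\tilde\xi;\bm t)$ via $(q^2\tilde\xi z^2)_j^{-1} = (1-q^{2j+2}\tilde\xi z^2)/[(1-q^2\tilde\xi z^2)(q^4\tilde\xi z^2)_j]$, the RHS is brought into the form of a single series with coefficient $q^{j^2 - 2j}(-\tilde\xi)^j(q^4\tilde\xi z^2)_j^{-1} Y_j$ for an explicit $Y_j$. The identity then reduces to the scalar equality $X_j = -q^{-2j}\tilde\xi \, Y_j$, verified by term-by-term expansion.

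The main obstacle is precisely this final scalar verification: one must match the $j$-dependent components (scaling as $q^{\pm 2j}$) separately from the $j$-independent components, and check that the combination $u^2+u^{-2} - \tilde\xi - \tilde\xi^{-1} = (\tilde\xi u - u^{-1})(\tilde\xi^{-1} u - u^{-1})$ in the RHS emerges exactly from the cross terms in the expansion of $(q^{-2j}u^{-2}-q^2z^2)(q^{-2j}u^2-q^2z^2)$ on the LHS, while the $\tilde\xi^{\pm 1}$-terms are produced by the prefactor $-\tilde\xi q^{2j}$ in the relevant recursion term together with the $(1-q^{2j+2}\tilde\xi z^2)$ factors picked up from shifting the Pochhammer symbols. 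Cancellations among the $z^4$ contributions (in particular the two $q^{2j+4}\tilde\xi z^4$ terms of opposite sign) are what make the final identity tractable.
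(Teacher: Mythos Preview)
Your approach is essentially the same as the paper's: substitute \eqref{f:recursion:simplified} into \eqref{T:matrixentry}, shift the summation index in one of the two resulting pieces, and simplify term by term. The paper handles the second identity slightly more cleanly by rewriting the combined coefficient via the single algebraic identity
\[
q^2 (q^{2(j+1)} z^2 - \tilde \xi^{-1}) (1-q^{2j}) z^2 + (u^2  - q^{2(j+1)} z^2) (u^{-2} - q^{2(j+1)} z^2) = (q^{2(j+1)} z^2 - \tilde \xi^{-1}) (q^2 z^2 - \tilde \xi) - q^{2(j+1)} z^2 (\tilde \xi u - u^{-1}) (\tilde \xi^{-1} u - u^{-1}),
\]
so the two summands of the right-hand side appear immediately rather than after an $X_j$ vs.\ $Y_j$ comparison---but the content is identical. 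Two small slips to fix: in the first identity the common factor should be $q^{j^2}(-q^2\tilde\xi)^j(q^4\tilde\xi z^2)_j^{-1}$ (the $q^{2j}$ from the recursion is absorbed there), and in the second identity the correct shift is $j \mapsto j-1$, with the boundary term at the new $j=0$ vanishing because $q^{-2(-1)}-q^2 = 0$.
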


\begin{proof}
Without loss of generality we may assume that $s(\bm \al) = s(\bm \be)$.
We will start our derivation of the recursion relations with the recursion \eqref{f:recursion:simplified} and the expression \eqref{T:matrixentry}.
In particular, we have
\[
\mc T(z;\tilde \xi;u,\bm t)^{0,\bm \be}_{0,\bm \al}  = \frac{1}{1 - q^2 \tilde \xi z^2} \sum_{j=0}^\infty q^{j^2} (-\tilde \xi)^j (q^4 \tilde \xi z^2)_j^{-1} \Big( q^{2j} f(z;\bm t;j)^{\bm \be}_{\bm \al} + q (1-q^{2j})z^2 f(z;\bm t;j-1)^{\bm \be}_{\bm \al} \Big).
\]
Shifting the summation variable in the second term leads to
\begin{align*}
\mc T(z;\tilde \xi;u,\bm t)^{0,\bm \be}_{0,\bm \al} 
&= \frac{1}{1 - q^2 \tilde \xi z^2} \sum_{j=0}^\infty q^{j^2} (-q^2 \tilde \xi)^j (q^4 \tilde \xi z^2)_j^{-1} \bigg( 1 - \frac{(1 - q^{2(j+1)}) q^2 \tilde \xi z^2}{1 - q^{2(j+2)} \tilde \xi z^2} \bigg) f(z;\bm t;j)^{\bm \be}_{\bm \al} \\
&= \frac{1}{1 - q^4 \tilde \xi z^2} \sum_{j=0}^\infty q^{j^2} (-q^2 \tilde \xi)^j (q^6 \tilde \xi z^2)_j^{-1} f(z;\bm t;j)^{\bm \be}_{\bm \al},
\end{align*}
as required.
To derive the recursion for $\mc T(z;\tilde \xi;u,\bm t)^{1,\bm \be}_{1,\bm \al}$, we have
\begin{align*}
\mc T(z;\tilde \xi;u,\bm t)^{1,\bm \be}_{1,\bm \al} &= \frac{1}{1 - q^2 \tilde \xi z^2} \sum_{j=0}^\infty q^{j^2} (-\tilde \xi)^j (q^4 \tilde \xi z^2)_j^{-1} \Big( q (1-q^{2(j+1)}) z^2 f(z;\bm t; j+1)^{\bm \be}_{\bm \al} + \\
& \hspace{56mm} + q^{-2j} (u^2  - q^{2(j+1)} z^2) (u^{-2} - q^{2(j+1)} z^2) f(z;\bm t;j)^{\bm \be}_{\bm \al}  \Big) \\
&= \frac{1}{1 - q^2 \tilde \xi z^2} \sum_{j=0}^\infty q^{j^2} (-q^{-2} \tilde \xi)^j (q^4 \tilde \xi z^2)_j^{-1} \cdot \\
& \qu \cdot \Big( q^2 (q^{2(j+1)} z^2 - \tilde \xi^{-1}) (1-q^{2j}) z^2 + (u^2  - q^{2(j+1)} z^2) (u^{-2} - q^{2(j+1)} z^2) \Big) f(z;\bm t; j)^{\bm \be}_{\bm \al} 
\end{align*}
where we have shifted the summation variable in the second term.
Using the identity
\begin{align*}
&  q^2 (q^{2(j+1)} z^2 - \tilde \xi^{-1}) (1-q^{2j}) z^2 + (u^2  - q^{2(j+1)} z^2) (u^{-2} - q^{2(j+1)} z^2)  = \\
& \qq =  (q^{2(j+1)} z^2 - \tilde \xi^{-1}) (q^2 z^2 - \tilde \xi) - q^{2(j+1)} z^2 (\tilde \xi u - u^{-1}) (\tilde \xi^{-1} u - u^{-1})
\end{align*}
we arrive at
\begin{align*}
\mc T(z;\tilde \xi;u,\bm t)^{1,\bm \be}_{1,\bm \al} &= (1 - q^2 \tilde \xi^{-1} z^2) \sum_{j=0}^\infty q^{j^2} (-q^{-2} \tilde \xi)^j (q^2 \tilde \xi z^2)_j^{-1} f(z;\bm t; j)^{\bm \be}_{\bm \al} + \\
& \qq - \frac{q^2 z^2 (\tilde \xi u - u^{-1}) (\tilde \xi^{-1} u - u^{-1}) }{1 - q^2 \tilde \xi z^2} \sum_{j=0}^\infty q^{j^2} (- \tilde \xi)^j (q^4 \tilde \xi z^2)_j^{-1} f(z;\bm t; j)^{\bm \be}_{\bm \al} 
\end{align*}
as required.
\end{proof}

We are now in a position to prove the desired properties of the diagonal entries of $\mc T^W(z)$.

\begin{thrm} \label{thm:Q:poly}
Suppose $\xi , \tilde \xi \in \C$ are such that $|\xi \tilde \xi|<|q|^{2N}$.
Then the diagonal matrix entries of $\mc T^W(z)$ and $\mc Q(z)$ depend polynomially on $z^2$.
\end{thrm}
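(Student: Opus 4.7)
The plan is to induct on $N$, using Proposition~\ref{prop:Q:recursion} for the inductive step and a direct evaluation via the $q$-Gauss summation formula for the base case.

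For the base case $N=0$, I would compute the single ``diagonal entry'' $\mc T^W(z;\tilde \xi;\emptyset) = \Tr_W \wt K^W(z,1) K^W(z,1)$ directly from the explicit diagonal formulas for $K^W$ and $\wt K^W$. After using $(q^2 \tilde \xi z^2)_{j+1} = (1-q^2\tilde \xi z^2)(q^4 \tilde \xi z^2)_j$, the remaining sum is a ${}_2\phi_1$ in which the parameters $(a,b,c) = (q^2 z^2/\xi, q^2, q^4 \tilde \xi z^2)$ satisfy $c/(ab) = \xi \tilde \xi$. The hypothesis $|\xi \tilde \xi|<1$ (implied by $|\xi\tilde\xi|<|q|^{2N}$ at $N=0$) places us precisely in the domain of the $q$-Gauss summation formula, so the sum evaluates to $(1-q^2\tilde \xi z^2)/(1-\xi \tilde \xi)$. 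The factors $(1-q^2\tilde \xi z^2)$ cancel, and the base case reduces to the constant $\mc T^W(z;\tilde \xi;\emptyset) = 1/(1-\xi\tilde\xi)$, trivially polynomial in $z^2$.

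For the inductive step, I would assume the claim holds at chain length $N$ for every $(\xi,\tilde\xi)$ with $|\xi\tilde\xi|<|q|^{2N}$, then pick a diagonal entry $\mc T(z;\tilde\xi;u,\bm t')^{\bm\al}_{\bm\al}$ at chain length $N+1$ with $|\xi\tilde\xi|<|q|^{2(N+1)}$ and split $\bm\al = (\al_1, \bm\al')$. Proposition~\ref{prop:Q:recursion} rewrites this entry in terms of $N$-site entries. If $\al_1 = 0$, the entry collapses to $\mc T(z;q^2\tilde\xi;\bm t')^{\bm\al'}_{\bm\al'}$, and the shifted parameter satisfies $|\xi(q^2\tilde\xi)| = |q|^2|\xi\tilde\xi| < |q|^{2N+4} < |q|^{2N}$ (since $|q|<1$), so the induction hypothesis applies. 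If $\al_1 = 1$, the entry is the linear combination
\[
(q^2 z^2 - \tilde\xi)(1 - \tilde\xi z^2)\,\mc T(z;q^{-2}\tilde\xi;\bm t')^{\bm\al'}_{\bm\al'} - q^2 z^2 (\tilde\xi u - u^{-1})(\tilde\xi^{-1} u - u^{-1})\,\mc T(z;\tilde\xi;\bm t')^{\bm\al'}_{\bm\al'},
\]
in which both shifts ($q^{-2}\tilde\xi$ and $\tilde\xi$) lie in the $N$-site convergence range: $|\xi \cdot q^{-2}\tilde\xi| = |q|^{-2}|\xi\tilde\xi| < |q|^{2N}$ follows from the hypothesis, while $|\xi\tilde\xi|<|q|^{2N+2}<|q|^{2N}$ is immediate. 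Both $\mc T$-entries on the right are then polynomial in $z^2$ by the induction hypothesis, and the prefactors $(q^2 z^2-\tilde\xi)(1-\tilde\xi z^2)$ and $q^2 z^2(\tilde\xi u-u^{-1})(\tilde\xi^{-1}u-u^{-1})$ are themselves polynomial in $z^2$, so the entry is polynomial.

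The passage to $\mc Q(z)$ is immediate from \eqref{Q:def}: the diagonal entry $\mc Q(z)^{\bm\al}_{\bm\al}$ equals $z^{2(N-s(\bm\al))}\,\mc T^W(z)^{\bm\al}_{\bm\al}$, and multiplying a polynomial in $z^2$ by a nonnegative power of $z^2$ preserves polynomiality. The only genuinely nontrivial point in the plan is the base case: the explicit hypergeometric series in the trace must be recognized as an instance of $q$-Gauss, after which the miraculous cancellation yielding a constant must be observed. Everything downstream of this -- the inductive step, the convergence bookkeeping on $|\xi\tilde\xi|$, and the step from $\mc T^W$ to $\mc Q$ -- is routine once Proposition~\ref{prop:Q:recursion} is in hand.
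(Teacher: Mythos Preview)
Your proposal is correct and follows essentially the same argument as the paper: the $N=0$ base case via the $q$-Gauss summation formula yielding the constant $(1-\xi\tilde\xi)^{-1}$, followed by induction on $N$ using Proposition~\ref{prop:Q:recursion} and the observation that its coefficients are polynomial in $z^2$. Your explicit tracking of the convergence condition $|\xi\tilde\xi|<|q|^{2N}$ under the shifts $\tilde\xi \mapsto q^{\pm 2}\tilde\xi$ is a worthwhile detail that the paper leaves implicit.
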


\begin{proof}
For $N=0$ a direct computation gives
\[
\mc T^W(z) = \Tr \wt K^W(z,1) K^W(z,1) = \frac{1}{1 - q^2 \tilde \xi z^2} \sum_{j \ge 0} \frac{(\xi^{-1} q^2 z^2)_j}{(q^4 \tilde \xi z^2)_j} (\xi \tilde \xi )^j =  \frac{\phi( q^2 z^2/\xi, q^2, q^4 \tilde \xi z^2; \xi \tilde \xi )}{1 - q^2 \tilde \xi z^2} .
\]
We recall the q-Gauss summation formula
\eq{ \label{qGauss}
\phi(a,b,c;\tfrac{c}{ab}) = \frac{(\tfrac{c}{a})_\infty (\tfrac{c}{b})_\infty}{(\tfrac{c}{ab})_\infty(c)_\infty},
}
for $a,b \in \C^\times$ and $c \in \C$ such that $|\tfrac{c}{ab}|<1$, see e.g.~\cite[(1.5.1)]{GR04}, we obtain that $\mc T^W(z) = (1 - \xi \tilde \xi)^{-1}$, clearly polynomial in $z^2$.
Now induction with respect to $N$, using the fact that the coefficients in the recurrences given in Proposition \ref{prop:Q:recursion} depend polynomially on $z^2$, proves the theorem for $\mc T^W(z)$.
By \eqref{Q:def} the corresponding statement for $\mc Q(z)$ follows immediately.
\end{proof}


\section{The case $N=2$} \label{sec:lowN}

In this section we demonstrate for $N=2$ that the matrix entries $\mc T^W(z)$ depend polynomially on $z^2$ and that $[\mc T^W(y),\mc T^W(z)] = 0$.
It then follows from \eqref{Q:def} that the Q-operator has the same properties.
Recall that we assume $|\xi \tilde \xi|<|q|^{2N}=|q|^4$ and $0<|q|<1$.

\subsection{Polynomiality}

Our strategy is straightforward: we use the definitions \eqref{MW:def} and \eqref{T:def} directly.
As a consequence of Theorem \ref{thm:Q:poly}, it suffices to do this for the two non-zero off-diagonal entries $\mc T^W(z)^{0,1}_{1,0}$ and $\mc T^W(z)^{1,0}_{0,1}$.
We have
\begin{align*}
\mc M^W(z)^{0,1}_{1,0} &= L(zt_1)^0_1 L(zt_2)^0_0 K^W(z) L(z/t_2)^1_0 L(z/t_1)^0_0 + L(zt_1)^0_1 L(zt_2)^1_0 K^W(z) L(z/t_2)^1_1 L(z/t_1)^0_0 + \\
& \qq + L(zt_1)^1_1 L(zt_2)^0_0 K^W(z) L(z/t_2)^1_0 L(z/t_1)^0_1 + L(zt_1)^1_1 L(zt_2)^1_0 K^W(z) L(z/t_2)^1_1 L(z/t_1)^0_1   \\
\mc M^W(z)^{1,0}_{0,1} &= L(zt_1)^0_0 L(zt_2)^0_1 K^W(z) L(z/t_2)^0_0 L(z/t_1)^1_0 + L(zt_1)^0_0 L(zt_2)^1_1 K^W(z) L(z/t_2)^0_1 L(z/t_1)^1_0 + \\
& \qq + L(zt_1)^1_0 L(zt_2)^0_1 K^W(z) L(z/t_2)^0_0 L(z/t_1)^1_1 + L(zt_1)^1_0 L(zt_2)^1_1 K^W(z) L(z/t_2)^0_1 L(z/t_1)^1_1 .
\end{align*}
Referring to \eqref{L:def}, \eqref{KW:def} and \eqref{KtW:def} for the explicit formulas of the constituent operators, we present the result of the computations, which make use of \eqref{oscq:rep}:
\begin{align*}
\wt K^W(z) \mc M(z)_{1,0}^{0,1} &= q z^2 \frac{(\xi t_1 - t_1^{-1})(t_2 - \xi t_2^{-1})}{(\xi - z^2)(1-q^2 \tilde \xi z^2)} \times \\
& \qq \times \Bigg( \frac{(z^2/\xi)_D}{(q^4 \tilde \xi z^2)_D} (\xi \tilde \xi)^D - \bigg( 1 - (1-q^2) \frac{\xi - z^2}{\xi - t_1^{-2}} \bigg) \frac{(z^2/\xi)_D}{(q^4 \tilde \xi z^2)_D} (q^2 \xi \tilde \xi)^D \Bigg) \\
\wt K^W(z) \mc M(z)_{0,1}^{1,0} &= q z^2 \frac{(t_1-\xi t_1^{-1})(\xi t_2 - t_2^{-1})}{(\xi - z^2)(1-q^2 \tilde \xi z^2)} \times \\
& \qq \times \Bigg( \frac{(z^2/\xi)_D}{(q^4 \tilde \xi z^2)_D} (\xi \tilde \xi)^D - \bigg( 1 - (1-q^2) \frac{\xi - z^2}{\xi - t_1^2} \bigg) \frac{(z^2/\xi)_D}{(q^4 \tilde \xi z^2)_D} (q^2 \xi \tilde \xi)^D \Bigg).
\end{align*}
Note that the expressions are related to each other via the involution of $(\C^\times)^2$ defined by $(t_1,t_2) \mapsto (t_1^{-1},t_2^{-1})$.
It is therefore sufficient to prove the polynomiality of one of the expressions, after having taken the trace.
We have
\eq{ \label{N=2:intermediate}
\begin{aligned}
\mc T^W(z)_{1,0}^{0,1} &= 
q z^2 \,\Xi(z)\,\frac{(\xi t_1 - t_1^{-1})(t_2 - \xi t_2^{-1})}{(\xi - z^2)(1-q^2 \tilde \xi z^2)} ,\\
\hbox{where} \quad \Xi(z)&:= \phi(z^2/\xi,q^2,q^4 \tilde \xi z^2;\xi \tilde \xi) - \bigg( 1 - (1-q^2) \frac{\xi - z^2}{\xi - t_1^{-2}} \bigg) \phi(z^2/\xi,q^2,q^4 \tilde \xi z^2;q^2 \xi \tilde \xi).
\end{aligned}
}
The basic hypergeometric function in the second terms is q-Gauss summable, but not the first one.
However we can apply one of Heine's contiguous relations, see \cite{He47}, to write it as a linear combination of two q-Gauss summable basic hypergeometric functions.
The relevant relation is
\eq{ \label{contiguous}
\phi(a,b,c;x) = \phi(a,b,q^{-2}c;x) - q^{-2} c x \frac{(1-a)(1-b)}{(1-q^{-2} c)(1-c)} \phi(q^2 a,q^2 b,q^2 c;x)
}
where $a,b,c,x \in \C$ such that $|x|<1$.
In our case it gives 
\begin{align*}
\Xi(z)&=  \phi(z^2/\xi,q^2,q^2 \tilde \xi z^2;\xi \tilde \xi) - q^2 \tilde \xi^2 z^2 \frac{(\xi-z^2)(1-q^2)}{(1-q^2 \tilde \xi z^2)(1-q^4 \tilde \xi z^2)} \phi(q^2 z^2/\xi,q^4,q^6 \tilde \xi z^2;\xi \tilde \xi) + \\
& \qq-  \bigg( 1 - (1-q^2) \frac{\xi - z^2}{\xi - t_1^{-2}} \bigg) \phi(z^2/\xi,q^2,q^4 \tilde \xi z^2;q^2 \xi \tilde \xi).
\end{align*}
Using \eqref{qGauss} we obtain
\begin{align*}
\Xi(z)&= \frac{(q^2 \xi \tilde \xi)_\infty (\tilde \xi z^2)_\infty}{(\xi \tilde \xi)_\infty (q^2 \tilde \xi z^2)_\infty} - q^2 \tilde \xi^2 z^2 (\xi-z^2)(1-q^2) \frac{(q^4 \xi \tilde \xi)_\infty (q^2 \tilde \xi z^2)_\infty}{(\xi \tilde \xi)_\infty (q^2 \tilde \xi z^2)_\infty} + \\
& \qq -  \bigg( 1 - (1-q^2) \frac{\xi - z^2}{\xi - t_1^{-2}} \bigg) \frac{(q^4 \xi \tilde \xi)_\infty (q^2 \tilde \xi z^2)_\infty}{(q^2 \xi \tilde \xi)_\infty (q^4 \tilde \xi z^2)_\infty} \\
&= \frac{1-\tilde \xi z^2}{1- \xi \tilde \xi}  - q^2 \tilde \xi^2 z^2  \frac{(\xi-z^2)(1-q^2)}{(1-\xi \tilde \xi)(1-q^2 \xi \tilde \xi)}  -  \bigg( 1 - (1-q^2) \frac{\xi - z^2}{\xi - t_1^{-2}} \bigg) \frac{1-q^2 \tilde \xi z^2}{1-q^2 \xi \tilde \xi} .
\end{align*}
After collecting terms that are manifestly proportional to $\xi - z^2$ we obtain
\begin{align*}
\Xi(z)&= \frac{1-\tilde \xi z^2}{1- \xi \tilde \xi} - \frac{1-q^2 \tilde \xi z^2}{1-q^2 \xi \tilde \xi} + \frac{(\xi-z^2)(1-q^2)}{1-q^2 \xi \tilde \xi} \bigg( \frac{1-q^2 \tilde \xi z^2}{\xi - t_1^{-2}} -   \frac{q^2 \tilde \xi^2 z^2}{1-\xi \tilde \xi} \bigg).
\end{align*}
The first two terms can be combined, yielding
\begin{align*}
\Xi(z)&= \tilde \xi \frac{(1-q^2)(\xi-z^2)}{(1-\xi \tilde \xi)(1-q^2 \xi \tilde \xi)} + \frac{(\xi-z^2)(1-q^2)}{1-q^2 \xi \tilde \xi} \bigg( \frac{1-q^2 \tilde \xi z^2}{\xi - t_1^{-2}} -   \frac{q^2 \tilde \xi^2 z^2}{1-\xi \tilde \xi} \bigg)\\
&= \frac{(\xi-z^2)(1-q^2)}{1-q^2 \xi \tilde \xi} \bigg( \tilde \xi \frac{1-q^2 \tilde \xi z^2}{1-\xi \tilde \xi} +  \frac{1-q^2 \tilde \xi z^2}{\xi - t_1^{-2}} \bigg).
\end{align*}
Note the appearance of the overall factor $\xi - z^2$.
The two terms inside parentheses can be combined in the same way, yielding
\begin{align*}
\Xi(z)& = (1-q^2) \frac{(t_1-\tilde \xi t_1^{-1})(\xi-z^2)(1-q^2 \tilde \xi z^2)}{(\xi t_1-t_1^{-1})(1-q^2 \xi \tilde \xi)(1- \xi \tilde \xi)},
\end{align*}
so that also an overall factor $1-q^2 \tilde \xi z^2$ has emerged.
Comparing with \eqref{N=2:intermediate}, we see that the two $z$-dependent factors in the denominator are cancelled, yielding the following polynomial in $z^2$: 
\eq{ \label{Q:0110}
\mc T^W(z)^{0,1}_{1,0} =  q z^2 (1-q^2) \frac{(t_1-\tilde \xi t_1^{-1})(t_2 - \xi t_2^{-1})}{(1-q^2 \xi \tilde \xi)(1- \xi \tilde \xi)}.
}

\subsection{Commutativity} 

Considering the locations of the zero entries of $\mc T^W(z)$, we see that the property $[\mc T^W(y),\mc T^W(z)]=0$ is equivalent to the following ratios being independent of $z$:
\eq{ \label{Q:ratiosofentries}
\frac{\mc T^W(z)^{0,1}_{0,1} - \mc T^W(z)^{1,0}_{1,0}}{\mc T^W(z)^{0,1}_{1,0}}, \qq \frac{\mc T^W(z)^{0,1}_{0,1} - \mc T^W(z)^{1,0}_{1,0}}{\mc T^W(z)^{1,0}_{0,1}}.
}
As a consequence of the explicit formula \eqref{Q:0110} and the fact that $\mc T^W(z)^{1,0}_{0,1}$ can be obtained from $\mc T^W(z)^{0,1}_{1,0}$ by inverting $t_1$ and $t_2$, it is sufficient to prove that the first ratio of \eqref{Q:ratiosofentries} is independent of $z$.
To this end, we compute the difference $\mc T^W(z)^{0,1}_{0,1} - \mc T^W(z)^{1,0}_{1,0} \in \C$ following the procedure of the diagonal entries.
We simply present the result of the first part of the computation:
\begin{align*}
& \wt K^W(z) \Big( \mc M^W(z)^{0,1}_{0,1} - \mc M^W(z)^{1,0}_{1,0} \Big) = \\
&= \frac{(1-q^2) z^2 (1-\xi z^2)}{(\xi - z^2)(1-q^2 \tilde \xi z^2)} \frac{(z^2/\xi)_D}{(q^4 \tilde \xi z^2)_D} (\xi \tilde \xi)^D + \\
& \qq + \frac{z^2}{1-q^2 \tilde \xi z^2} \Bigg( q^2 (t_1^2+t_1^{-2}-t_2^2-t_2^{-2}) + (1-q^2) \bigg(q^2 \xi - \frac{1-z^4}{\xi - z^2} \bigg) \Bigg) \frac{(z^2/\xi)_D}{(q^4 \tilde \xi z^2)_D} (q^2 \xi \tilde \xi)^D
\end{align*}
so that
\begin{align*}
& \mc T^W(z)^{0,1}_{0,1} - \mc T^W(z)^{1,0}_{1,0} = \\
&= \frac{(1-q^2) z^2 (1-\xi z^2)}{(\xi - z^2)(1-q^2 \tilde \xi z^2)} \phi(z^2/\xi,q^2,q^4 \tilde \xi z^2;\xi \tilde \xi) + \\
& \qq + \frac{z^2}{1-q^2 \tilde \xi z^2} \Bigg( q^2 (t_1^2+t_1^{-2}-t_2^2-t_2^{-2}) + (1-q^2) \bigg( q^2 \xi - \frac{1-z^4}{\xi - z^2} \bigg) \Bigg) \phi(z^2/\xi,q^2,q^4 \tilde \xi z^2;q^2 \xi \tilde \xi).
\end{align*}
Using the relations \eqref{qGauss} and \eqref{contiguous} this yields
\[
\mc T^W(z)^{0,1}_{0,1} - \mc T^W(z)^{1,0}_{1,0} = q^2 z^2 \bigg( \frac{t_1^2+t_1^{-2}-t_2^2-t_2^{-2}}{1-q^2 \xi \tilde \xi} - \frac{(q-q^{-1})^2(\xi - \tilde \xi)}{(1-\xi \tilde \xi)(1-q^2 \xi \tilde \xi)} \bigg)
\]
and the required property of the ratios \eqref{Q:ratiosofentries} readily follows.


\section{The Algebraic Bethe Ansatz for open chains} \label{app:ABA}

In this section, we summarize the Algebraic Bethe Ansatz (ABA) for open chains \cite{Sk88} in the notation of this paper.
Recall that the objective of the ABA is to find eigenvectors and eigenvalues of the transfer matrix $\mc T^V(z)$ defined by \eqref{T:def} where the monodromy matrix $\mc M^V(z)\in \End(V\otimes V^{\otimes N})$ is defined by \eqref{MV:def}.
It follows from \eqref{KV:RE} that $\mc M^V(z)$ satisfies the reflection equation
\eq{
R(y/z) \mc M^V_1(y) R(y z) \mc M^V_2(z) = \mc M^V_2(z) R(y z) \mc M^V_1(y) R(y/z)\in  \End(V\otimes V\otimes V^{\otimes N}).\label{VV:RE}
}
We may write
\[ 
R(z)=\begin{pmatrix} a(z)&0&0&0\\0&b(z)&c(z)&0\\0&c(z)&b(z)&0\\0&0&0&a(z)\end{pmatrix},\qq 
\mc M^V(z)=\begin{pmatrix} A(z)& B(z)\\C(z)&D(z)\end{pmatrix},
\]
for some $A(z), B(z), C(z), D(z)\in \End(V^{\otimes N})$ and 
\[
a(z):=1-q^2z^2, \qq b(z):=q(1-z^2), \qq c(z):=z(1-q^2).
\] 
Then it follows from taking suitable components of \eqref{VV:RE} that
\begin{align*} 
b(y/z) a(y z) A(z) B(y)&= a(y/z)b(yz)B(y)A(z)-b(y/z)c(yz)  B(z) D(y)  -c(y/z) b(yz)  B(z) A(y)\\
b(z/y)b(yz)D(z) B(y)&= a(z/y) a(yz) B(y) D(z)- c(z/y)a(yz) B(z) D(y) + \\
& \qq -c(z/y) c(yz)A(z) B(y)  +a(z/y) c(yz)A(y) B(z), \\
B(z) B(y) &= B(y) B(z).
\end{align*}
Hence we can write 
\begin{align} 
\label{eq:RTT1} A(z) B(y) &= \alpha_1(z,y) B(y) A(z) + \al_2(z,y) B(z) A(y) \hspace{33.5mm} + \al_4(z,y) B(z) D(y) , \\
\label{eq:RTT2} D(z) B(y) &= \beta_1(z,y) B(y) D(z) + \beta_2(z,y) B(z) D(y) + \beta_3(z,y) B(y) A(z)+\beta_4(z,y)B(z) A(y),
\end{align}
where
\begin{align*}
\alpha_1(z,y)&= \frac{a(y/z) b(yz)}{b(y/z) a(yz) }, \qq  \al_2(z,y)=-\frac{c(y/z) b(yz)}{b(y/z) a(yz) }, \qq  \al_4(z,y)=-\frac{c(yz)}{a(yz) },\\
\beta_1(z,y)&= \frac {  \left(a (yz)^{2}-  c(yz)^{2}\right)a(z/y)}{a(yz) b(yz) b(z/y) },\qq \beta_2(z,y)=-\frac {  \left(a(yz)^{2}-  c(yz)^{2}\right)c(z/y)}{a(yz) b(yz) b(z/y)},
\\
\beta_3(z,y)&= -\frac {c(yz) c(z/y)}{b(y/z) a(yz) b(z/y)^2} 
\left(a(y/z) b(z/y) + b(y/z) a(z/y) \right), 
\\
\beta_4(z,y)&= \frac {c(yz) }{b(y/z) a(yz) b(z/y)^2} 
\left(c(y/z) c(z/y) b(z/y) + b(y/z) a(z/y)^2\right).
\end{align*}
Note that \eqrefs{eq:RTT1}{eq:RTT2} do not have the same number of terms.
Following Sklyanin, we can modify $D(z)$ in order to remove the third term on the right-hand side of \eqref{eq:RTT2}.
Namely, observe that
\[
\frac{\beta_3(z,y)}{\beta_1(z,y)-\alpha_1(z,y)}=\frac{z^2(1-q^2)}{q^2z^4-1} =: f(z)
\]
is independent of $y$ so that we may conisder $\wt{D}(z):=D(z)+f(z) A(z)$.
As a consequence we have
\begin{align}
A(z) B(y) &= \alpha_1(z,y) B(y) A(z) + \wt{\alpha}_2(z,y) B(z) A(y)  + \al_4(z,y) B(z) \wt{D}(y) ,\label{eq:RTT3}\\
\wt{D}(z) B(y) &= \beta_1(z,y) B(y) \wt{D}(z) + \wt{\beta}_2(z,y) B(z) \wt{D}(y) + \wt{\beta}_4(z,y)B(z) A(y),\label{eq:RTT4}
\end{align}
where
\begin{gather*}
\wt{\alpha}_2(z,y):=\al_2(z,y) - \al_4(z,y) f(y), \qq \qq \qq \wt{\beta}_2(z,y):=\beta_2(z,y)+\al_4(z,y) f(z),\\
\wt{\beta}_4(z,y):= \beta_4(z,y)-\beta_2(z,y)f(y)+\wt{\alpha}_2(z,y) f(z).
\end{gather*}

Moving on to finding the eigenvectors and eigenvalues of $\mc T^V(z)$ by making use of the relations \eqref{eq:RTT3} and \eqref{eq:RTT4}, we note that 
\[
\mc T^V(z)= \gamma_+(z) A(z) +\gamma_-(z) \wt{D}(z),
\]
where 
\[
\gamma_+(z) := \widetilde{K}^V(z)^0_0-f(z) \widetilde{K}^V(z)^1_1, \qq \gamma_-(z) := \widetilde{K}^V(z)^1_1.
\]
Let $\Omega^{(N)} := (v^0)^{\otimes N} \in V^{\otimes N}$.
The algebraic Bethe ansatz posits that eigenvectors of $\mc T^V(z)$ are given by \emph{Bethe states}, which are vectors of the form $B(y_1) B(y_2) \cdots B(y_M) \Omega^{(N)}$ for some $M \in \Z_{\ge 0}$ and $(y_1,\ldots,y_M) \in \mathbb{C}^M$.
For $\bm \al \in \{ 0,1\}^N$, recall the notation $s(\bm \al)$ defined in \eqref{s:def}.
Note that \eqref{M:totalspin} implies
\[
B(y) v^{\al_1} \ot \cdots \ot v^{\al_N} \in \bigoplus_{\bm \be \in \{ 0,1\}^N \atop s(\bm \be) = s(\bm \al) +1} v^{\be_1} \ot \cdots \ot v^{\be_N}
\]
and hence, without loss of generality, we may impose $M \le N$.
Moreover, as a consequence of $B(z)B(y_i)=B(y_i)B(z)$ and \eqrefs{eq:RTT3}{eq:RTT4}, we can write
\[
\mc T^V(z) B(y_1) B(y_2) \cdots B(y_M) \Omega^{(N)} \in V^{\otimes N}
\]
as a $\C$-linear combination of $B(y_1) B(y_2) \cdots B(y_M) \Omega^{(N)}$ and the $M$ ``unwanted'' vectors
\eq{
B(z) \prod_{j\neq i} B(y_j)\Omega^{(N)}.\label{eq:unwanted}
}
The Bethe equations are just the conditions on the $y_i$ that guarantee that the Bethe state is an eigenvector of $\mc T^V(z)$, i.e.~the conditions such that the coefficients of the unwanted vectors \eqref{eq:unwanted} all vanish.
In order to derive these equations, consider the coefficient of $B(z) \prod_{j>1} B(y_j) \Omega^{(N)}$ in the expansion of $\mc T^V(z) B(y_1) B(y_2) \cdots B(y_M) \Omega^{(N)}$.
The only contribution to this term arises when moving $\mc T^V(z)$ through the factor $B(y_1)$ and applying \eqrefs{eq:RTT3}{eq:RTT4}.
For $X,Y \in V^{\ot N}$ we write $X \equiv_1 Y$ if and only if $X-Y$ is a linear combination of the ``wanted'' vector $B(y_1) B(y_2) \cdots B(y_M) \Omega^{(N)}$ and the other unwanted vectors $B(z) \prod_{j\neq i} B(y_j)\Omega^{(N)}$ with $i \ne 1$.
We have
\[
\mc T^V(z) B(y_1) B(y_2) \cdots B(y_M) \Omega^{(N)}
\equiv_1 B(z) \left(\phi_+(z,y_1) A(y_1)+\phi_-(z,y_1)\wt{D}(y_1)\right)  B(y_2) \cdots B(y_M) \Omega^{(N)},
\]
where
\begin{align*}
\phi_+(z,y)&:=\gamma_+(z) \wt{\alpha}_2(z,y) + \gamma_-(z) \wt{\beta}_4(z,y) ,\\
\phi_-(z,y)&:=\gamma_+(z) \alpha_4(z,y) + \gamma_-(z) \wt{\beta}_2(z,y) .
\end{align*}
Hence, we have
\begin{align*}
&\mc T^V(z) B(y_1) B(y_2) \cdots B(y_M) \Omega^{(N)} \equiv_1 \\
&\equiv_1  B(z)  B(y_2) \cdots B(y_M)  
\left(\phi_+(z,y_1) \prod\limits_{j\neq 1}  \alpha_1(y_1,y_j)    A(y_1)  +\phi_-(z,y_1)    \prod\limits_{j\neq 1}  \beta_1(y_1,y_j)   \wt{D}(y_1)\right)   \Omega^{(N)}, \\
&\equiv_1 \left(\phi_+(z,y_1)  \Delta_+(y_1) \prod\limits_{j\neq 1}  \alpha_1(y_1,y_j)    +\phi_-(z,y_1)  \Delta_-(y_1)  \prod\limits_{j\neq 1}  \beta_1(y_1,y_j)   \right)   B(z)  B(y_2) \cdots B(y_M) \Omega^{(N)},
\end{align*}
where we have used that $A(y) \Omega^{(N)} = \Delta_+(y)  \Omega^{(N)}$, and $\wt{D}(y) \Omega^{(N)} =\Delta_-(y)  \Omega^{(N)}$ for some $\Delta_\pm(y)$ whose explicit expressions we derive below.
Since $[B(y_i),B(y_j)]=0$, the coefficient of any unwanted vector $B(z) \prod_{j\neq i} B(y_j)\Omega$ can be obtained by swapping $y_1$ and $y_i$.
It follows that the Bethe vector is an eigenvector of $\mc T^V(z)$ if the \emph{Bethe equations} are satisfied: \\[2mm]
\eq{
\phi_+(z,y_i) \Delta_+(y_i)  \prod\limits_{j\neq i}  \alpha_1(y_i,y_j) + \phi_-(z,y_i) \Delta_-(y_i) \prod\limits_{j\neq i}  \beta_1(y_i,y_j)   =0  
\label{eq:BAE}
}
for all $i \in \{1,2,\cdots,M\}$.
In that case, the Bethe state $B(y_1) B(y_2) \cdots B(y_M) \Omega^{(N)}$ is an eigenvector of $\mc T^V(z)$.
Although we do not need this, we mention here that a slightly more careful analysis yields that the corresponding eigenvalue equals
\eq{
\gamma_+(z)  \prod\limits_{j=1}^M  \alpha_1(z,y_j) \Delta_+(z)+   \gamma_-(z)  \prod\limits_{j=1}^M  \beta_1(z,y_j) \Delta_-(z).\label{eq:Teig}
}

Explicit expressions for the eigenvalues $\Delta_\pm(z)$ can be obtained by an inductive argument on the lattice length $N$.
Including the inhomogeneities explicitly in the notation we have
\begin{align*} 
A(z;t_1,\ldots,t_N)\Omega^{(N)} =& a(z/t_N) a(z \,t_N) v^0\otimes A(z;t_1,\ldots,t_{N-1}) \Omega^{(N-1)},\\
D(z;t_1,\ldots,t_N)\Omega^{(N)} =& c(z/t_N) c(z \,t_N) v^0\otimes A(z;t_1,\ldots,t_{N-1}) \Omega^{(N-1)} + \\
& \qq \qq + b(z/t_N) b(z \,t_N) v^0\otimes D(z;t_1,\ldots,t_{N-1}) \Omega^{(N-1)}
\end{align*}
From the identity 
\[
f(z) a(z/y) a(z \,y)  +   c(z/y) c(z \,y) = f(z) b(z/y) b(z \,y)
\]
(a consequence of either the explicit expressions for the various functions or directly the Yang-Baxter equation \eqref{R:YBE})
it follows that
\[
\wt{D}(z;t_1,\ldots,t_N)\Omega^{(N)} = b(z/t_N) b(z \,t_N) v^0\otimes  \wt{D}(z;t_1,\ldots,t_{N-1}) \Omega^{(N-1)}.
\]
For $N=0$ the eigenvalues are given by
\[
A(z;\emptyset) \, = K^V(z)^0_0, \qq \qq  \wt{D}(z;\emptyset) \, = K^V(z)^1_1 + f(z) K^V(z)^0_0 
\]
so that we obtain
\[
\Delta_+(z) = K^V(z)^0_0 \prod_{n=1}^N a(z/t_n) a(z t_n), \qq \Delta_-(z) = \Big( K^V(z)^1_1 + f(z) K^V(z)^0_0 \Big) \prod_{n=1}^N  b(z/t_n) b(z t_n) .
\]
Combining this with the explicit expressions
\eq{ \label{phiAD:explicit}
\begin{aligned}
\phi_+(z,y) &= \frac{(q^2-1) y z (1-q^4 z^4)}{(y^2 - z^2)(1 - q^2 y^2 z^2)} \frac{1-y^4}{1-q^2 y^4} (1- \tilde \xi y^2) \\
\phi_-(z,y) &= \frac{(q^2-1) y z (1-q^4 z^4)}{(y^2 - z^2)(1 - q^2 y^2 z^2)} (q^{-2} \tilde \xi - y^2)
\end{aligned}
}
which follow directly from the above definitions,
we deduce that the Bethe equations \eqref{eq:BAE} take the $z$-independent form
\eq{ \label{eq:BEfinal2}
\hspace{-10pt}
\begin{aligned}
& (1-\txi y_i^2) (1-\xi y_i^2) \left( \prod_{n=1}^N (1-q^2 y_i^2 t_n^2)(1-q^2 y_i^2 t_n^{-2})\right) \prod_{j=1 \atop j \ne i}^M (1-q^{-2}y_i^2y_j^{-2})(1-y_i^2y_j^2) = \\
& = q^{2(N-M)} (\txi -q^2y_i^2)  (\xi-q^2 y_i^2) \left( \prod_{n=1}^N (1-y_i^2t_n^2)(1-y_i^2t_n^{-2}) \right) \prod_{j=1 \atop j \ne i}^M (1-q^2 y_i^2y_j^{-2})(q^{-2}-q^2 y_i^2y_j^2)
\end{aligned}
\hspace{-10pt}
}
for all $i \in \{ 1,2,\ldots,M\}$.
This system is equivalent to \eqref{eq:BEs} owing to the explicit formulae \eqref{p:def} and the factorization \eqref{Q:eigenvalue:factorized}.

 
\section{Baxter's TQ relation for closed chains} \label{app:closed}

We briefly review the construction of the transfer matrix and Baxter's Q-operator for \emph{closed} chains, which generally runs parallel to the theory outlined in this paper but is simpler.
Consider the quantum monodromy matrices
\begin{align*}
\mc M^{V,\rm c}_a(z) &= R_{aN}(\tfrac{z}{t_N}) \cdots R_{a1}(\tfrac{z}{t_1}) && \in \End(V \ot V^{\ot N}), \\
\mc M^{W,\rm c}_a(z,r) &= L_{aN}(\tfrac{z}{t_N},r) \cdots L_{a1}(\tfrac{z}{t_1},r) && \in \End(W \ot V^{\ot N}),
\end{align*}
which are direct analogues of \eqrefs{MV:def}{MW:def}.
These objects satisfy
\begin{align*}
\mc M^{W,\rm c}_a(y,r) \mc M^{V,\rm c}_b(z) &= \big( L_{aN}(\tfrac{y}{t_N},r) R_{bN}(\tfrac{z}{t_N}) \big) \cdots \big( L_{a1}(\tfrac{y}{t_1},r) R_{b1}(\tfrac{z}{t_1}) \big) && \in \End(W \ot V \ot V^{\ot N}) \\
\mc M^{V,\rm c}_b(z)  \mc M^{W,\rm c}_a(y,r) &= \big( R_{bN}(\tfrac{z}{t_N}) L_{aN}(\tfrac{y}{t_N},r) \big) \cdots \big( R_{b1}(\tfrac{z}{t_1}) L_{a1}(\tfrac{y}{t_1},r) \big) && \in \End(W \ot V \ot V^{\ot N})
\end{align*}
and hence
\[
L_{ab}(\tfrac{y}{z},r) \mc M^{W,\rm c}_a(y,r) \mc M^{V,\rm c}_b(z) = \mc M^{V,\rm c}_b(z) \mc M^{W,\rm c}_a(y,r) L_{ab}(\tfrac{y}{z},r),
\]
which is to be compared to Lemma \ref{lem:MW:RE}.
Fix $\ze \in \C^\times$ so that $\ze^D$ is an invertible element of ${\rm osc}_q$.
In analogy with Lemma \ref{lem:iotatau:row}, it follows from \eqref{iotatau:intertwine:plus} and \eqrefs{L:iota}{L:tau} that
\eq{ \label{iotatau:1row}
\begin{aligned}
& \bigg( \ze^D \ot \Big( \begin{smallmatrix} 1 & 0 \\ 0 & \ze \end{smallmatrix} \Big) \ot \Id_{V^{\ot N}} \bigg) \mc M^{W,\rm c}_a(z,r) \mc M^{V,\rm c}_b(z) (\iota(r) \ot \Id_{V^{\ot N}}) = \\
& \qq \qq = p_+^{\rm c}(z) (\iota(r) \ot \Id_{V^{\ot N}}) \Big( \ze^D \ot \Id_{V^{\ot N}} \Big) \mc M^{W,\rm c}_a(qz,qr) \\
& \hspace{40mm} \in \Hom(W \ot V^{\ot N},W \ot V \ot V^{\ot N}), \\
& (\tau(r) \ot \Id_{V^{\ot N}}) \bigg( \ze^D \ot \Big( \begin{smallmatrix} 1 & 0 \\ 0 & \ze \end{smallmatrix} \bigg) \ot \Id_{V^{\ot N}} \Big) \mc M^{W,\rm c}_a(z,r) \mc M^{V,\rm c}_b(z) = \\
& \qq \qq = p_-^{\rm c}(z) \Big( \ze^D \ot \Id_{V^{\ot N}} \Big) \mc M^{W,\rm c}_a(q^{-1}z,q^{-1}r) (\tau(r) \ot \Id_{V^{\ot N}}) \\ 
& \hspace{40mm} \in \Hom(W \ot V \ot V^{\ot N},W \ot V^{\ot N}), 
\end{aligned}
}
where
\[
p_+^{\rm c}(z) = \ze \prod_{n=1}^N (1-t_n^{-2}z^2), \qq \qq p_-^{\rm c}(z) = q^N \prod_{n=1}^N (1-q^2t_n^{-2}z^2).
\]
The explicit expressions for $L(z,r)$ imply the factorization
\eq{ \label{MW:closed:r-dependence}
\mc M^{W,\rm c}(z,r) = \mc M^{W,\rm c}(z,1)  \left( \Id \ot \Big( \begin{smallmatrix} r & 0 \\ 0 & 1 \end{smallmatrix} \Big)^{\ot N} \right) ,
}
cf.~\eqref{MW:r-dependence}.
The transfer matrices are given by
\eq{ \label{Tc:def}
\begin{aligned}
\mc T^{V,\rm c}(z) &= \Tr_a \bigg( \Big( \begin{smallmatrix} 1 & 0 \\ 0 & \ze \end{smallmatrix} \Big) \ot \Id_{V^{\ot N}} \bigg) \mc M^{V,\rm c}_a(z) && \in \End(V^{\ot N}), \\
\mc T^{W,\rm c}(z) &= \Tr_a \Big( \ze^D \ot \Id_{V^{\ot N}} \Big) \mc M^{W,\rm c}_a(z,1) && \in \End(V^{\ot N}).
\end{aligned}.
}
Assuming $|q|<1$ as in the open case, the series implicit in the definition of $\mc T^{W,\rm c}(z)$ converges if 
\eq{ \label{zeta:criterion}
|\ze|<q^N.
}
From the explicit formula \eqref{L:def} it is easy to see that in this case the matrix entries of $\mc T^{W,\rm c}(z)$ are polynomials in $z^2$ of degree at most $N$.
It is well-known that the transfer matrices satisfy
\[
[\mc T^{V,\rm c}(y),\mc T^{V,\rm c}(z)] = [\mc T^{W,\rm c}(y),\mc T^{V,\rm c}(z)] = [\mc T^{W,\rm c}(y),\mc T^{W,\rm c}(z)] = 0;
\]
we refer to \cite{BLZ99} and \cite{BGKNR13} for proofs of these in terms of the universal R-matrix of $U_q(\wh{\mfsl}_2)$.
Moreover, $\mc T^{W,\rm c}(z)$ is generically invertible since $\mc T^{W,\rm c}(0) = (1-\ze q^\Si)^{-1}$ is invertible, where $\Si$ is the total spin operator defined in \eqref{Sigma:def}.

\begin{rmk}
Let $M \in \{ 0,1,\ldots,N \}$ so that $N-2M$ is an eigenvalue of $\Si$.
The integrability objects $\mc T^{V,\rm c}(z)$ and $\mc T^{W,\rm c}(z)$ of the closed chain can be obtained (up to overall scalar factors) as a limit of the corresponding objects for the open chain in a process known as \emph{attenuation}, see \cite[Sec.~2]{LIL14} for $\mc T^{V, \rm c}(z)$ for the details (the argument for $\mc T^{W,\rm c}(z)$ is similar).
In our notation, attenuation corresponds to making replacements and taking a limit as follows:
\eq{ \label{attenuation}
z \mapsto \om z, \qq t_n \mapsto \om t_n \text{ for all } n \in \{1,2,\ldots,N\}, \qq \om \to 0.
}
In particular, the quotients $z/t_n$ are unchanged and the products $zt_n$ vanish.
It leads to the following relation between the twist parameter $\ze$ for the closed chain and the parameters $\xi,\tilde \xi$ for the open chain:
\eq{ \label{zeta:relation}
\zeta = \xi \tilde \xi q^{N-2M}.
}
In this limit we get
\[
\mc T^V(z) \to q^M \, \mc T^{V,\rm c}(z), \qq \mc T^W(z) \to \mc T^{W,\rm c}(z).
\]
Note that $|\xi \tilde \xi|<|q|^{2N}$ and $|q|^N \le |q^{N-2M}| \le |q|^{-N}$ by the convergence criterion required for the proof of Theorem \ref{thm:Q:convergent}, so that $|\xi \tilde \xi q^{N-2M}|<|q|^N$.
In view of \eqref{zeta:relation}, this precisely matches \eqref{zeta:criterion}.
\hfill \rmkend
\end{rmk}

In analogy with \eqref{Q:def}, we define the Q-operator by
\eq{
\label{Qc:def} \mc Q^{\rm c}(z) = \Big( \begin{smallmatrix} z & 0 \\ 0 & 1 \end{smallmatrix} \Big)^{\ot N} \mc T^{W,\rm c}(z) \in \End(V^{\ot N}).
}
By the explicit formula \eqref{L:def}, it follows that the entries of $\mc Q^{\rm c}(z)$ are polynomials in $z$ of degree at most $2N$.
Since $\mc T^{V,\rm c}(z)$ and $\mc T^{W,\rm c}(z)$ commute with operators on $\End(V^{\ot N})$ which are tensorial $N$-th powers of diagonal matrices we have
\[
[\mc Q^{\rm c}(y),\mc T^{V,\rm c}(z)] = [\mc Q^{\rm c}(y),\mc T^{W,\rm c}(z)] = [\mc Q^{\rm c}(y),\mc Q^{\rm c}(z)] = 0.
\]
Baxter's TQ relation now follows as a consequence of \eqrefs{iotatau:1row}{MW:closed:r-dependence} and Lemma \ref{lem:SES}:
\eq{ \label{Qc:TQrelation}
\mc T^{V,\rm c}(z) \mc Q^{\rm c}(z) = p_+^{\rm c}(z) \mc Q^{\rm c}(qz) + p_-^{\rm c}(z) \mc Q^{\rm c}(q^{-1}z).
}

\begin{rmk}
Owing to the additional diagonal prefactors in the definitions of the Q-operators, the open Q-operator \eqref{Q:def} itself does not simplify to the closed Q-operator \eqref{Qc:def} using the straightforward limit \eqref{attenuation}. 
Still, Baxter's relation for the closed chain can be obtained by attenuation, as follows.
First of all, the explicit formula \eqref{Q:def} implies that Baxter's relation for the open chain \eqref{QT:rel} is equivalent to
\eq{ \label{QT:rel:2}
(1-q^2 z^4) \mc T^V(z) \mc T^W(z) = p_+(z)  \mc T^W(q z) \Big( \begin{smallmatrix} q^2 & 0 \\ 0 & 1 \end{smallmatrix} \Big)^{\ot N} + p_-(z) \mc T^W(q^{-1} z) \Big( \begin{smallmatrix} q^{-2} & 0 \\ 0 & 1 \end{smallmatrix} \Big)^{\ot N}. 
}
In the limit \eqref{attenuation} we have
\[
p_+(z) \to \xi \wt \xi \zeta^{-1} p_+^{\rm c}(z), \qq \qq p_-(z) \to q^N p_-^{\rm c}(z).
\]
Let $E(N-2M)$ be the eigenspace of $\Si$ with eigenvalue $N-2M$ for $M \in \{0,1,\ldots,N\}$.
Applying the limit \eqref{attenuation} to \eqref{QT:rel:2}, we obtain
\begin{align*}
&q^M \mc T^{V,\rm c}(z) \mc T^{W,\rm c}(z)|_{E(N-2M)} = \\
&\qu = \Bigg( \xi \wt \xi \ze^{-1} p_+^{\rm c}(z) \Big( \begin{smallmatrix} q^2 & 0 \\ 0 & 1 \end{smallmatrix} \Big)^{\ot N} \mc T^{W,c}(q z) + q^N p_-^{\rm c}(z) \Big( \begin{smallmatrix} q^{-2} & 0 \\ 0 & 1 \end{smallmatrix} \Big)^{\ot N} \mc T^{W,\rm c}(q^{-1} z) \Bigg) \bigg|_{E(N-2M)} \\
&\qu = \Bigg( q^{2N-2M} \xi \wt \xi \ze^{-1} p_+^{\rm c}(z) \mc T^{W,c}(q z) + q^{2M-N} p_-^{\rm c}(z) \mc T^{W,\rm c}(q^{-1} z) \Bigg)\bigg|_{E(N-2M)}
\end{align*}
so that by \eqref{zeta:relation}
\[
\mc T^{V,\rm c}(z) \mc T^{W,\rm c}(z)|_{E(N-2M)} = \Bigg( q^{N-M} p_+^{\rm c}(z) \mc T^{W,c}(q z) + q^{M-N} p_-^{\rm c}(z) \mc T^{W,\rm c}(q^{-1} z) \Bigg)\bigg|_{E(N-2M)}.
\]
Hence
\begin{align*}
\mc T^{V,\rm c}(z) \mc T^{W,\rm c}(z) &= p_+^{\rm c}(z) \Big( \begin{smallmatrix} q & 0 \\ 0 & 1 \end{smallmatrix} \Big)^{\ot N} \mc T^{W,c}(q z) + p_-^{\rm c}(z) \Big( \begin{smallmatrix} q^{-1} & 0 \\ 0 & 1 \end{smallmatrix} \Big)^{\ot N} \mc T^{W,\rm c}(q^{-1} z) 
\end{align*}
which, by the explicit formula \eqref{Qc:def}, is equivalent to \eqref{Qc:TQrelation}. \hfill \rmkend
\end{rmk}

Let $Q^{\rm c}(z)$ be an eigenvalue of $\mc Q^{\rm c}(z)$ so that
\[
Q^{\rm c}(z) = f z^{N-M} \prod_{i=1}^M (z^2 - y_i^2)
\]
for some $f,y_1,\ldots,y_M \in \C^\times$ and $M \in \{ 0,1,\ldots,N\}$.
Since $T^{V,\rm c}(z)$ is well-defined for all $z \in \C$, setting $z=y_i$ in the eigenvalue version of \eqref{Qc:TQrelation} yields the Bethe equations for closed XXZ chains:
\eq{ \label{eq:BEsc}
\begin{gathered}
\left( \prod_{n=1}^N (1-q^2y_i^2t_n^{-2}) \right) \prod_{j=1 \atop j \ne i}^M (q^2 - y_i^2y_j^{-2})  = q^N \ze \left( \prod_{n=1}^N (1-y_i^2t_n^{-2}) \right)   \prod_{j=1 \atop j \ne i}^M (1 - q^2 y_i^2 y_j^{-2}) \\
\hspace{108mm} \text{for all } i \in \{ 1,2,\ldots,M \}.
\end{gathered}
}

\begin{rmk}
Subject to \eqref{zeta:relation}, in the limit \eqref{attenuation} (in particular, $y_i \mapsto \om y_i$ before letting $\om \to 0$) the open Bethe equations \eqref{eq:BEfinal2} tend to \eqref{eq:BEsc}, also see \cite[Rmk.~2.4]{LIL14}.
\hfill \rmkend
\end{rmk}



\begin{thebibliography}{10}

\bibitem{Ba72}
R.J. Baxter.
\newblock {Partition function of the eight-vertex lattice model}.
\newblock {\em Annals of Physics}, 70(1):193--228, 1972.

\bibitem{Ba73a}
R.J. Baxter.
\newblock Eight-vertex model in lattice statistics and one-dimensional
  anisotropic {H}eisenberg chain. {I}. {S}ome fundamental eigenvectors.
\newblock {\em Annals of Physics}, 76(1):1--24, 1973.

\bibitem{Ba73b}
R.J. Baxter.
\newblock Eight-vertex model in lattice statistics and one-dimensional
  anisotropic {H}eisenberg chain. {II}. {E}quivalence to a generalized ice-type
  lattice model.
\newblock {\em Annals of Physics}, 76(1):25--47, 1973.

\bibitem{BLZ96}
V.V. Bazhanov, S.L. Lukyanov, and A.B. Zamolodchikov.
\newblock Integrable structure of conformal field theory, quantum {K}d{V}
  theory and thermodynamic {B}ethe ansatz.
\newblock {\em Communications in Mathematical Physics}, 177(2):381--398, 1996.

\bibitem{BLZ97}
V.V. Bazhanov, S.L. Lukyanov, and A.B. Zamolodchikov.
\newblock Integrable {S}tructure of {C}onformal {F}ield {T}heory {II}.
  {Q}-operator and {DDV} equation.
\newblock {\em Commun. Math. Phys.}, 190:247--278, 1997.

\bibitem{BLZ99}
V.V. Bazhanov, S.L. Lukyanov, and A.B. Zamolodchikov.
\newblock {Integrable structure of conformal field theory III. The Yang--Baxter
  relation}.
\newblock {\em Commun. Math. Phys.}, 200(2):297--324, 1999.

\bibitem{AF97}
A.~Antonov and B.~Feigin.
\newblock {Quantum group representations and the Baxter equation}.
\newblock {\em Physics Letters B}, 392(1-2):115--122, 1997.

\bibitem{Ko05}
C.~Korff.
\newblock {A Q-operator identity for the correlation functions of the infinite
  XXZ spin-chain}.
\newblock {\em Journal of Physics A: Mathematical and General}, 38(30):6641,
  2005.

\bibitem{PG92}
V.~Pasquier and M.~Gaudin.
\newblock {The periodic Toda chain and a matrix generalization of the Bessel
  function recursion relations}.
\newblock {\em Journal of Physics A: Mathematical and General}, 25(20):5243,
  1992.

\bibitem{De99}
S.E. Derkachov.
\newblock {Baxter's Q-operator for the homogeneous XXX spin chain}.
\newblock {\em Journal of Physics A: Mathematical and General}, 32(28):5299,
  1999.

\bibitem{Pr00}
G.P. Pronko.
\newblock {On Baxter's Q-operator for the XXX spin chain}.
\newblock {\em Communications in Mathematical Physics}, 212(3):687--701, 2000.

\bibitem{BHKh02}
V.V. Bazhanov, A.N. Hibberd, and S.M. Khoroshkin.
\newblock Integrable structure of ${W}_3$ conformal field theory, quantum
  {B}oussinesq theory and boundary affine {T}oda theory.
\newblock {\em Nuclear Physics B}, 622(3):475--547, 2002.

\bibitem{Ko08}
T.~Kojima.
\newblock Baxter's {Q}-operator for the {W}-algebra ${W}_{N}$.
\newblock {\em Journal of Physics A: Mathematical and Theoretical},
  41(35):355206, 2008.

\bibitem{HJ12}
D.~Hernandez and M.~Jimbo.
\newblock Asymptotic representations and {D}rinfeld rational fractions.
\newblock {\em Compositio Mathematica}, 148(5):1593--1623, 2012.

\bibitem{BGKNR13}
H.~Boos, F.~G\"{o}hmann, A.~Kl\"{u}mper, K.~Nirov, and A.~Razumov.
\newblock Universal integrability objects.
\newblock {\em Theor. Math. Phys.}, 174(1):21--39, 2013.

\bibitem{FH15}
E.~Frenkel and D.~Hernandez.
\newblock Baxter's {R}elations and {S}pectra of {Q}uantum {I}ntegrable
  {M}odels.
\newblock {\em Duke Math. J.}, 164(12):2407--2460, 2015.

\bibitem{FR98}
E.~Frenkel and N.~Reshetikhin.
\newblock {The q-characters of representations of quantum affine algebras and
  deformations of W-algebras}.
\newblock In {Jing, N. and Misra, K.C.}, editor, {\em {Recent developments in
  quantum affine algebras and related topics: representations of affine and
  quantum affine algebras and their applications, North Carolina State
  University, May 21-24, 1998}}, volume 248. {American Mathematical Soc.},
  1999.
\newblock arXiv preprint math/9810055.

\bibitem{HL16}
D.~Hernandez and B.~Leclerc.
\newblock Cluster algebras and category {O} for representations of {B}orel
  subalgebras of quantum affine algebras.
\newblock {\em Algebra \& Number Theory}, 10(9):2015--2052, 2016.

\bibitem{FJMM17}
B.~Feigin, M.~Jimbo, T.~Miwa, and E.~Mukhin.
\newblock Finite type modules and {B}ethe ansatz equations.
\newblock In {\em Annales Henri Poincar{\'e}}, volume~18, pages 2543--2579.
  Springer, 2017.

\bibitem{He19}
D.~Hernandez.
\newblock Stable maps, {Q}-operators and category {O}.
\newblock {\em arXiv preprint arXiv:1902.02843}, 2019.

\bibitem{Sk88}
E.K. Sklyanin.
\newblock Boundary conditions for integrable quantum systems.
\newblock {\em J. Phys. A: Math. Gen.}, 21:2375--2389, 1988.

\bibitem{Ch84}
I.~Cherednik.
\newblock Factorizing particles on a half-line and root systems.
\newblock {\em Theor. and Math. Phys.}, 61(1):977--983, 1984.

\bibitem{KS92}
P.P. Kulish and E.K. Sklyanin.
\newblock Algebraic structures related to the reflection equations.
\newblock {\em J. Phys. A: Math. Gen.}, 25:5963--5975, 1992.

\bibitem{DM03}
G.~Delius and N.~Mackay.
\newblock Quantum group symmetry in sine-{G}ordon and affine {T}oda field
  theories on the half-line.
\newblock {\em Commun. Math. Phys.}, 233:173--190, 2003.

\bibitem{RV16}
V.~Regelskis and B.~Vlaar.
\newblock Reflection matrices, coideal subalgebras and generalized {S}atake
  diagrams of affine type.
\newblock {\em arXiv preprint arXiv:1602.08471}, 2016.

\bibitem{BW13}
H.~Bao and W.~Wang.
\newblock {A new approach to Kazhdan-Lusztig theory of type B via quantum
  symmetric pairs}.
\newblock {\em arXiv preprint arXiv:1310.0103}, 2013.

\bibitem{BK16}
M.~Balagovi\'{c} and S.~Kolb.
\newblock Universal {K}-matrix for quantum symmetric pairs.
\newblock {\em Journal f\"{u}r die reine und angewandte Mathematik (Crelles
  Journal)}, 2019(747):299--353, 2019.

\bibitem{Ko20}
S.~Kolb.
\newblock Braided module categories via quantum symmetric pairs.
\newblock {\em Proceedings of the London Mathematical Society}, 121(1):1--31,
  2020.

\bibitem{YNZh06}
W.-L. Yang, R.I. Nepomechie, and Y.-Z. Zhang.
\newblock {Q-operator and T--Q relation from the fusion hierarchy}.
\newblock {\em Physics Letters B}, 633(4-5):664--670, 2006.

\bibitem{AMN95}
S.~Artz, L.~Mezincescu, and R.I. Nepomechie.
\newblock {Analytical Bethe ansatz for a $A_{2n-1}^{(2)}$, $B_n^{(1)}$,
  $C_n^{(1)}$, $D_n^{(1)}$ quantum-algebra-invariant open spin chains}.
\newblock {\em Journal of Physics A: Mathematical and General}, 28(18):5131,
  1995.

\bibitem{DKM03}
S.E. Derkachov, G.P. Korchemsky, and A.N. Manashov.
\newblock {Baxter Q-operator and separation of variables for the open
  $\mathrm{SL}(2,\mathbb{R})$ spin chain}.
\newblock {\em Journal of High Energy Physics}, 2003(10):053, 2003.

\bibitem{KMN14}
N.~Kitanine, J.M. Maillet, and G.~Niccoli.
\newblock {Open spin chains with generic integrable boundaries: Baxter equation
  and Bethe ansatz completeness from separation of variables}.
\newblock {\em J. Stat. Mech.: Theory and Experiment}, 2014(5):P05015, 2014.

\bibitem{LP14}
A.~Lazarescu and V.~Pasquier.
\newblock {Bethe Ansatz and Q-operator for the open ASEP}.
\newblock {\em J. Phys. A: Math. Theor.}, 47(29):295202, 2014.

\bibitem{FSz15}
R.~Frassek and I.~Sz\'{e}cs\'{e}nyi.
\newblock Q-operators for the open {H}eisenberg spin chain.
\newblock {\em Nucl. Phys. B}, 901:229--248, 2015.

\bibitem{BTs18}
P.~Baseilhac and Z.~Tsuboi.
\newblock Asymptotic representations of augmented q-{O}nsager algebra and
  boundary {K}-operators related to {B}axter {Q}-operators.
\newblock {\em Nucl. Phys. B}, 929:397--437, 2018.

\bibitem{RW02}
M.~Rossi and R.~Weston.
\newblock A generalized {Q}-operator for ${U}_q(\widehat{\mathfrak{sl}_2})$
  vertex models.
\newblock {\em J. Phys. A: Math. Theor.}, 35(47):10015--10032, 2002.

\bibitem{Ko03}
C.~Korff.
\newblock Auxiliary matrices for the six-vertex model at $q^N= 1$ and a geometric
  interpretation of its symmetries.
\newblock {\em Journal of Physics A: Mathematical and General}, 36(19):5229,
  2003.

\bibitem{JMS13}
M.~Jimbo, T.~Miwa, and F.~Smirnov.
\newblock Fermions acting on quasi-local operators in the {XXZ} model.
\newblock In {\em Symmetries, integrable systems and representations}, pages
  243--261. Springer, 2013.

\bibitem{RSV16}
N.~Reshetikhin, J.~Stokman, and B.~Vlaar.
\newblock Boundary quantum {K}nizhnik--{Z}amolodchikov equations and fusion.
\newblock {\em Annales Henri Poincar{\'e}}, 17(1):137--177, 2016.

\bibitem{MN92}
L.~Mezincescu and R.~Nepomechie.
\newblock Fusion procedure for open chains.
\newblock {\em J. Phys. A: Math. Gen.}, 25:2533--2543, 1992.

\bibitem{Dr85}
V.G. Drinfeld.
\newblock Hopf algebras and the quantum {Y}ang--{B}axter equation.
\newblock In {\em Doklady Akademii Nauk}, volume 283, pages 1060--1064. Russian
  Academy of Sciences, 1985.

\bibitem{Dr86}
V.G. Drinfeld.
\newblock Quantum groups.
\newblock {\em Proc. of the International Congress of Mathematicians,
  Berkeley}, 798, 1986.

\bibitem{Ji86}
M.~Jimbo.
\newblock A q-analogue of ${U}(gl({N}+1))$, {H}ecke algebra, and the
  {Y}ang-{B}axter equation.
\newblock {\em Letters in Mathematical Physics}, 11(3):247--252, 1986.

\bibitem{Ku91}
P.P. Kulish.
\newblock Contraction of quantum algebras and q oscillators.
\newblock {\em Teoreticheskaya i Matematicheskaya Fizika}, 86(1):157--160,
  1991.

\bibitem{KSch12}
A.~Klimyk and K.~Schm{\"u}dgen.
\newblock {\em Quantum groups and their representations}.
\newblock Springer Science \& Business Media, 2012.

\bibitem{KhT92}
S.~Khoroshkin and V.~Tolstoy.
\newblock The uniqueness theorem for the universal {R}-matrix.
\newblock {\em Lett. Math. Phys.}, 24:231--244, 1992.

\bibitem{BGKNR10}
H.~Boos, F.~G\"{o}hmann, A.~Kl\"{u}mper, K.~Nirov, and A.~Razumov.
\newblock Exercises with the universal {R}-matrix.
\newblock {\em J. Phys. A: Math. Theor.}, 43(41):415208, 2010.

\bibitem{EFK98}
P.I. Etingof, I.~Frenkel, and A.A. Kirillov.
\newblock {\em Lectures on representation theory and {K}nizhnik-{Z}amolodchikov
  equations}.
\newblock Number~58. American Mathematical Soc., 1998.

\bibitem{FR92}
I.~Frenkel and N.~Reshetikhin.
\newblock Quantum affine algebras and holonomic difference equations.
\newblock {\em Commun. Math. Phys.}, 146(1):1--60, 1992.

\bibitem{RSV15}
N.~Reshetikhin, J.~Stokman, and B.~Vlaar.
\newblock Boundary quantum {K}nizhnik--{Z}amolodchikov equations and {B}ethe
  vectors.
\newblock {\em Communications in Mathematical Physics}, 336(2):953--986, 2015.

\bibitem{Ko14}
S.~Kolb.
\newblock Quantum symmetric {K}ac-{M}oody pairs.
\newblock {\em Adv. Math.}, 267:395--469, 2014.

\bibitem{BB13}
P.~Baseilhac and S.~Belliard.
\newblock The half-infinite {XXZ} chain in {O}nsager's approach.
\newblock {\em Nucl. Phys}, 873(3):550--584, 2013.

\bibitem{DG02}
G.~Delius and A.~George.
\newblock Quantum affine reflection algebras of type $d_n^{(1)}$ and reflection
  matrices.
\newblock {\em Lett. Math. Phys.}, 62:211--217, 2002.

\bibitem{GR04}
G.~Gasper and M.~Rahman.
\newblock {\em Basic hypergeometric series}, volume~96.
\newblock Cambridge Univ. Press, {Encyclopedia of Mathematics and its
  Applications} edition, 2004.

\bibitem{MN91}
L.~Mezincescu and R.~Nepomechie.
\newblock Integrable open spin chains with nonsymmetric {R}-matrices.
\newblock {\em J. Phys. A: Math. Gen.}, 24:L17, 1991.

\bibitem{FShHY97}
H.~Fan, K.-J. Shi, B.-Y. Hou, and Z.-X. Yang.
\newblock Integrable boundary conditions associated with the ${Z}_n \times
  {Z}_n$ {B}elavin model and solutions of the reflection equation.
\newblock {\em Int. J. Mod. Phys. A}, 12(16):2809--2823, 1997.

\bibitem{Vl15}
B.~Vlaar.
\newblock Boundary transfer matrices and boundary quantum {KZ} equations.
\newblock {\em J. Math. Phys.}, 56, 2015.

\bibitem{NR18}
K.S. Nirov and A.V. Razumov.
\newblock Vertex models and spin chains in formulas and pictures.
\newblock {\em SIGMA Symmetry Integrability Geom. Methods Appl.}, 15:Paper No.
  068, 67, 2019.

\bibitem{BJMST09}
H.~Boos, M.~Jimbo, T.~Miwa, F.~Smirnov, and Y.~Takeyama.
\newblock Hidden {G}rassmann structure in the {XXZ} model {II}: creation
  operators.
\newblock {\em Commun. Math. Phys.}, 286(3):875--932, 2009.

\bibitem{La02}
S.~Lang.
\newblock {\em Graduate Texts in Mathematics: Algebra}.
\newblock Springer, 2002.

\bibitem{Re10}
N.~Reshetikhin.
\newblock Lectures on the integrability of the six-vertex model.
\newblock In {\em Exact methods in low-dimensional statistical physics and
  quantum computing}, pages 197--266. Oxford Univ. Press, Oxford, 2010.

\bibitem{Ka82}
T.~Kato.
\newblock Perturbation theory in a finite-dimensional space.
\newblock In {\em {A Short Introduction to Perturbation Theory for Linear
  Operators}}, pages 72--148. Springer, New York, 1982.

\bibitem{De05}
S.E. Derkachov.
\newblock {Factorization of R-matrix and Baxter's Q-operator}.
\newblock {\em arXiv preprint math/0507252}, 2005.

\bibitem{DKK06}
S.E. Derkachov, D.~Karakhanyan, and R.~Kirschner.
\newblock {Baxter Q-operators of the XXZ chain and R-matrix factorization}.
\newblock {\em Nucl. Phys. B}, 738(3):368--390, 2006.

\bibitem{De07}
S.E. Derkachov.
\newblock Factorization of the {R}-matrix. {I}.
\newblock {\em J. of Mathematical Sciences}, 143(1):2773--2790, 2007.

\bibitem{BLMS10}
V.V. Bazhanov, T.~{\L}ukowski, C.~Meneghelli, and M.~Staudacher.
\newblock A shortcut to the {Q}-operator.
\newblock {\em Journal of Statistical Mechanics: Theory and Experiment},
  2010(11):P11002, 2010.

\bibitem{KhTs14}
S.~Khoroshkin and Z.~Tsuboi.
\newblock The universal {R}-matrix and factorization of the {L}-operators
  related to the {B}axter {Q}-operators.
\newblock {\em J. Phys. A: Math. Theor.}, 47(19):192003, 2014.

\bibitem{DM06}
S.E. Derkachov and A.N. Manashov.
\newblock {Factorization of the transfer matrices for the quantum
  $\mathfrak{sl}(2)$ spin chains and Baxter equation}.
\newblock {\em J. Phys. A: Math. Gen.}, 39(16):4147, 2006.

\bibitem{DVGR94}
H.J. De~Vega and A.~Gonz{\'a}lez-Ruiz.
\newblock {Boundary K-matrices for the XYZ, XXZ and XXX spin chains}.
\newblock {\em Journal of Physics A: Mathematical and General}, 27(18):6129,
  1994.

\bibitem{He47}
E.~Heine.
\newblock Untersuchungen \"{u}ber die {R}eihe $1+\ldots x_2+\ldots$.
\newblock {\em {J. reine angew. Math.}}, 34:285--328, 1847.

\bibitem{LIL14}
I.~Lukyanenko, P.S. Isaac, and J.~Links.
\newblock {On the boundaries of quantum integrability for the spin-1/2
  Richardson--Gaudin system}.
\newblock {\em Nucl. Phys. B}, 886:364--398, 2014.

\end{thebibliography}
\end{document}